\newcommand{\diag}[1]{\mathrm{diag}\left({#1}\right)}
\newcommand\U[1]{\mathrm{U}{(#1)}}
\newcommand\SU[1]{\mathrm{SU}{(#1)}}
\DeclareMathOperator{\tr}{tr}
\DeclareMathOperator{\Ad}{Ad}
\DeclareMathOperator{\rank}{rank}
\DeclareMathOperator{\Span}{span}
\renewcommand{\proofname}{\bf proof}
\numberwithin{equation}{section}
\numberwithin{theorem}{section}
\title{
  {\bf 
    Bipartite entanglement and control in 
    multiqubit systems\footnote{
      {\bf Key words}:
      bipartite entanglement, control, 
      local transformation group, NMR system.
    }
  }
}
\author{
  Toshihiro IWAI\footnote{iwai[[at mark]]amp.i.kyoto-u.ac.jp}
  \,\, and \,\, 
  Yoshiro YABU\footnote{yoshiro[[at mark]]amp.i.kyoto-u.ac.jp}\\
  {\normalsize Department of Applied Mathematics and Physics}\\
  {\normalsize Kyoto University, Kyoto-606-8501, Japan}
}
\date{}
\newtheorem{Thm}{Theorem}[section]
\newtheorem{Prop}[Thm]{Proposition}
\newtheorem{Lem}[Thm]{Lemma}
\newtheorem{Cor}[Thm]{Corollary}
\newtheorem*{Rem}{Remark}
\newcommand{\Lie}{\mathrm{Lie}}
\begin{document}
\baselineskip=7mm plus 0.2mm minus 0.3mm

\maketitle

\begin{abstract}
In this paper, the effect of the control on bipartite entanglement 
is discussed from a geometric viewpoint 
for a nuclear magnetic resonance (NMR) system as 
a model of the $n$-qubit control system. 
The Hamiltonian of the model is the sum of the drift and control Hamiltonians, 
each of which describes the interaction between pairs of 
qubits (or $\frac{1}{2}$-spins) and 
between one of qubits and an external magnetic field, respectively. 
According to the bipartite partition 
$(\mathbb{C}^{2})^{\otimes n} = 
(\mathbb{C}^{2})^{\otimes\ell}\otimes(\mathbb{C}^{2})^{\otimes m}$ 
with $\ell+m = n$, 
the Schr{\"o}dinger equation for the NMR system is put in the matrix form. 
This paper gives a solution to the Schr{\"o}dinger equation 
with the assumptions of small coupling among qubits and of constant controls. 
The solution is put in the form of power series in small parameters. 
In particular, in the case of the two-qubit NMR system,  
the drift and control Hamiltonians are shown to be coupled to work for 
entanglement promotion, 
by examining solutions to the Schr\"odinger equation in detail. 
The concurrence, a measure of entanglement, is evaluated along the solution 
for a small time interval in order to observe that 
the control effect appears, not at the first-order terms in $t$, 
but at the higher-order terms in $t$. 
The evaluated concurrence also suggests which control makes the two-qubit 
more entangled or less. 
\end{abstract}

\newpage
\section{Introduction}
Entanglement plays many roles in quantum computation and 
quantum information theory, and 
is a resource for quantum communication. 
Recently, the geometry of entanglement has been developed 
\cite{Ema04,Iw07-1,Iw07-2,IHM08,Le04,Lev05,Le05,Le06,MD01}. 
The algebraic study of entanglement also has been made to seek for 
invariants that quantify multiqubit entanglement 
\cite{BLT03,GRB,Sud}. 
The geometry of entanglement in few-qubit systems was studied 
in \cite{MD01, Le04} and in \cite{Le05,Le06} in terms of the Hopf fibration 
and the complex projective spaces, respectively. 
The first author \cite{Iw07-1} showed that 
the concurrence for the two-qubit system 
can be characterized as the coordinate of the factor space of 
the state space by the action of the local transformation group. 
He continued to study bipartite entanglement in 
multiqubit systems \cite{Iw07-2}. 
In particular, the sets of separable and maximally entangled states in 
bipartite entanglement are identified, 
and the concurrence is extended to a measure of 
bipartite entanglement in multiqubit systems. 
Further, the distance between the set of states of prescribed entanglement and 
the set of separable states is measured. 
The bipartite entanglement is measured also for 
the process of Grover's search algorithm \cite{IHM08}. 
For the bipartite entanglement, see also \cite{Ema04,Lev05}. 

On the other hand, from a practical standpoint, 
it is of much importance to control quantum states 
in order to create entangled states. 
Many of papers on control problems for multiqubit systems 
deal with control systems on unitary groups 
\cite{AD'A02,DA00,AD'A03,D'A01,Alt02,KBG01,KG01,KGB02,RSD95}. 
The theory of control on compact Lie groups has been established \cite{JS72}. 
A sufficient condition for controllability 
of the control system on $\SU{N}$ was given in \cite{Alt02}. 
The Cartan decomposition of the Lie algebra and the symmetric space 
are techniques frequently used for studying control systems 
\cite{KBG01,KG01,KGB02}. 

Two notions as to controllability of multiqubit systems were 
defined and studied by Albertini and D'Alessandro \cite{AD'A02, AD'A03}, 
which are ``pure state controllability'' and 
``equivalent state controllability''. 
In \cite{AD'A03}, equivalent state controllability was shown 
to be equivalent to pure state controllability. 
In the case of multiqubit systems lying in electro-magnetic fields, 
they \cite{AD'A02} gave a necessary and sufficient condition for 
the system to be (pure state) controllable. 
The condition is described in terms of the graph associated 
with the interactions of qubits, 
which is related with the Lie algebraic structure of the system. 
Their method is based on the control theory on compact Lie groups 
\cite{JS72, D'A01}. 
As will be pointed out in Appendix B, 
their model Hamiltonian is a bit different from that adopted in this article. 

The aim of this article is to study entanglement and control of 
nuclear magnetic resonance (NMR) multiqubit systems 
from a geometric viewpoint. 
Since the notion of entanglement cannot be defined in terms of 
the whole transformation group, 
the control theory on compact Lie groups will not fit in with the study 
of entanglement and control. 
The state space of a finite-dimensional quantum system is a finite 
complex vector space with a constraint required for the probability density. 
In this article, the state space is a linear space of complex matrices 
with the constraint from the probability density. 
The reason for the choice of the present state space is that 
the entanglement measure like the concurrence is easy to treat 
on this state space. 
The state space admits the two-sided action of the local transformation group 
$\U{2^{\ell}}\otimes \U{2^m}$, where $m$ and $\ell$ are determined by 
the bipartite partition of the multiqubit 
$(\mathbb{C}^{2})^{\otimes n} = 
(\mathbb{C}^{2})^{\otimes\ell}\otimes(\mathbb{C}^{2})^{\otimes m}$. 
As will be stated later, the entanglement of the multiqubit system 
is invariant by the action of the local transformation group. 
Hence, in order to make the multiqubit more entangled, the control 
is required to move the state in the direction transverse to the 
orbit of the local transformation group. 
However, for NMR systems to be treated in this article, 
the control vectors generated by the control Hamiltonian are tangent 
to the orbit mentioned above, so that they would not make the system 
more entangled, if no coupling were made with other vector fields. 
Since the total Hamiltonian of the NMR system is the sum of the control and 
drift Hamiltonians, one may expect the coupling between them to occur. 
The objectives in this article are to set up the control problem 
on the above-mentioned state space and to show, by solving the 
Shr{\"o}dinger equation with constant controls, that the control and 
drift Hamiltonian are coupled together to give rise 
to vector fields transverse to the group orbit for 
the entanglement promotion of the multiqubit system.

The organization of this paper is as follows: 
Section 2 contains a geometric setting up of $n$-qubit systems. 
According to the bipartite partition
$(\mathbb{C}^{2})^{\otimes n} = 
(\mathbb{C}^{2})^{\otimes\ell}\otimes(\mathbb{C}^{2})^{\otimes m}$ 
with $\ell + m = n$, 
the state space $M$ is viewed as the space of $2^{\ell}\times 2^{m}$ 
normalized matrices $C$, 
on which the local transformation group $G=\U{2^{\ell}}\otimes\U{2^{m}}$ acts. 
The action of $G$ leaves invariant the bipartite entanglement 
of the $n$-qubit. 
Linear operators on the $n$-qubit system are mapped to vector 
fields on the state space $M$. 
Lie brackets among those vector fields are given explicitly. 
In Section 3, the Hamiltonian of an NMR model is given, 
which is split into two operators, 
the drift Hamiltonian $\hat{H}_{d}$ and 
control Hamiltonian $\hat{H}_{c}$. 
The $\hat{H}_{d}$ and $\hat{H}_{c}$ describe 
the interactions between pairs of $n$ qubits and 
between one of qubits and an external magnetic field, respectively. 
In addition, a solution to $n$-qubit NMR systems with a certain type of 
drift Hamiltonians is given with the assumption that the 
coupling constants between qubits are small enough. 
The solution is put in the form of power series in small parameters 
related to the coupling constants. 
It will be shown that the vector fields associated with 
the drift and control Hamiltonians are coupled actually. 
In Section 4, the coupling between those vector fields is extensively studied 
on the two-qubit system to observe that entanglement is actually promoted. 
Discussion runs as follows: 
A solution $C(t)$ of the Schr\"odinger equation in the matrix form is 
obtained as a power series in the small coupling constant. 
The solution is examined in detail to show that the vector fields 
associated with the control and drift Hamiltonians are coupled indeed 
to induce vector fields transverse to the $G$-orbit. 
The effect of the control on entanglement is verified by estimating 
the difference between the concurrences of the solutions for 
the controlled and uncontrolled systems for a small time interval 
in the case that the initial state is a diagonal matrix 
$\Lambda = \diag{\lambda_{1}, \lambda_{2}}$. 
The result is that the effect of the control is so slow that 
it emerges at the third-order term in $t$. 
Concluding remarks are made in Section 5, in which 
the Cartan decomposition method is related to the present method. 
Appendix A gives a list of Lie brackets among vector fields 
related with the drift and control Hamiltonians for  
the two-qubit NMR system. 
Appendix B contains the proof of the controllability of 
the $n$-qubit NMR system. 
The method for proof is different from that in \cite{AD'A02}.

\section{Geometric setting for bipartite entanglement}
\label{Geometry of bipartite entanglement}

\subsection{The state space}
We start with geometric setting for the $n$-qubit system 
after \cite{Iw07-2}. 
The Hilbert space for the $n$-qubit system is 
$(\mathbb{C}^{2})^{\otimes n} = 
\underbrace{
  \mathbb{C}^{2}\otimes\cdots\otimes\mathbb{C}^{2}
}_{n\text{ times}}$, 
whose state vectors $|\Psi\rangle$ are put in the form 
\begin{equation*}
  |\Psi\rangle 
  = 
  \sum_{J=(j_{1},\cdots,j_{n})\in\{0,1\}^{n}} 
  c_{J}|j_{1}\rangle \otimes\cdots\otimes |j_{n}\rangle 
  \quad \text{ with } \quad 
  \sum_{J} |c_{J}|^{2} = 1, 
\end{equation*}
where $\{ |0\rangle, |1\rangle \}$ denotes the computational basis of 
a single qubit system $\mathbb{C}^{2}$. 
We denote by $\{ |J\rangle \}_{J\in\{0,1\}^{n}}$ 
the standard orthonormal basis 
$|j_{1}\rangle \otimes\cdots\otimes |j_{n}\rangle$ 
of $(\mathbb{C}^{2})^{\otimes n}$. 
The Hermitian inner product of 
$|\Psi\rangle = \sum_{J}c_{J}|J\rangle$ and 
$|\Phi\rangle = \sum_{J} d_{J}|J\rangle$ is then given by 
$\langle \Psi | \Phi \rangle = \sum_{J} \overline{c_{J}}d_{J}$. 

According to a bipartite partition of the $n$ qubits, 
the Hilbert space $(\mathbb{C}^{2})^{\otimes n}$ is decomposed into the 
tensor product 
$(\mathbb{C}^{2})^{\otimes \ell}\otimes (\mathbb{C}^{2})^{\otimes m}$ 
with $\ell+m=n$. 
We here assume that $0< \ell \leq m$ without loss of generality. 
A state vector $|\Psi\rangle$ of the $n$-qubit system is 
separable with respect to the bipartite decomposition 
$(\mathbb{C}^{2})^{\otimes n} = 
(\mathbb{C}^{2})^{\otimes\ell}\otimes(\mathbb{C}^{2})^{\otimes m}$, 
if there are two unit vectors 
$|\Psi_{1}\rangle \in (\mathbb{C}^{2})^{\otimes\ell}$ and 
$|\Psi_{2}\rangle \in (\mathbb{C}^{2})^{\otimes m}$ such that 
$|\Psi\rangle = |\Psi_{1}\rangle \otimes |\Psi_{2}\rangle$. 
A vector $|\Psi\rangle$ is said to be entangled if it is not separable. 

According to the bipartite decomposition 
$(\mathbb{C}^{2})^{\otimes n} = 
(\mathbb{C}^{2})^{\otimes\ell}\otimes(\mathbb{C}^{2})^{\otimes m}$, 
a state vector $|\Psi\rangle$ can be rewritten in the form 
\begin{equation*}
  |\Psi\rangle 
  = 
  \sum_{A\in\{0,1\}^{\ell}}\sum_{B\in\{0,1\}^{m}} 
  c_{AB}|A\rangle \otimes |B\rangle, 
\end{equation*}
where $A=(j_{1},\cdots,j_{\ell})$ and $B=(j_{\ell+1},\cdots,j_{n})$. 
This expression gives rise to the isomorphism of 
$(\mathbb{C}^{2})^{\otimes n}$ to 
the vector space $\mathbb{C}^{2^{\ell}\times 2^{m}}$ of 
$2^{\ell}\times 2^{m}$ complex matrices, 
\begin{equation}
  \iota \,:\, 
  (\mathbb{C}^{2})^{\otimes n} 
  \longrightarrow 
  \mathbb{C}^{2^{\ell}\times 2^{m}} 
  \, ; \, 
  |\Psi\rangle = \sum_{A, B}c_{AB}|A\rangle\otimes|B\rangle 
  \longmapsto 
  C =  (c_{AB}). 
  \label{iota}
\end{equation}
The $\mathbb{C}^{2^{\ell}\times 2^{m}}$ is endowed with 
the inner product given by $\tr(C_{1}^{\ast} C_{2})$ 
for  $C_{1}, C_{2} \in \mathbb{C}^{2^{\ell}\times 2^{m}}$. 

Since $|\Psi\rangle$ is normalized, 
the corresponding matrix $C=\iota(|\Psi\rangle)$ is subject to 
the condition $\tr(C^{\ast} C) = 1$. 
Thus, the state space for the $n$-qubit system is defined to be 
\begin{equation*}
  M 
  := 
  \{ C \in \mathbb{C}^{2^{\ell}\times 2^{m}} \,|\, \tr(C^{\ast}C) = 1 \}, 
\end{equation*}
which is diffeomorphic with the $(2^{n+1}-1)$-dimensional sphere. 
The state space $M$ becomes a Riemannian manifold equipped with 
the standard metric 
\begin{equation}
  \langle X, Y \rangle_{C} 
  := 
  \frac{1}{2}\tr( X^{\ast}Y + Y^{\ast}X ), 
  \quad 
  X, Y \in T_{C}M, 
  \label{metric}
\end{equation}
where the tangent space to $M$ at $C$ is identified with 
\begin{equation*}
  T_{C}M 
  = 
  \{ 
    X \in \mathbb{C}^{2^{\ell}\times 2^{m}} 
  \,|\, 
    \tr(X^{\ast}C + C^{\ast}X) = 0
  \}.
\end{equation*}

\subsection{Bipartite entanglement}
In this subsection, we make a review of a measure of 
bipartite entanglement introduced by the first author 
\cite{Iw07-1, Iw07-2}. 

The separability condition of a state vector $|\Psi\rangle$ can be 
expressed as $\rank\, CC^{\ast} =1$, 
where $C$ is the matrix corresponding to $|\Psi\rangle$. 
Since the $2^{\ell}\times 2^{\ell}$ matrix $CC^{\ast}$ is 
positive semi-definite together with $\tr(CC^{\ast})=1$ and $\ell\leq m$, 
the condition of $\rank\, CC^{\ast} = 1$ is equivalent to 
$\det ( I_{2^{\ell}} - CC^{\ast} ) = 0$, 
where $I_{2^{\ell}}$ denotes 
the $2^{\ell}\times 2^{\ell}$ identity matrix. 
The first author \cite{Iw07-1, Iw07-2} showed that the quantity 
\begin{equation} 
  F(C) := \det(I_{2^{\ell}}-CC^{\ast}) 
  \label{measurement}
\end{equation} 
serves as a measure of bipartite entanglement, and 
as an extension of the concurrence which is well-known as a measure of 
entanglement in two-qubit system. 
In fact, one has $F(C)=\det(CC^*)$, the square of the concurrence, 
for the two-qubit with $\ell=m=1$. 
The quantity $F(C)$ is non-negative, and vanishes if and only if $C$ is separable. 
Further, the $F$ takes the maximal value 
$(1-1/2^{\ell})^{2^{\ell}}$ 
if and only if all eigenvalues of $CC^{\ast}$ coincide. 
If $F$ attains the maximal value at $C$, 
the matrix $C\in M$ or the corresponding state vector $|\Psi\rangle$ is 
said to be maximally entangled with respect to the bipartite decomposition 
$(\mathbb{C}^{2})^{\otimes n} = 
(\mathbb{C}^{2})^{\otimes\ell}\otimes(\mathbb{C}^{2})^{\otimes m}$. 

\subsection{$\U{2^{\ell}}\otimes\U{2^{m}}$-action}
The action of the group 
$G=\U{2^{\ell}}\otimes\U{2^{m}}$ on 
$M \subset \mathbb{C}^{2^{\ell}\times 2^{m}}$ is defined by 
\begin{equation}
  C \longmapsto g C h^{T}, 
  \label{G-action}
\end{equation}
where $g\in\U{2^{\ell}}$ and $h\in\U{2^{m}}$. 
This action is isometric with respect to \eqref{metric}. 
We remark here that the map $\iota$ is $G$-equivariant, 
namely, $\iota^{-1}(gCh^{T}) = (g\otimes h) \iota^{-1}(C)$. 
As is easily verified, the function $F$ is invariant under 
the $G$-action, so that 
the $G$-action \eqref{G-action} does not change bipartite entanglement. 

We now consider the quotient space $M/G$. 
An arbitrary matrix $C\in M$ is decomposed into the product 
\begin{equation}
  C = g (\Lambda, 0) h^{T}, 
  \quad 
  g \in \U{2^{\ell}}, \, h \in \U{2^{m}}, \, 
  \Lambda=\diag{\lambda_{1},\cdots,\lambda_{2^{\ell}}},  
  \label{singular value decomposition}
\end{equation}
where $0$ is the $2^{\ell}\times(2^{m}-2^{\ell})$ zero matrix, and 
$\lambda_{1}^{2}, \cdots, \lambda_{2^{\ell}}^{2}$ are 
the eigenvalues of $CC^{\ast}$. 
In the following, we assume that 
$\lambda_{1} \geq \cdots \geq \lambda_{2^{\ell}} \geq 0$.  
The quantities $\lambda_{1}, \cdots, \lambda_{2^{\ell}}$ are called 
the singular values of $C$. 
The expression \eqref{singular value decomposition} is equivalent to 
the Schmidt decomposition of a state vector $|\Psi\rangle$. 
From this decomposition, the quotient space $M/G$ can be identified with 
\begin{equation*}
  \left\{
    (\lambda_{1}, \cdots, \lambda_{2^{\ell}}) \in \mathbb{R}^{2^{\ell}} 
  \, \left| \, 
    \lambda_{1} \geq \cdots \geq \lambda_{2^{\ell}} \geq 0, \,\, 
    \lambda_{1}^{2} + \cdots + \lambda_{2^{\ell}}^{2} = 1 
  \right.\right\}. 
\end{equation*}
The sets of separable and maximally entangled states project through 
the natural projection 
$M \to M/G \,;\, C \mapsto (\lambda_{1}, \cdots, \lambda_{2^{\ell}})$ to 
the points $(1,0,\cdots,0)$ and 
$(1/\sqrt{2^{\ell}}, \cdots,
1/\sqrt{2^{\ell}})\in\mathbb{R}^{2^{\ell}}$, 
respectively, both of which lie on the boundary of the quotient $M/G$. 

The state space $M$ is stratified into strata, 
according to $G$-orbit types \cite{Iw07-2}, 
which are determined by the isotropy subgroups 
$G_{C} := \{ g\otimes h \in G \,|\, gCh^{T} = C \}$. 
The isotropy subgroup and the orbits are already studied for 
four-qubit systems in detail, and for multiqubit systems in brief \cite{Iw07-2}.
We now summarize the results in a refined form. 

\begin{Prop}
\label{Prop: isotropy subgrp and orbit}\quad 
Let $N$ be the number of distinct singular values of $C$, and 
$m_{j}$ the multiplicity of its $j$-th largest singular value.

\noindent 
(1) If $CC^{\ast}$ is non-singular, 
the isotropy subgroup $G_{C}$ is isomorphic to 
\begin{equation}
  \big(\U{m_{1}}\times\cdots\times\U{m_{N}}\big) \otimes \U{2^{m}-2^{\ell}}, 
  \label{isotropy subgrp 1}
\end{equation}
where the product $\U{m_{1}}\times\cdots\times\U{m_{N}}$ is 
viewed as a subgroup of $\U{2^{\ell}}$, 
and the $G$-orbit $\mathcal{O}_{C}:=\{(g\otimes h)C | g\otimes h \in G\}$ 
is diffeomorphic to 
\begin{equation}
  G/G_{C} 
   \approx 
  \U{2^{\ell}} 
  \times _{(\U{m_{1}}\times\cdots\times\U{m_{N}})} 
  V_{2^{\ell}}(\mathbb{C}^{2^{m}}), 
  \label{G-orbit 1}
\end{equation}
where $V_{2^{\ell}}(\mathbb{C}^{2^{m}})$ denotes the Stiefel manifold of 
orthonormal $2^{\ell}$ frames in $\mathbb{C}^{2^{m}}$.
\footnote{The symbols $\cong$ and $\approx$ are used to denote the 
isomorphism and the diffeomorphism, respectively.} 

(1-1) In the generic case that singular values of $C$ are all distinct 
and non-zero, $G_{C}$ and $\mathcal{O}_{C}$ take, respectively, the form 
\begin{equation*}
  G_{C} \cong T^{2^{\ell}}\otimes\U{2^{m}-2^{\ell}} 
  \,\,\text{ and }\,\, 
  \mathcal{O}_{C} 
  \approx 
  \U{2^{\ell}}\times_{T^{2^{\ell}}} V_{2^{\ell}}(\mathbb{C}^{2^{m}}). 
\end{equation*}

(1-2) For a maximally entangled state $C$, 
$G_{C}$ and $\mathcal{O}_{C}$ become, respectively, 
\begin{equation*}
  G_{C} \cong \U{2^{\ell}}\otimes\U{2^{m}-2^{\ell}} 
  \,\,\text{ and }\,\,
  \mathcal{O}_{C} \approx V_{2^{\ell}}(\mathbb{C}^{2^{m}}). 
\end{equation*}

\noindent 
(2) If $CC^{\ast}$ is singular, the group $G_{C}$ is isomorphic to 
\begin{equation}
  \big(\U{m_{1}} \times \cdots \times \U{m_{N-1}} \times \U{m_{N}} \big)
  \otimes
  \U{2^{m}-2^{\ell}+m_{N}},  
  \label{isotropy subgrp 2}
\end{equation}
and the orbit $\mathcal{O}_{C}$ is diffeomorphic to 
\begin{equation}
  G/G_{C} 
   \approx 
  V_{2^{\ell}-m_{N}}(\mathbb{C}^{2^{\ell}}) 
  \times _{(\U{m_{1}}\times\cdots\times\U{m_{N-1}})} 
  V_{2^{\ell}-m_{N}}(\mathbb{C}^{2^{m}}).
  \label{G-orbit 2}
\end{equation}

(2-1) In particular, if $C$ is separable, 
$G_{C}$ and $\mathcal{O}_{C}$ are, respectively, of the form 
\begin{equation*}
  G_{C} 
  \cong \U{2^{\ell}-1}\times\U{2^{m}-1}, 
  \,\,\text{ and }\,\, 
  \mathcal{O}_{C} \approx S^{2^{\ell+1}-1}\times_{\U{1}} S^{2^{m+1}-1}.
\end{equation*}
\end{Prop}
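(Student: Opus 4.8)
\noindent\emph{Proof idea.}\quad The plan is to reduce to a diagonal representative, solve the fixed-point equation $g(\Lambda,0)h^{T}=(\Lambda,0)$ by block computations, and extract the orbit from a factorization of that representative. Since $\mathcal{O}_{gCh^{T}}=\mathcal{O}_{C}$ and $G_{gCh^{T}}$ is conjugate to $G_{C}$, the singular value decomposition \eqref{singular value decomposition} lets us assume $C=(\Lambda,0)$ with $\Lambda=\diag{\mu_{1}I_{m_{1}},\ldots,\mu_{N}I_{m_{N}}}$, $\mu_{1}>\cdots>\mu_{N}\ge 0$ and $\sum_{j}m_{j}\mu_{j}^{2}=1$; then $G_{C}=\{g\otimes h\mid g(\Lambda,0)h^{T}=(\Lambda,0)\}$. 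The one elementary fact used throughout is: if $\Lambda_{0}$ is a positive-definite diagonal matrix and $u,v$ are unitary with $u\Lambda_{0}v=\Lambda_{0}$, then, writing $v=\Lambda_{0}^{-1}u^{-1}\Lambda_{0}$ and imposing $v^{\ast}v=I$, one finds that $u$ commutes with $\Lambda_{0}^{2}$ (hence with $\Lambda_{0}$) and $v=u^{-1}$.

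\emph{Case (1): $CC^{\ast}$ non-singular.} Block $h^{T}=\bigl(\begin{smallmatrix}P&Q\\R&S\end{smallmatrix}\bigr)$ with $P$ of size $2^{\ell}\times2^{\ell}$. Then $g(\Lambda,0)h^{T}=(g\Lambda P,\,g\Lambda Q)$, so the fixed-point equation reads $g\Lambda P=\Lambda$, $g\Lambda Q=0$; invertibility of $\Lambda$ forces $Q=0$, and then unitarity of $h^{T}$ forces $R=0$ and makes $P\in\U{2^{\ell}}$, $S\in\U{2^{m}-2^{\ell}}$. Applying the elementary fact to $g\Lambda P=\Lambda$ gives $g\in\U{m_{1}}\times\cdots\times\U{m_{N}}\subset\U{2^{\ell}}$ (commuting with $\Lambda$), $P=g^{-1}$, and $S$ free; together with the $\U 1$ that acts trivially on $M$ (namely $zI\otimes z^{-1}I$) this yields \eqref{isotropy subgrp 1}. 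For the orbit, every point of $\mathcal{O}_{C}$ equals $g\Lambda F$, where $F$, the first $2^{\ell}$ rows of $h^{T}$, ranges over all of $V_{2^{\ell}}(\mathbb{C}^{2^{m}})$ (an orthonormal $2^{\ell}$-frame extends to a unitary matrix). If $g\Lambda F=g'\Lambda F'$, then passing from each side $X$ to $XX^{\ast}$ and using $FF^{\ast}=I$ gives $g\Lambda^{2}g^{\ast}=g'\Lambda^{2}g'^{\ast}$, hence $k:=g^{-1}g'$ commutes with $\Lambda$, i.e.\ $k\in K:=\U{m_{1}}\times\cdots\times\U{m_{N}}$, and then $F'=k^{-1}F$. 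Thus $(g,F)\mapsto g\Lambda F$ descends to a smooth $G$-equivariant bijection $\U{2^{\ell}}\times_{K}V_{2^{\ell}}(\mathbb{C}^{2^{m}})\to\mathcal{O}_{C}$ of homogeneous $G$-spaces, hence a diffeomorphism, which is \eqref{G-orbit 1}. Items (1-1) and (1-2) are the specializations $N=2^{\ell}$ (all $m_{j}=1$, $K=T^{2^{\ell}}$) and $N=1$ ($K=\U{2^{\ell}}$, so the associated bundle collapses to its fibre $V_{2^{\ell}}(\mathbb{C}^{2^{m}})$).

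\emph{Case (2): $CC^{\ast}$ singular.} Here $\mu_{N}=0$; set $r:=2^{\ell}-m_{N}$, write $\Lambda=\diag{\tilde\Lambda,0_{m_{N}}}$ with $\tilde\Lambda$ positive definite, and factor $(\Lambda,0)=P_{0}\tilde\Lambda Q_{0}$ with $P_{0}=\bigl(\begin{smallmatrix}I_{r}\\0\end{smallmatrix}\bigr)$, $Q_{0}=(I_{r}\ \,0)$. Blocking $g$ and $h^{T}$ with leading $r\times r$ blocks, the same reasoning (the invertible block $\tilde\Lambda$ now annihilates two off-diagonal blocks) gives $g=\diag{g_{11},g_{22}}$ with $g_{11}\in\U{m_{1}}\times\cdots\times\U{m_{N-1}}\subset\U{r}$ (again by the elementary fact) and $g_{22}\in\U{m_{N}}$ free, and $h^{T}=\diag{h_{11},h_{22}}$ with $h_{11}=g_{11}^{-1}$ and $h_{22}\in\U{2^{m}-2^{\ell}+m_{N}}$ free; modulo the trivial $\U 1$ this is \eqref{isotropy subgrp 2}. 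Writing a general point of $\mathcal{O}_{C}$ as $U\tilde\Lambda W$, with $U=gP_{0}$ ranging over $V_{r}(\mathbb{C}^{2^{\ell}})$ and $W=Q_{0}h^{T}$ over $V_{r}(\mathbb{C}^{2^{m}})$, the same passage to $XX^{\ast}$ gives $U\tilde\Lambda W=U'\tilde\Lambda W'$ iff $(U',W')=(Uk,k^{-1}W)$ for $k\in K':=\U{m_{1}}\times\cdots\times\U{m_{N-1}}$, so $\mathcal{O}_{C}\approx V_{r}(\mathbb{C}^{2^{\ell}})\times_{K'}V_{r}(\mathbb{C}^{2^{m}})$, which is \eqref{G-orbit 2}. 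Item (2-1) is the case $N=2$, $m_{1}=1$, $m_{N}=2^{\ell}-1$, $r=1$, where $K'=\U 1$, $V_{1}(\mathbb{C}^{k})=S^{2k-1}$, and the spurious $\U 1$ in the raw isotropy merges with $\U{m_{N}}=\U{2^{\ell}-1}$ to leave $\U{2^{\ell}-1}\times\U{2^{m}-1}$.

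\emph{Main obstacle.} None of this is deep, but three bookkeeping points need care. First, one must verify that unitarity of $g$ and of $h^{T}$, with one off-diagonal block forced to vanish, really forces the complementary block to vanish and the diagonal blocks to be unitary — two such deductions occur in Case (2). Second, the global $\U 1=\{zI\otimes z^{-1}I\}$ acting trivially on $M$ must be tracked consistently, since it is what makes the $\otimes$-subgroups in \eqref{isotropy subgrp 1} and \eqref{isotropy subgrp 2}, and the absence of an extra $\U 1$ factor in the separable case, come out exactly as stated. Third, the set-theoretic bijections onto $\mathcal{O}_{C}$ must be upgraded to diffeomorphisms; because $G$ is compact this follows from smoothness, $G$-equivariance and equality of dimensions, but it should be remarked. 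The step I expect to require the most care is the orbit identification — recognizing the quotient of $\U{2^{\ell}}\times V_{2^{\ell}}(\mathbb{C}^{2^{m}})$, respectively $V_{r}(\mathbb{C}^{2^{\ell}})\times V_{r}(\mathbb{C}^{2^{m}})$, by the computed equivalence relation as precisely the associated bundle in \eqref{G-orbit 1}, respectively \eqref{G-orbit 2}.
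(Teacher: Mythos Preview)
The paper does not actually contain a proof of this proposition: it states the result as a ``refined summary'' of material from the first author's earlier paper \cite{Iw07-2} and moves on directly to the next subsection. So there is no in-text argument to compare against.

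Your approach is correct and is the natural one for this kind of statement: reduce to the diagonal representative $(\Lambda,0)$ via the singular value decomposition, solve the fixed-point equation $g(\Lambda,0)h^{T}=(\Lambda,0)$ by blocking $g$ and $h^{T}$ compatibly with the partition $2^{m}=2^{\ell}+(2^{m}-2^{\ell})$ (and, in the singular case, the further partition $2^{\ell}=r+m_{N}$), use the elementary fact that $u\Lambda_{0}v=\Lambda_{0}$ with $u,v$ unitary and $\Lambda_{0}$ positive definite forces $u$ to commute with $\Lambda_{0}$ and $v=u^{-1}$, and then read off the orbit as an associated bundle by computing the fibres of the map $(g,F)\mapsto g\Lambda F$ (respectively $(U,W)\mapsto U\tilde\Lambda W$) via the identity $XX^{\ast}=g\Lambda^{2}g^{\ast}$. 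The three bookkeeping points you flag --- the propagation of block-vanishing through unitarity, the global $\U 1$ that must be tracked when passing from $\U{2^{\ell}}\times\U{2^{m}}$ to $G=\U{2^{\ell}}\otimes\U{2^{m}}$, and the upgrade from $G$-equivariant bijection to diffeomorphism --- are exactly the places where a careless write-up would go wrong, and your sketch handles them correctly. In particular, your treatment of the separable case (2-1), where the leading $\U 1$ in the raw isotropy is absorbed by the quotient to give $\U{2^{\ell}-1}\times\U{2^{m}-1}$ without an extra circle factor, is right.

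One small addendum for Case~(2): to conclude that $g_{11}$ and $h_{11}$ are invertible (before you can use unitarity to kill the off-diagonal blocks), note that $g_{11}\tilde\Lambda h_{11}=\tilde\Lambda$ has full rank $r$, which forces both $g_{11}$ and $h_{11}$ to have rank $r$. After that the argument proceeds exactly as you say.
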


\subsection{Vertical and horizontal subspaces}
This subsection deals with the infinitesimal action of 
$G=\U{2^{\ell}}\otimes\U{2^{m}}$. 
For $\xi\in\frak{u}(2^{\ell})$ and $\eta\in\frak{u}(2^{m})$, 
the fundamental vector field associated with 
$\xi\otimes I_{2^{m}} + I_{2^{\ell}}\otimes \eta \in \mathfrak{g}=\Lie(G)$ is 
determined through 
\begin{equation}
  X_{\xi\otimes I_{2^{m}} + I_{2^{\ell}}\otimes\eta}(C)
  := 
  \left.\frac{d}{dt}\right|_{t=0}e^{t\xi}C e^{t\eta^{T}} 
  = 
  \xi C + C\eta^{T}. 
  \label{fundamental vector}
\end{equation}
The vertical subspace $V_C$ is defined to be the tangent space 
$T_C\mathcal{O}_C$ at $C$ to the orbit $\mathcal{O}_C$, and formed by 
the fundamental vector fields evaluated at $C$, 
\begin{equation*}
  V_{C} = 
  \{ 
    X_{\xi\otimes I_{2^{m}} + I_{2^{\ell}}\otimes\eta}(C)
    = 
    \xi C + C\eta^{T} 
  \,|\, 
    \xi\in\frak{u}(2^{\ell}), \eta\in\frak{u}(2^{m}) 
  \}.
\end{equation*} 
The dimension of $V_{C}$ is given by 
$ \dim V_{C} =   2^{2\ell} + 2^{2m} - 1  - \dim G_{C}$. 
In particular, from Proposition \ref{Prop: isotropy subgrp and orbit}, 
we have 
\begin{equation}
  \dim V_{C} = 
  \begin{cases} 
    2^{\ell+1} + 2^{m+1} - 3, & \text{ if $C$ is separable, } \\
    2^{\ell+m+1} - 2^{2\ell},      & \text{ if $C$ is maximally entangled. } 
  \end{cases}
  \label{dim V}
\end{equation}

The horizontal subspace $H_{C}$ of $T_{C}M$ is defined to be 
the orthogonal complement of $V_{C}$ with respect to 
the metric \eqref{metric}. 
A vector $X\in T_{C}M$ is horizontal, if and only if 
$\langle X,\xi C+C\eta^T\rangle=0$ for any $\xi,\eta \in\frak{u}(2^{\ell})$, 
so that the horizontal subspace at $C\in M$ is expressed as 
\begin{equation*}
  H_{C} = 
  \{ 
    X \in T_{C}M 
  \,|\, 
    XC^{\ast} - CX^{\ast} = 0, C^{\ast}X - X^{\ast}C = 0 
  \}. 
\end{equation*}
From \eqref{dim V} together with $\dim H_{C} = 2^{\ell+m+1} - 1 - \dim V_{C}$, 
we have 
\begin{align*}
  \dim H_{C} 
  &= 
  \begin{cases} 
    2(2^{\ell} - 1)(2^{m} - 1), & \text{ if $C$ is separable, } \\
    2^{2\ell} -1,      & \text{ if $C$ is maximally entangled. } 
  \end{cases}
\end{align*}

\subsection{The Lie algebra of linear vector fields on $M$}
In discussing control problems, we need to calculate Lie brackets 
of vector fields on $\mathbb{C}^{2^{\ell}\times 2^{m}}$. 
In particular, we have to work with the Lie brackets of vector fields of 
the form 
\begin{equation*}
  X_{A\otimes B}(C) := ACB^{T}, 
  \quad 
  C \in \mathbb{C}^{2^{\ell}\times 2^{m}},   
\end{equation*}
where $A \in \mathbb{C}^{2^{\ell}\times 2^{\ell}}$ and 
$B \in \mathbb{C}^{2^{m}\times 2^{m}}$ do not need to be Hermitian or 
skew Hermitian at present. 
The Lie bracket of 
$A_{1}\otimes B_{1}, A_{2}\otimes B_{2} \in 
\mathbb{C}^{2^{\ell}\times 2^{\ell}}\otimes\mathbb{C}^{2^{m}\times2^{m}}$ 
is defined to be  
\begin{equation}
  [A_{1}\otimes B_{1}, A_{2}\otimes B_{2}] 
  := 
  [A_{1}, A_{2}]\otimes B_{1}B_{2} + A_{2}A_{1}\otimes[B_{1}, B_{2}]. 
  \label{Lie bracket}
\end{equation}
Since this bracket is verified to satisfy the Jacobi identity, 
the linear space 
$\mathbb{C}^{2^{\ell}\times 2^{\ell}}\otimes \mathbb{C}^{2^{m}\times 2^{m}}$ 
is endowed with a Lie algebraic structure. 
This Lie algebra 
is isomorphic to $\frak{gl}(2^{n},\mathbb{C})$ through 
\begin{equation*} 
  f \, : \, 
  \mathbb{C}^{2^{\ell}\times 2^{\ell}}\otimes\mathbb{C}^{2^{m}\times 2^{m}} 
  \longrightarrow 
  \mathbb{C}^{2^{n}\times 2^{n}} 
  \, ; \, 
  A\otimes B \longmapsto 
  \begin{pmatrix} 
    a_{11}B & \cdots & a_{1\, 2^{\ell}}B \\
    \vdots & & \vdots \\
    a_{2^{\ell}\, 1}B & \cdots & a_{2^{\ell}\, 2^{\ell}}B 
  \end{pmatrix}. 
\end{equation*}
In fact, $f$ is shown to be bijective and homomorphic, 
$f([A_{1}\otimes B_{1}, A_{2}\otimes B_{2}]) = 
[f(A_{1}\otimes B_{1}), f(A_{2}\otimes B_{2})]$. 

Now we are in a position to state the following proposition: 
\begin{Prop}
For $A_{1}\otimes B_{1}, A_{2}\otimes B_{2} \in 
\mathbb{C}^{2^{\ell}\times 2^{\ell}} 
\otimes \mathbb{C}^{2^{m}\times 2^{m}}$, 
the Lie bracket of the associated vector fields 
$X_{A_{1}\otimes B_{1}}$ and $X_{A_{2}\otimes B_{2}}$ on 
$\mathbb{C}^{2^{\ell}\times 2^{m}}$ is given by 
\begin{equation*}
  [X_{A_{1}\otimes B_{1}}, X_{A_{2}\otimes B_{2}}]
  = 
  -X_{[A_{1}\otimes B_{1}, A_{2}\otimes B_{2}]}. 
\end{equation*}
This means that the correspondence 
$A\otimes B \mapsto X_{A\otimes B}$ gives rise to 
an anti-isomorphism of 
$(\mathbb{C}^{2^{\ell}}\otimes \mathbb{C}^{2^{m}}, 
[\bullet, \bullet])$ 
to the Lie algebra 
$\{ X_{A\otimes B} \,|\, 
A \in \mathbb{C}^{2^{\ell}\times 2^{\ell}}, 
B \in \mathbb{C}^{2^{m}\times 2^{m}} \}$ of 
linear vector fields on $\mathbb{C}^{2^{\ell}\times 2^{m}}$. 
\label{Prop: comp of brackets}
\end{Prop}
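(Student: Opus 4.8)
The plan is to verify the bracket identity by a direct computation of the flows (or, equivalently, of the directional derivatives) of the two linear vector fields, and then recognize the result as $-X_{[A_1\otimes B_1,\,A_2\otimes B_2]}$ using the definition \eqref{Lie bracket}. Since the vector fields are linear in $C$, the computation is entirely algebraic. For a linear vector field $X(C)=ACB^{T}$ on $\mathbb{C}^{2^{\ell}\times 2^{m}}$, the Lie bracket of two such fields $X_{A_1\otimes B_1}$ and $X_{A_2\otimes B_2}$ is computed via $[X_1,X_2](C)=DX_2(C)[X_1(C)]-DX_1(C)[X_2(C)]$, where $DX_i(C)$ is the derivative of the (linear) map $X_i$, which is just $X_i$ itself applied to the increment. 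Thus $DX_2(C)[X_1(C)] = A_2\,(A_1 C B_1^{T})\,B_2^{T} = (A_2 A_1) C (B_1^{T} B_2^{T})$, and symmetrically for the other term.

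First I would write out both terms explicitly: $DX_2(C)[X_1(C)] = A_2 A_1\, C\, (B_1 B_2)^{T}$ and $DX_1(C)[X_2(C)] = A_1 A_2\, C\, (B_2 B_1)^{T}$, using the anti-homomorphism property of transpose, $B_1^{T}B_2^{T}=(B_2B_1)^{T}$. Subtracting, $[X_{A_1\otimes B_1},X_{A_2\otimes B_2}](C) = A_2A_1\,C\,(B_1B_2)^{T} - A_1A_2\,C\,(B_2B_1)^{T}$. The next step is to massage this into the image of a single element of the tensor Lie algebra. Adding and subtracting the cross term $A_2A_1\,C\,(B_2B_1)^{T}$ (or, symmetrically, $A_1A_2\,C\,(B_1B_2)^{T}$), I would regroup as $[A_2,A_1]\,C\,(B_1B_2)^{T} + A_1A_2\,C\,\big((B_1B_2)^{T}-(B_2B_1)^{T}\big) = [A_2,A_1]\,C\,(B_1B_2)^{T} + A_1A_2\,C\,[B_2,B_1]^{T}$. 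Now compare with $-X_{[A_1\otimes B_1,A_2\otimes B_2]}$: by \eqref{Lie bracket}, $[A_1\otimes B_1,A_2\otimes B_2] = [A_1,A_2]\otimes B_1B_2 + A_2A_1\otimes[B_1,B_2]$, so $-X_{[A_1\otimes B_1,A_2\otimes B_2]}(C) = -[A_1,A_2]C(B_1B_2)^{T} - A_2A_1\,C\,[B_1,B_2]^{T} = [A_2,A_1]C(B_1B_2)^{T} + A_1A_2\,C\,[B_2,B_1]^{T}$, which matches exactly. This proves the bracket formula, and the anti-isomorphism statement then follows immediately: the map $A\otimes B\mapsto X_{A\otimes B}$ is linear and surjective onto the space of linear vector fields of this form, it is injective because $X_{A\otimes B}=0$ forces $ACB^{T}=0$ for all $C$ hence $A\otimes B=0$, and the bracket formula says it reverses brackets, so composing with the sign it is a Lie algebra anti-isomorphism (equivalently, precompose with the isomorphism $f$ to $\frak{gl}(2^{n},\mathbb{C})$ established above).

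The only delicate point — and the step I would be most careful about — is the bookkeeping of the cross terms and of the transpose reversing the order of the $B_i$, since a sign or an order slip there would break the match; but once the add-and-subtract trick is applied consistently it is forced. A secondary point worth a sentence is justifying the formula $[X_1,X_2](C)=DX_2(C)X_1(C)-DX_1(C)X_2(C)$ on the linear space $\mathbb{C}^{2^{\ell}\times 2^{m}}$ (this is just the coordinate expression of the Lie bracket of vector fields on a vector space, with the standard sign convention), and noting that since $X_i$ is linear its differential $DX_i(C)$ is constant in $C$ and equals $X_i$, which is what makes the whole computation finite.
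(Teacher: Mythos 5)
Your route is valid and genuinely different from the paper's. The paper proves the identity by transporting everything to $\mathbb{C}^{2^{n}}$: it conjugates $X_{A\otimes B}$ through the isomorphism $\iota$ into the column-vector field $Y_{f(A\otimes B)}$, invokes the standard formula $[Y_{\Xi_{1}},Y_{\Xi_{2}}]=-Y_{[\Xi_{1},\Xi_{2}]}$ for linear fields on a vector space, and uses that $f$ is a Lie algebra isomorphism onto $\mathfrak{gl}(2^{n},\mathbb{C})$; the flow-commutator definition of the bracket is what makes the conjugation step clean. You instead compute $[X_{1},X_{2}](C)=DX_{2}(C)[X_{1}(C)]-DX_{1}(C)[X_{2}(C)]$ directly on $\mathbb{C}^{2^{\ell}\times 2^{m}}$, which is shorter, avoids $f$ entirely, and has the side benefit of exhibiting the definition \eqref{Lie bracket} as the one forced by the vector-field computation. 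The sign conventions agree, so there is no mismatch on that front.

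However, the bookkeeping you yourself flagged as the delicate point is wrong as written. Since $B_{1}^{T}B_{2}^{T}=(B_{2}B_{1})^{T}$, you have $DX_{2}(C)[X_{1}(C)]=A_{2}A_{1}\,C\,(B_{2}B_{1})^{T}$, \emph{not} $A_{2}A_{1}\,C\,(B_{1}B_{2})^{T}$, and likewise $DX_{1}(C)[X_{2}(C)]=A_{1}A_{2}\,C\,(B_{1}B_{2})^{T}$. Two further slips --- $(B_{1}B_{2})^{T}-(B_{2}B_{1})^{T}=[B_{1},B_{2}]^{T}$ rather than $[B_{2},B_{1}]^{T}$, and $-A_{2}A_{1}C[B_{1},B_{2}]^{T}=A_{2}A_{1}C[B_{2},B_{1}]^{T}$ rather than $A_{1}A_{2}C[B_{2},B_{1}]^{T}$ --- conspire so that your two final expressions agree with each other while neither is the correct value of its side. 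The computation does close when done correctly: one has $[X_{1},X_{2}](C)=A_{2}A_{1}CB_{1}^{T}B_{2}^{T}-A_{1}A_{2}CB_{2}^{T}B_{1}^{T}$; adding and subtracting the cross term $A_{2}A_{1}CB_{2}^{T}B_{1}^{T}$ gives $-[A_{1},A_{2}]C(B_{1}B_{2})^{T}-A_{2}A_{1}C[B_{1},B_{2}]^{T}$, which is exactly $-X_{[A_{1}\otimes B_{1},A_{2}\otimes B_{2}]}(C)$ with the bracket as defined in \eqref{Lie bracket}. So the strategy is sound and the conclusion correct, but the algebra must be redone with the transposed factors in the right order.
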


\begin{Cor}
In the two-qubit system with $\ell=m=1$, one has  
\begin{equation}
\label{2qubit commutators}
  [X_{i\sigma_{j}\otimes\sigma_{k}}, X_{i\sigma_{j'}\otimes\sigma_{k'}}]
  = 
  -X_{[i\sigma_{j}\otimes\sigma_{k}, i\sigma_{j'}\otimes\sigma_{k'}]} 
  = 
  X_{[\sigma_{j},\sigma_{j'}]\otimes\sigma_{k}\sigma_{k'}}
  + 
  X_{\sigma_{j'}\sigma_{j}\otimes[\sigma_{k},\sigma_{k'}]}
\end{equation}
for $j, j', k, k' = 0, 1, 2, 3.$
Here the  $\sigma_{j}, j=1,2,3,$ denote the Pauli matrices defined as 
\begin{equation*}
  \sigma_{1} 
  := 
  \frac{1}{2}\begin{pmatrix} 0 & 1 \\ 1 & 0 \end{pmatrix}, \quad 
  \sigma_{2} 
  := 
  \frac{1}{2}\begin{pmatrix} 0 & i \\ -i & 0 \end{pmatrix}, \quad 
  \sigma_{3} 
  := 
  \frac{1}{2}\begin{pmatrix} 1 & 0 \\ 0 & -1 \end{pmatrix}. 
\end{equation*}
\label{Prop: comp of brackets 2}
\end{Cor}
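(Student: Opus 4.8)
The plan is to read off the corollary as the specialization of Proposition \ref{Prop: comp of brackets} to $\ell=m=1$, combined with the defining bracket \eqref{Lie bracket} on $\mathbb{C}^{2\times 2}\otimes\mathbb{C}^{2\times 2}$. First I would apply Proposition \ref{Prop: comp of brackets} with $A_{1}=i\sigma_{j}$, $B_{1}=\sigma_{k}$, $A_{2}=i\sigma_{j'}$, $B_{2}=\sigma_{k'}$ --- these are $2\times 2$ matrices, and the proposition is stated for exactly such matrices, with no Hermiticity assumed --- to obtain at once the middle equality in \eqref{2qubit commutators},
\[
  [X_{i\sigma_{j}\otimes\sigma_{k}}, X_{i\sigma_{j'}\otimes\sigma_{k'}}]
  = -X_{[i\sigma_{j}\otimes\sigma_{k},\, i\sigma_{j'}\otimes\sigma_{k'}]} .
\]

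Next I would expand the inner bracket by \eqref{Lie bracket}, namely $[A_{1}\otimes B_{1}, A_{2}\otimes B_{2}] = [A_{1},A_{2}]\otimes B_{1}B_{2} + A_{2}A_{1}\otimes[B_{1},B_{2}]$, and pull the scalar $i$ through: $[i\sigma_{j},i\sigma_{j'}] = -[\sigma_{j},\sigma_{j'}]$ and $(i\sigma_{j'})(i\sigma_{j}) = -\sigma_{j'}\sigma_{j}$. Hence the inner bracket equals $-[\sigma_{j},\sigma_{j'}]\otimes\sigma_{k}\sigma_{k'} - \sigma_{j'}\sigma_{j}\otimes[\sigma_{k},\sigma_{k'}]$, and multiplying by $-1$ and using the linearity of $A\otimes B\mapsto X_{A\otimes B}$ gives the last equality in \eqref{2qubit commutators}. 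Two minor points need attention: the non-commuting factors must be kept in the order $\sigma_{j'}\sigma_{j}$, and the symbol $\sigma_{0}$ must be understood as the normalized identity $\tfrac12 I_{2}$ so that the indices may run over $0,1,2,3$; with that convention each $i\sigma_{j}\otimes\sigma_{k}$ is a well-defined linear vector field on $\mathbb{C}^{2\times 2}$ and the computation is unchanged, the bracket terms containing $\sigma_{0}$ simply vanishing.

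Since every step is substitution into identities already established in the excerpt, I do not anticipate any real obstacle; the only thing worth verifying is internal consistency, e.g. against the $\mathfrak{su}(2)$ relations $[\sigma_{a},\sigma_{b}] = i\,\varepsilon_{abc}\sigma_{c}$ and $\sigma_{a}\sigma_{b} = \tfrac14\delta_{ab}I_{2} + \tfrac{i}{2}\varepsilon_{abc}\sigma_{c}$ for $a,b\in\{1,2,3\}$, which also let one rewrite \eqref{2qubit commutators} in explicit component form (essentially the content of Appendix A). I would additionally note as a sanity check that the overall plus sign on the right-hand side of \eqref{2qubit commutators} is precisely what the anti-homomorphism property in Proposition \ref{Prop: comp of brackets} forces, the two minus signs (one from the anti-homomorphism, one from $i^{2}=-1$) cancelling.
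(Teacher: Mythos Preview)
Your proof is correct and matches the paper's treatment: the corollary is presented as an immediate specialization of Proposition~\ref{Prop: comp of brackets} together with the bracket formula~\eqref{Lie bracket}, and the paper gives no separate argument beyond this. One small point of convention: the paper takes $\sigma_{0}=I_{2}$ rather than $\tfrac12 I_{2}$ (as seen in the worked example immediately after the corollary and in Appendix~B), though this does not affect your computation since either choice is central.
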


\renewcommand{\proofname}{
  {\it Proof of Proposition \ref{Prop: comp of brackets}}
}

\begin{proof}
Let $\varphi_{A\otimes B}^{t}$ denote the flow generated by 
the vector field 
$X_{A\otimes B}$ on $\mathbb{C}^{2^{\ell}\times 2^{m}}$. 
We identify $(\mathbb{C}^{2})^{\otimes n}$ with the space 
$\mathbb{C}^{2^{n}}$ of $2^{n}$-dimensional column vectors, 
and the isomorphism 
$\iota : (\mathbb{C}^{2})^{\otimes n} \to 
\mathbb{C}^{2^{\ell}\times 2^{m}}$ 
with an isomorphism of $\mathbb{C}^{2^{n}}$ to 
$\mathbb{C}^{2^{\ell}\times 2^{m}}$. 
We note here that the following diagram commutes, 
\begin{equation*}
  \begin{CD} 
    \mathbb{C}^{2^{\ell}\times 2^{m}} 
    @> \text{action of } A\otimes B >> 
    \mathbb{C}^{2^{\ell}\times 2^{m}} \\
    @V \iota^{-1} VV   @VV \iota^{-1} V \\
    \mathbb{C}^{2^{n}} 
    @>> \text{action of } f(A\otimes B) > 
    \mathbb{C}^{2^{n}}. 
  \end{CD}
\end{equation*}
This means that 
\begin{equation}
\label{X and Y}
  (\iota^{-1})_{\ast}X_{A\otimes B}(C) = Y_{f(A\otimes B)}(\iota^{-1}(C)) 
  \quad \text{and} \quad 
  \iota^{-1}\circ\varphi_{A\otimes B}^{t} = e^{tf(A\otimes B)}\circ\iota^{-1}, 
\end{equation}
where $Y_{f(A\otimes B)}(\bm{c}) = f(A\otimes B)\bm{c}$ with 
$\bm{c} \in \mathbb{C}^{2^{n}}$. 

By the definition of the Lie bracket together with \eqref{X and Y}, 
we obtain 
\begin{align*}
  & 
  [X_{A_{1}\otimes B_{1}}, X_{A_{2}\otimes B_{2}}](C) \\
  = & 
  \left. \frac{\partial^{2}}{\partial t \partial s} \right|_{t=s=0}
  (\varphi_{A_{2}\otimes B_{2}}^{-s}\circ\varphi_{A_{1}\otimes B_{1}}^{-t}\circ 
  \varphi_{A_{2}\otimes B_{2}}^{s}\circ\varphi_{A_{1}\otimes B_{1}}^{t})(C)  \\ 
  = &
  \left. \frac{\partial^{2}}{\partial t \partial s} \right|_{t=s=0}
  \iota\circ 
  \left(
    e^{-sf(A_{2}\otimes B_{2})}\circ e^{-tf(A_{1}\otimes B_{1})}\circ 
    e^{sf(A_{2}\otimes B_{2})}\circ e^{tf(A_{1}\otimes B_{1})} 
  \right)
  \circ\iota^{-1}(C) \\ 
  = &
  \, \iota_{\ast}\, 
  [Y_{f(A_{1}\otimes B_{1})}, Y_{f(A_{2}\otimes B_{2})}](\iota^{-1}(C)). 
\end{align*}  
Further, by the formula 
$[Y_{\Xi_{1}}, Y_{\Xi_{2}}] = - Y_{[\Xi_{1}, \Xi_{2}]}$ for 
any $\Xi_{1}, \Xi_{2} \in \mathbb{C}^{2^{n}\times 2^{n}}$,  
one has 
\begin{align*}
  & \, 
  [X_{A_{1}\otimes B_{1}}, X_{A_{2}\otimes B_{2}}](C) \\
  =& \, 
  \iota_{\ast}\, 
  [Y_{f(A_{1}\otimes B_{1})}, Y_{f(A_{2}\otimes B_{2})}] (\iota^{-1}(C)) 
  = 
  \iota_{\ast}\, 
  (-Y_{[f(A_{1}\otimes B_{1}), f(A_{2}\otimes B_{2})]}) (\iota^{-1}(C)) \\
  =& \,  
  -\iota_{\ast}\, 
  Y_{f([A_{1}\otimes B_{1}, A_{2}\otimes B_{2}])} (\iota^{-1}(C)) 
  =  
  - X_{[A_{1}\otimes B_{1}, A_{2}\otimes B_{2}]}(C). 
\end{align*}  
Since the map $A\otimes B\mapsto Y_{f(A\otimes B)}$ is bijective to 
the space of linear vector fields on $\mathbb{C}^{2^n}$, 
so is the map $A\otimes B\mapsto X_{A\otimes B}$ to the space of 
linear vector fields on $\mathbb{C}^{2^{\ell}\times 2^m}$. 
This ends the proof. 
\end{proof}
\renewcommand{\proofname}{{\it Proof}}


We now apply \eqref{2qubit commutators} for the two-qubit system. 
We take, for instance, 
$A_{1}\otimes B_{1} = i\sigma_{3}\otimes\sigma_{3}$ and 
$A_{2}\otimes B_{2} = i\sigma_{1}\otimes I$. 
By Corollary \ref{Prop: comp of brackets 2}, 
the Lie bracket of the associated linear vector fields is given by 
$  [X_{i\sigma_{3}\otimes\sigma_{3}}, X_{i\sigma_{1}\otimes I}] 
  = 
  X_{[\sigma_{3},\sigma_{1}]\otimes\sigma_{3}}
  + 
  X_{\sigma_{1}\sigma_{3}\otimes[\sigma_{3}, I]} 
  = 
  -X_{i\sigma_{2}\otimes\sigma_{3}}$. 
Other Lie brackets will be found in Appendix A.

\section{Controls in NMR systems}
\label{Control problem for entanglement}
Sections \ref{Control problem for entanglement} and 
\ref{Control problem in two-qubit systems} deal with 
control problems in $n$- and $2$-qubit systems, respectively. 
In this paper, we adopt a nuclear magnetic resonance (NMR) system as
a quantum computation model from $n$-qubit systems 
\cite{KG01}.

The total Hamiltonian operator $\hat{H}$ of the NMR system is 
the sum of two operators, 
\begin{equation*}
  \hat{H} := \hat{H}_{d} + \hat{H}_{c}, 
\end{equation*}
where $\hat{H}_{d}$ and $\hat{H}_{c}$ are called 
the drift and control Hamiltonians, 
and expressed as 
\begin{subequations}
  \begin{gather}
    \hat{H}_{d} 
    := 
    \sum_{1\leq \alpha < \beta \leq n} 
    J_{\alpha\beta} \sigma_{3}^{(\alpha,n)}\sigma_{3}^{(\beta,n)}, 
    \\
    \hat{H}_{c}
    := 
    \sum_{\alpha=1}^{n} \biggl(
      v^{(\alpha)}_{1}\sigma_{1}^{(\alpha,n)} 
      +
      v^{(\alpha)}_{2}\sigma_{2}^{(\alpha,n)} 
    \biggr), 
  \end{gather}
  \label{Hamiltonians}
\end{subequations}
respectively. 
Here, $J_{\alpha\beta}\in\mathbb{R}$ are coupling constants determining 
the strength of the interactions between the $\alpha$-th and $\beta$-th qubits. 
The $v^{(\alpha)}_{1}, v^{(\alpha)}_{2}:\mathbb{R}\to\mathbb{R}$ are 
(time-dependent) controls acting on the $\alpha$-th qubit.
The $\sigma_{j}^{(\alpha,n)}$ are defined as 
\begin{equation}
  \label{sigma}
  \sigma_{j}^{(\alpha,n)}
  := 
  \underbrace{I\otimes \cdots \otimes I}_{(\alpha-1)\text{ times}} 
  \otimes\sigma_{j}\otimes
  \underbrace{I\otimes\cdots\otimes I}_{(n-\alpha)\text{ times}}, 
\end{equation}
for $j=1,2,3,$ and $\alpha = 1, \cdots, n,$ 
with $I$ the $2\times 2$ identity matrix. 
The controls $v^{(\alpha)}_{1}$ and $v^{(\alpha)}_{2}$ are 
assumed to be piecewise constant functions in $t$. 

The Schr\"odinger equation for the NMR system is expressed as 
\begin{equation}
  \frac{d}{dt}|\Psi\rangle 
  = 
  -i\hat{H}|\Psi\rangle, 
  \label{Schrodinger equation 0}
\end{equation}
where the natural unit system has been adopted, so that 
the Plank constant $\hbar$ is equal to one. 
Since we are interested in bipartite entanglement, 
we have to put the Schr\"odinger equation in the form of 
matrix equation. 
According to the isomorphism 
$(\mathbb{C}^{2})^{\otimes n} \cong \mathbb{C}^{2^{\ell}\times 2^{m}}$ 
with $n=\ell+m$, 
the vector fields associated with the drift and control Hamiltonians 
are expressed, respectively, as 
\begin{subequations}
\label{assoc vec fields}
\begin{align}
  -iH_d(C)
  =  &
  -i \!\!\sum_{1\leq \alpha < \beta \leq \ell} \!\!
  J_{\alpha\beta} 
  \, [ \sigma_{3}^{(\alpha,\ell)}\sigma_{3}^{(\beta,\ell)} ] \, C 
  \hspace{3mm} -i \hspace{-5mm}
  \sum_{\ell+1\leq \alpha < \beta \leq \ell+m} \hspace{-5mm}
  J_{\alpha\beta} 
  C \,
  [ \sigma_{3}^{(\alpha,m)}\sigma_{3}^{(\beta,m)} ]^{T} 
  \notag\\ 
  & 
  -i \hspace {-3mm} \sum_{\substack{1\leq \alpha \leq \ell,\\ 
  \ell+1\leq \beta \leq \ell+m}} \hspace{-7mm}
  J_{\alpha\beta} 
  \, [ \sigma_{3}^{(\alpha,\ell)} ]\, C \,[ \sigma_{3}^{(\beta-\ell,m)} ]^{T},     \\ 
  -iH_c(C)
  = &
  \sum_{1 \leq \alpha \leq \ell}
  [ I^{\otimes(\alpha-1)}\otimes\xi_{\alpha}\otimes 
  I^{\otimes(\ell-\alpha)} ]\, C 
  +\sum_{\ell+1 \leq \beta \leq \ell+m} \hspace{-3mm}
  C \,[ I^{\otimes(\beta-\ell-1)}\otimes\xi_{\beta}\otimes 
  I^{\otimes(m-\beta+\ell)} ]^{T},
\end{align}
\end{subequations}
where 
\begin{equation*}
   \xi_{\alpha} = -i(v^{(\alpha)}_{1}\sigma_{1}+v^{(\alpha)}_{2}\sigma_{2}), 
\end{equation*}
and where the symbols with square brackets like 
$[\sigma_{3}^{(\alpha,\ell)}]$ denote the Kronecker products of matrices 
corresponding to the tensor products concerned. 
The Schr{\"o}dinger equation on $M$ is then put in the form 
\begin{equation}
\frac{dC}{dt}=-iH_d(C)-iH_c(C).
\label{Schrodinger eq} 
\end{equation}
As is easily verified, $-iH_d$ and $-iH_c$ are vector fields on $M$. 
In what follows, we will refer to $-iH_d(C)$ and $-iH_c(C)$ as 
the drift and control vector fields, respectively. 

\subsection{A solution to the NMR system}
We show that Eq.~\eqref{Schrodinger eq} can be solved with 
the assumption that the coupling constants are small enough and 
that the controls are constant. 
In view of \eqref{assoc vec fields}, 
we may put the drift and control vector fields in the form, 
\begin{subequations}
\label{N param vect fields}
\begin{align}
  -iH_{d}(C) 
  &= 
  -i\sum_{k=1}^{N}\varepsilon_{k}\Sigma^{(k)}_{1}C(\Sigma^{(k)}_{2})^{T}, \\
  -iH_{c}(C) 
  &= 
  \xi_{1}C + C\xi_{2}^{T}, 
\end{align}
\end{subequations}
respectively, where $\varepsilon_{k}, k=1,\cdots, N,$ are small parameters, 
where $\Sigma^{(k)}_{1}\in\mathbb{C}^{2^{\ell}\times 2^{m}}, 
\Sigma^{(k)}_{2}\in\mathbb{C}^{2^{\ell}\times 2^{m}}, k=1,\cdots, N,$ 
are constant Hermitian matrices, 
and where $\xi_{1}\in\frak{su}(2^{\ell}), \xi_{2}\in\frak{su}(2^{m})$ 
are constant controls. 
The Schr\"odinger equation we are to solve is 
\begin{align}
  \dot{C} 
  &= 
  -i\sum_{k=1}^{N}\varepsilon_{k}\Sigma^{(k)}_{1}C(\Sigma^{(k)}_{2})^{T} 
  + \xi_{1}C + C\xi_{2}^{T} \notag\\
  &= 
  -\sum_{k=1}^{N}\varepsilon_{k}X_{i\Sigma^{(k)}_{1}
    \otimes\Sigma^{(k)}_{2}}(C) 
  + X_{\xi_{1}\otimes I_{2^{m}}}(C) 
  + X_{I_{2^{\ell}}\otimes\xi_{2}}(C) . 
  \label{Schrodinger eq with 2-para}
\end{align}
Since the coupling constants $\varepsilon_{k}$ are sufficiently small, 
we may suppose that the solution $C(t)$ to \eqref{Schrodinger eq with 2-para} 
can be expanded into a power series in $\varepsilon_{k}, k=1,\cdots, N,$ 
\begin{equation*}
  C(t) = 
  \sum_{\bm{n}=(n_{1},\cdots, n_{N})\in(\mathbb{Z}_{\geq 0})^{N}} 
  \varepsilon^{\bm{n}} C_{\bm{n}}(t) 
  \quad \text{with} \quad 
  \varepsilon^{\bm{n}} := 
  \varepsilon_{1}^{n_{1}}\cdots\varepsilon_{N}^{n_{N}}. 
\end{equation*}
By substituting this expansion into \eqref{Schrodinger eq with 2-para} 
and comparing the left and right hand-sides, 
we obtain the series of differential equations 
\begin{equation}
 \label{two-param eq: expansion}
  \begin{aligned}
    \dot{C}_{\bm{0}} 
    &= 
    \hspace{5.2cm} \,\,\,\,\, 
    \xi_{1}C_{\bm{0}} + C_{\bm{0}}\xi_{2}^{T}, \\
    \dot{C}_{\bm{n}} 
    &= 
    -i\sum_{\substack{k=1,\cdots,N, \\ n_{k}\ne 0}}
    \Sigma^{(k)}_{1} C_{\bm{n}-\bm{e}_{k}} (\Sigma^{(k)}_{2})^{T}
    +\xi_{1}C_{\bm{n}} + C_{\bm{n}}\xi_{2}^{T}  
    \quad \text{for} \quad 
    \bm{n} \ne \bm{0}, 
  \end{aligned}
\end{equation}
where 
$\bm{e}_{1}=(1,0,\cdots, 0), \cdots, 
\bm{e}_{N}=(0,\cdots,0,1)\in(\mathbb{Z}_{\geq 0})^{N}$. 

To solve the above equations, we introduce new unknown functions by 
$Q_{\bm{n}}(t) = (e^{-t\xi_{1}}\otimes e^{-t\xi_{2}})\cdot C_{\bm{n}}(t)$. 
Then $Q_{\bm{n}}(t)$ with $\bm{n} \in (\mathbb{Z}_{\geq 0})^{N}$ 
prove to satisfy the equations 
\begin{equation*}
  \begin{aligned}
    \dot{Q}_{\bm{0}} 
    &= 
    0, \\
    \dot{Q}_{\bm{n}} 
    &= 
    -i\sum_{\substack{k=1,\cdots, N, \\ n_{k}\ne 0}}
    \left( e^{-t\xi_{1}} \Sigma^{(k)}_{1} e^{t\xi_{1}} \right) 
    Q_{\bm{n}-\bm{e}_{k}} 
    \left( e^{-t\xi_{2}} \Sigma^{(k)}_{2} e^{t\xi_{2}} \right)^{T} \\
    &= 
    -i\sum_{\substack{k=1,\cdots, N, \\ n_{k}\ne 0}}
    \Ad_{e^{-t\xi_{1}}\otimes e^{-t\xi_{2}}}(\Sigma^{(k)}_{1}\otimes
    \Sigma^{(k)}_{2}) 
    Q_{\bm{n}-\bm{e}_{k}} 
    \quad \text{for} \quad 
    \bm{n} \ne \bm{0}. 
  \end{aligned}
\end{equation*}
These equations for $Q_{\bm{n}}$ are inductively integrable. 
The first (N+1) of solutions are given by  
$Q_{\bm{0}}(t) = Q_{\bm{0}}(0)$ 
and 
\begin{equation*}
  Q_{\bm{e}_{k}}(t) 
  = 
  Q_{\bm{e}_{k}}(0) 
  -i\int_{0}^{t} \,  
  \Ad_{e^{-s\xi_{1}}\otimes e^{-s\xi_{2}}}(\Sigma^{(k)}_{1}\otimes
  \Sigma^{(k)}_{2})  Q_{\bm{0}}(0) ds, \quad k=1,\cdots, N. 
\end{equation*}
In order to write down solutions in a compact form, 
we introduce $N$ linear operators 
$\mathcal{T}_{k} : C^{\infty}(\mathbb{R};\mathbb{C}^{2^{\ell}\times 2^{m}}) 
\to C^{\infty}(\mathbb{R};\mathbb{C}^{2^{\ell}\times 2^{m}}), k=1,\cdots,N,$ 
defined by
\begin{equation}
  (\mathcal{T}_{k}F)(t) 
  := 
  -i\int_{0}^{t}\,  
  \Ad_{e^{-s\xi_{1}}\otimes e^{-s\xi_{2}}}(\Sigma^{(k)}_{1}\otimes
  \Sigma^{(k)}_{2}) F(s) ds, 
  \quad 
  F \in C^{\infty}(\mathbb{R};\mathbb{C}^{2^{\ell}\times 2^{m}}). 
  \label{eq: T_k}
\end{equation}
Then, in terms of these operators, 
$Q_{\bm{e}_{k}}(t), k=1,\cdots, N$, take the form 
\begin{equation*}
  Q_{\bm{e}_{k}}(t) 
  = 
  Q_{\bm{e}_{k}}(0) 
  + 
  (\mathcal{T}_{k}Q_{\bm{0}}(0))(t), 
  \quad 
  k = 1, \cdots, N, 
\end{equation*}
where $Q_{\bm{0}}(0)$ in the right-hand side is viewed as a constant function. 

We proceed to, say, $Q_{\bm{e}_{1}+\bm{e}_{2}}(t)$. 
The equation for $Q_{\bm{e}_{1}+\bm{e}_{2}}(t)$ is expressed as 
\begin{equation*}
  \dot{Q}_{\bm{e}_{1}+\bm{e}_{2}} 
  = 
  -i\Ad_{e^{-t\xi_{1}}\otimes e^{-t\xi_{2}}}(\Sigma^{(2)}_{1}\otimes
  \Sigma^{(2)}_{2})   Q_{\bm{e}_{1}} 
  -i\Ad_{e^{-t\xi_{1}}\otimes e^{-t\xi_{2}}}(\Sigma^{(1)}_{1}\otimes
  \Sigma^{(1)}_{2})   Q_{\bm{e}_{2}},    
\end{equation*}
and integrated to yield 
\begin{align*}
  & \, 
  Q_{\bm{e}_{1}+\bm{e}_{2}}(t) 
  = 
  Q_{\bm{e}_{1}+\bm{e}_{2}}(0)
  + (\mathcal{T}_{2}Q_{\bm{e}_{1}})(t) 
  + (\mathcal{T}_{1}Q_{\bm{e}_{2}})(t) \\
  =& \, 
  Q_{\bm{e}_{1}+\bm{e}_{2}}(0) 
  + (\mathcal{T}_{2}Q_{\bm{e}_{1}}(0))(t) 
  + (\mathcal{T}_{1}Q_{\bm{e}_{2}}(0))(t) 
  + (\mathcal{T}_{2}\circ\mathcal{T}_{1}\, Q_{\bm{0}}(0))(t) 
  + (\mathcal{T}_{1}\circ\mathcal{T}_{2}\, Q_{\bm{0}}(0))(t). 
\end{align*}
In a similar manner, the $Q_{\bm{n}}(t)$ is inductively integrated 
and expressed as 
\begin{equation*}
  Q_{\bm{n}}(t) 
  = 
  \sum_{\substack{
    \bm{m}=(m_{1},\cdots,m_{N})\in(\mathbb{Z}_{\geq 0})^{N}, \\ 
    m_{k} \leq n_{k}, \, k=1, \cdots, N 
  }}\,\, 
  \sum_{\substack{ 
    K = (k_{1}, \cdots, k_{|\bm{m}|}) \in \{1, \cdots, N\}^{|\bm{m}|}, \\
    \#\{\nu | k_{\nu} = k\} = m_{k}, \, k=1,\cdots, N
  }}
  (\mathcal{T}_{K}Q_{\bm{n}-\bm{m}}(0))(t), 
\end{equation*}
where $|\bm{m}| := m_{1} +\cdots +m_{N}$, and where 
$\mathcal{T}_{K}: C^{\infty}(\mathbb{R};\mathbb{C}^{2^{\ell}\times 2^{m}}) 
\to C^{\infty}(\mathbb{R};\mathbb{C}^{2^{\ell}\times 2^{m}})$ 
are defined to be 
\begin{equation*}
  \mathcal{T}_{K} 
  := 
  \mathcal{T}_{k_{1}} \circ \cdots \circ \mathcal{T}_{k_{|\bm{m}|}}, 
  \quad 
  K = (k_{1}, \cdots, k_{|\bm{m}|}) \in \{1, \cdots, N\}^{|\bm{m}|}. 
\end{equation*}
Hence, we have obtained the following. 

\begin{Prop}
\label{Prop: power series in general}
The solution $C(t)$ to \eqref{Schrodinger eq with 2-para} 
is put in the form of power series in 
$\varepsilon_{k}$, $k=1,\cdots, N$, 
\begin{align}
  C(t) 
  &= 
  (e^{t\xi_{1}}\otimes e^{t\xi_{2}}) 
  \sum_{\bm{n}} \sum_{\bm{m}} \sum_{K} 
  \varepsilon^{\bm{n}} (\mathcal{T}_{K} Q_{\bm{n}-\bm{m}}(0))(t)  
  \nonumber \\
  &= 
  (e^{t\xi_{1}}\otimes e^{t\xi_{2}}) 
  \sum_{
    \bm{m}\in(\mathbb{Z}_{\geq 0})^{N}
  }\,\, 
  \sum_{\substack{ 
    K \in \{1, \cdots, N\}^{|\bm{m}|}, \\
    \#\{\nu | k_{\nu} = k\} = m_{k}, \, k=1,\cdots, N
  }}  
  \varepsilon^{\bm{m}} (\mathcal{T}_{K} C(0))(t), 
  \label{N-param power series}
\end{align}
where we have used the fact that 
$C(0) = \sum_{\bm{n}}\varepsilon^{\bm{n}}C_{\bm{n}}(0) = 
\sum_{\bm{n}}\varepsilon^{\bm{n}}Q_{\bm{n}}(0)$. 
\end{Prop}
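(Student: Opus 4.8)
The plan is to prove the formula by systematically retracing the inductive integration carried out just above the statement; the work is essentially an organized bookkeeping rather than a new estimate. First I would justify the power-series ansatz itself. The right-hand side of \eqref{Schrodinger eq with 2-para} is linear in $C$ and depends affinely, hence analytically, on the parameters $\varepsilon_{1},\dots,\varepsilon_{N}$, so by the standard theory of linear ODEs depending on parameters the solution with fixed initial value $C(0)$ is real-analytic in $(\varepsilon_{1},\dots,\varepsilon_{N})$; thus the expansion $C(t)=\sum_{\bm{n}}\varepsilon^{\bm{n}}C_{\bm{n}}(t)$ is legitimate, and matching the coefficient of $\varepsilon^{\bm{n}}$ on both sides yields precisely the recursion \eqref{two-param eq: expansion}. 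I would also record, for completeness, that the resulting series converges for all $t$: since $\xi_{1}\in\mathfrak{su}(2^{\ell})$ and $\xi_{2}\in\mathfrak{su}(2^{m})$ the propagators $e^{-s\xi_{1}},e^{-s\xi_{2}}$ are unitary, so the operator norm of $\mathcal{T}_{K}$ on $C([0,T];\mathbb{C}^{2^{\ell}\times 2^{m}})$ is at most $\big(\prod_{\nu}\|\Sigma^{(k_{\nu})}_{1}\|\,\|\Sigma^{(k_{\nu})}_{2}\|\big)\,T^{|\bm{m}|}/|\bm{m}|!$, and summing these bounds with the multinomial count of tuples $K$ of prescribed multiplicities gives an exponential majorant; hence the triple sum in \eqref{N-param power series} is absolutely convergent, uniformly on compact $t$-intervals.

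Next I would carry out the gauge substitution $Q_{\bm{n}}(t):=(e^{-t\xi_{1}}\otimes e^{-t\xi_{2}})\cdot C_{\bm{n}}(t)$ exactly as in the computation preceding the proposition: differentiating and using $\tfrac{d}{dt}e^{-t\xi_{j}}=-\xi_{j}e^{-t\xi_{j}}$ cancels the terms $\xi_{1}C_{\bm{n}}+C_{\bm{n}}\xi_{2}^{T}$, leaving $\dot{Q}_{\bm{0}}=0$ and $\dot{Q}_{\bm{n}}=-i\sum_{k:\,n_{k}\neq 0}\Ad_{e^{-t\xi_{1}}\otimes e^{-t\xi_{2}}}(\Sigma^{(k)}_{1}\otimes\Sigma^{(k)}_{2})\,Q_{\bm{n}-\bm{e}_{k}}$, with $Q_{\bm{n}}(0)=C_{\bm{n}}(0)$; in terms of the operators \eqref{eq: T_k} this reads $Q_{\bm{n}}(t)=Q_{\bm{n}}(0)+\sum_{k:\,n_{k}\neq 0}(\mathcal{T}_{k}Q_{\bm{n}-\bm{e}_{k}})(t)$ for $\bm{n}\neq\bm{0}$. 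I would then prove by induction on $|\bm{n}|=n_{1}+\dots+n_{N}$ that $Q_{\bm{n}}(t)=\sum_{\bm{m}\le\bm{n}}\sum_{K}(\mathcal{T}_{K}Q_{\bm{n}-\bm{m}}(0))(t)$, the inner sum running over ordered tuples $K=(k_{1},\dots,k_{|\bm{m}|})$ in which each index $k$ occurs with multiplicity $m_{k}$, with the convention $\mathcal{T}_{\emptyset}=\mathrm{id}$. The base case $\bm{n}=\bm{0}$ is immediate, and for the step one substitutes the induction hypothesis for each $Q_{\bm{n}-\bm{e}_{k}}$ into the relation above and collects terms. The main obstacle, indeed the only delicate point, is the combinatorial accounting: one must check that each ordered tuple $K$ with prescribed multiplicities $\bm{m}$, $|\bm{m}|\ge 1$, and each remainder $\bm{n}-\bm{m}$ is produced exactly once. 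This holds because such a $K$ can arise only from the single choice $k=k_{1}$ together with the shorter tuple $K'=(k_{2},\dots,k_{|\bm{m}|})$ acting on $Q_{\bm{n}-\bm{m}}(0)$ (the first entry of an ordered tuple being unambiguous), while the side condition $n_{k}\neq 0$ is automatic since $1\le m_{k}\le n_{k}$; hence the nested sums reorganize into the claimed sum over ordered tuples with no over- or under-counting.

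Finally I would reassemble. Since $C_{\bm{n}}(t)=(e^{t\xi_{1}}\otimes e^{t\xi_{2}})Q_{\bm{n}}(t)$, the first line of \eqref{N-param power series} follows at once from the induction. Re-indexing by $\bm{p}:=\bm{n}-\bm{m}$, using $\varepsilon^{\bm{n}}=\varepsilon^{\bm{m}}\varepsilon^{\bm{p}}$ and the fact that $\mathcal{T}_{K}$ does not depend on $\bm{p}$, the inner summation over $\bm{p}$ of $\varepsilon^{\bm{p}}Q_{\bm{p}}(0)=\varepsilon^{\bm{p}}C_{\bm{p}}(0)$ collapses to $C(0)$, which yields the second line of \eqref{N-param power series} and completes the argument.
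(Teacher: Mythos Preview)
Your argument is correct and follows essentially the same route as the paper: the gauge substitution $Q_{\bm{n}}=(e^{-t\xi_{1}}\otimes e^{-t\xi_{2}})C_{\bm{n}}$, the recursion $Q_{\bm{n}}(t)=Q_{\bm{n}}(0)+\sum_{k:n_{k}\neq 0}(\mathcal{T}_{k}Q_{\bm{n}-\bm{e}_{k}})(t)$, the induction on $|\bm{n}|$, and the final re-indexing $\bm{p}=\bm{n}-\bm{m}$ are exactly the steps the paper performs in the passage preceding the proposition. Your write-up is in fact more complete than the paper's, which simply asserts the inductive formula for $Q_{\bm{n}}$ and the collapse to $C(0)$ without spelling out the combinatorial bijection $K\leftrightarrow(k_{1},K')$ or giving any convergence estimate; your analyticity remark and the $T^{|\bm{m}|}/|\bm{m}|!$ bound on $\mathcal{T}_{K}$ are welcome additions that the paper omits.
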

If the control is piecewise constant in $t$, the continuation along time 
of solutions of the form \eqref{N-param power series} 
with respective constant controls will yield a solution  
to \eqref{Schrodinger eq with 2-para}. 

\subsection{Controllability of the NMR systems}
From Eq.~\eqref{N param vect fields}, we observe that 
the control vector field are vertical, 
so that the measure defined in 
\eqref{measurement} does not change along this vector field. 
Hence, the control would not make the NMR system entangled 
without the coupling with the drift vector field. 
With this observation in mind, 
we gain insight into \eqref{N-param power series} 
to understand how the coupling occurs. 
We first look into the operators 
$\mathcal{T}_{k}$ defined in \eqref{eq: T_k}. 
Writing out the integrand of \eqref{eq: T_k} in the form of a power series 
in $s$, we obtain 
\begin{align*}
  (\mathcal{T}_{k}F)(t) 
  =& 
  -i\int_{0}^{t}\left\{
    \Sigma^{(k)}_{1}F(s)(\Sigma^{(k)}_{2})^{T} 
    -s\Big(
      [\xi_{1},\Sigma^{(k)}_{1}]F(s)(\Sigma^{(k)}_{2})^{T}
      +
      \Sigma^{(k)}_{1}F(s)[\xi_{2}, \Sigma^{(k)}_{2}]^{T}
    \Big) 
    \right.  \\
  & \quad \quad \quad \quad 
    +\frac{s^{2}}{2!}\Big(
      [\xi_{1},[\xi_{1}, \Sigma^{(k)}_{1}]]F(s)(\Sigma^{(k)}_{2})^{T}
      +2[\xi_{1},\Sigma^{(k)}_{1}]F(s)[\xi_{2}, \Sigma^{(k)}_{2}]^{T} 
       \\
  & \hspace{2.8cm} 
      \left. 
      +\Sigma^{(k)}_{1}F(s)[\xi_{2}, [\xi_{2}, \Sigma^{(k)}_{2}]]^{T}
    \Big)
    + \cdots 
    \right\}ds. 
\end{align*}
By applying Proposition \ref{Prop: comp of brackets}, 
we can put the above equation in the form 
\begin{align}
  & 
  (\mathcal{T}_{k}F)(t) \notag \\
  = & 
  \int_{0}^{t}\left\{ 
    -X_{i\Sigma^{(k)}_{1}\otimes\Sigma^{(k)}_{2}}(F(s)) 
    + 
    s\Big(
    [X_{\xi_{1}\otimes I}, -X_{i\Sigma^{(k)}_{1}\otimes\Sigma^{(k)}_{2}}](F(s))
    + 
    [X_{I\otimes\xi_{2}}, -X_{i\Sigma^{(k)}_{1}\otimes\Sigma^{(k)}_{2}}](F(s))
    \Big) 
  \right.  \notag \\
  & \quad \quad 
  + \frac{s^{2}}{2!} \Big(
      [X_{\xi_{1}\otimes I},
      [X_{\xi_{1}\otimes I}, -X_{i\Sigma^{(k)}_{1}\otimes
      \Sigma^{(k)}_{2}}]](F(s))
      +
      2[X_{\xi_{1}\otimes I},
      [X_{I\otimes\xi_{2}}, -X_{i\Sigma^{(k)}_{1}\otimes
      \Sigma^{(k)}_{2}}]](F(s)) \notag \\
  & \quad \quad \quad \quad \quad 
  \label{expanded Tk}
      \left. 
      + 
      [X_{I\otimes\xi_{2}}, 
      [X_{I\otimes\xi_{2}}, -X_{i\Sigma^{(k)}_{1}\otimes
      \Sigma^{(k)}_{2}}]](F(s)) 
      \,\Big)
      + \cdots 
      \right\}ds.
\end{align}
Since the expansion \eqref{expanded Tk} means that 
the drift and control vector fields are coupled indeed, 
and since the operator $\mathcal{T}_K$ is the composition of these 
$\mathcal{T}_k$, 
Eq.~\eqref{N-param power series} shows that 
the drift and control vector fields work together to give rise to 
the solution. 
If the coupling between the drift and control vector fields 
generates vector fields with non-vanishing 
horizontal components, the NMR system will get more entangled 
by controls. 

Incidentally, if the NMR system is controllable, 
then any initial state can be transferred into, say,  
a maximally entangled state by some controls. 
This means that the control vector field should be coupled with 
the drift vector field in the case of controllable NMR systems. 

According to \cite{AD'A03}, 
the system \eqref{Schrodinger equation 0} is said to be 
pure state controllable if, 
for any initial and final states, 
there are controls $\xi_{\alpha}, \alpha=1,\cdots, n$, 
and a finite time $T>0$ such that the initial state is 
transferred into the final one on the finite time $T$. 
Further, the system \eqref{Schrodinger equation 0} is 
pure state controllable if and only if 
the Lie algebra $\mathcal{L}_{\hat{H}}$ generated by 
$\{ -i\hat{H}_{d} \} \cup 
\{ i\sigma_{j}^{(\alpha,n)} \}_{j=1,2}^{\alpha = 1,\cdots, n}$ 
is isomorphic to $\frak{su}(2^{n})$. 
For the time being, we are interested in the coupling between 
the drift and control vector fields, 
and postpone the controllability problem to Appendix~B, 
in which we will give a necessary and
sufficient condition for controllability.  

However, it is of great help to apply the controllability theorem 
to the NMR system. 
To state the theorem, we need to introduce the notion of spin graph. 
Here, the spin graph associated with a given drift Hamiltonian 
$\hat{H}_{d}$ 
is defined to be the un-oriented graph that has $n$ nodes 
and edges joining the nodes labeled as $\alpha$ and $\beta$ such that 
$J_{\alpha\beta}\neq 0$, where nodes and edges represent spin-$\frac12$ 
particles and interactions among spin-$\frac{1}{2}$ particles such that 
$J_{\alpha\beta}\neq 0$, respectively. 
In terms of the spin graph, the controllability theorem is stated 
as follows (see Theorem B.8): 
The NMR system is controllable, if and only if the associated spin graph 
is connected. 
In view of this, we may set some of $J_{\alpha\beta}$ 
in \eqref{assoc vec fields} to be zero. 
We here take 
\begin{equation}
  \left\{ \begin{array}{ll}
    J_{1\beta}\neq 0, & \ell+1\leq \beta \leq \ell +m, \\ 
    J_{\alpha,\ell+1}\neq 0, &  1\leq \alpha \leq \ell, \\ 
    J_{\alpha\beta}=0, & {\rm otherwise}. 
  \end{array}\right. 
\end{equation}
Then, the associated spin graph is connected, so that the NMR system 
with these coupling constants is controllable. 
The drift vector field in \eqref{assoc vec fields} is now expressed as 
\begin{equation*}
  -iH_d(C) = 
  -i[\sigma^{(1,\ell)}_3]C
     \Bigl( 
       \sum_{\ell+1\leq \beta \leq \ell+m}J_{1\beta}
       [\sigma_3^{(\beta-\ell,m)}]^T 
     \Bigr)
  -i\Bigl( 
    \sum_{2\leq \alpha \leq \ell}J_{\alpha,\ell+1}
    [\sigma^{(\alpha,\ell)}_3]
  \Bigr) C [\sigma_3^{(1,m)}]^T . 
\end{equation*}
In this case, one has $N=2$, and 
 \begin{align*}
  & \Sigma_1^{(1)}=[\sigma_3^{(1,\ell)}], \quad 
    \Sigma_2^{(1)}=\sum_{\ell+1 \leq \beta \leq \ell+m}
                   J^{(1)}_{\beta}[\sigma_3^{(\beta-\ell),m)}]^T, 
    \quad \varepsilon_1 J_{\beta}^{(1)}=J_{1\beta}, \\
  & \Sigma_1^{(2)}=\sum_{2\leq \alpha \leq \ell}
                   J_{\alpha}^{(2)}[\sigma_3^{(\alpha,\ell)}], \quad 
    \Sigma_2^{(2)}=[\sigma_3^{(1,m)}]^T, \quad 
    \varepsilon_2 J_{\alpha}^{(2)}=J_{\alpha,\ell+1}, \\
  & \xi_1=\sum_{1\leq \alpha \leq \ell} [I^{\otimes (\alpha-1)} \otimes 
    \xi_{\alpha} \otimes I^{\otimes (\ell-\alpha)}], \quad 
    \xi_2=\sum_{\ell+1\leq \beta \leq \ell+m} [I^{\otimes (\beta-\ell-1)} 
    \otimes \xi_{\beta} \otimes I^{\otimes (m-\beta+\ell)}]^T, 
 \end{align*}
 where $\xi_{\alpha},\xi_{\beta}\in 
 {\rm span}_{\mathbb{R}}\{i\sigma_1,i\sigma_2\}$. 

\section{Controls in two-qubit NMR systems}
\label{Control problem in two-qubit systems}

In this section, we deal with the two-qubit case. 
Our system is expressed as 
\begin{equation}
  \frac{dC}{dt}
  = 
  -JX_{i\sigma_{3}\otimes\sigma_{3}} 
  + X_{\xi_{1}\otimes I} + X_{I\otimes\xi_{2}} 
  = 
  -iJ\sigma_{3}C\sigma_{3}^{T} + \xi_{1}C + C\xi_{2}^{T}, 
  \label{NMR model}
\end{equation}
where $J>0$ is the coupling constant between the two qubits, 
and where $\xi_{1}, \xi_{2}$ are constant controls taking values in 
$\Span_{\mathbb{R}}\{ i\sigma_{1}, i\sigma_{2} \}$.
In comparison to \eqref{Schrodinger eq with 2-para}, 
this equation is easier to solve, and we can investigate solutions 
in detail to understand how controls make the NMR system more entangled. 

\subsection{Two-qubit NMR systems without control}
To understand why one needs controls to make the NMR systems entangled, 
we look into the uncontrolled system with 
$\xi_{1} = \xi_{2} = 0$ in \eqref{NMR model}. 
If $\xi_{1} = \xi_{2} = 0$, Eq.~\eqref{NMR model} is easily integrated 
to give the solution 
\begin{equation}
  C(t) 
  = 
  \varphi_{-i\sigma_{3}\otimes\sigma_{3}}^{Jt}(C(0))
  = 
  \begin{pmatrix}
    e^{-iJt/4}c_{00}(0) & e^{iJt/4}c_{01}(0) \\
    e^{iJt/4}c_{10}(0) & e^{-iJt/4}c_{11}(0) 
  \end{pmatrix}. 
  \label{sol: uncontrolled}
\end{equation}
Then the measure $F(C(t))=\det(C(t)^*C(t))$ is evaluated as  
\begin{multline*} 
  F(C(t)) = 
  |c_{00}(0)c_{11}(0)|^{2}+|c_{01}(0)c_{10}(0)|^{2} \\
  - 
  \left(
    e^{-iJt}c_{00}(0)\overline{c_{01}(0)}\,\overline{c_{10}(0)}c_{11}(0))
    +
    e^{iJt}\overline{c_{00}(0)}c_{01}(0)c_{10}(0)\overline{c_{11}(0))}
  \right), 
\end{multline*}
which gives rise to the inequality 
\begin{equation} 
  F(C(t)) \leq 
  (|c_{00}(0)c_{11}(0)|+|c_{01}(0)c_{01}(0)|)^{2}
  =: D(C(0)). 
  \label{estimate of F}
\end{equation}
Eq.\! \eqref{estimate of F} implies that 
the quantity $D(C(0))$ determines whether 
the state driven by \eqref{NMR model} with the initial state $C(0)$ reaches 
a maximally entangled state without any controls or not. 

Let us be reminded of the fact that the measure $F$ ranges over 
$0\leq F(C) \leq 1/4$ for the two-qubit, where 
the minimum or the maximum occurs, according to whether $C$ is separable or 
maximally entangled. 
If $D(C(0))=1/4$, the state $C(t)$ evolves to reach 
a maximally entangled state without control even if $C(0)$ is separable. 
For instance, starting with the separable state 
\begin{equation} 
  C(0) = 
  \frac{1}{2}
  \begin{pmatrix} 1 & 1 \\ 1 & 1 \end{pmatrix}, 
  \label{example of C(0): 1}
\end{equation}
the solution curve $C(t)$ passes through the maximally entangled state 
at $t=\pi/J$, 
\begin{equation*}
  C\left( \frac{\pi}{J} \right) = 
  \frac{1}{2}
  \begin{pmatrix} 
    e^{-i\pi/4} & e^{i\pi/4} \\
    e^{i\pi/4} & e^{-i\pi/4} 
  \end{pmatrix}. 
\end{equation*}
We note in addition that $F(C(t))=(1-\cos Jt)/8$ in this case. 
In contrast with this, if $D(C(0))=0$, 
then $C(0)$ is separable, and the state $C(t)$ is always separable 
because of $0 \leq F(C(t)) \leq D(C(0)) = 0$. 
An example of such an initial state is 
\begin{equation} 
  C(0) = 
  \diag{e^{i\theta}, 0}. 
  \label{example of C(0): 2}
\end{equation}
This initial state never gets entangled without controls. 
From these facts, we observe that the separable states 
$\frac12(|0\rangle +|1\rangle) \otimes (|0\rangle +|1\rangle)$ and 
$e^{i\theta}|0\rangle\otimes |0\rangle$ are of different 
nature with respect to the drift Hamiltonian.

\subsection{Behaviours of solutions in the two-qubit NMR system}
We turn to the controlled system \eqref{NMR model} with 
$\xi_1$ and $\xi_2$ non-vanishing constants. 
Since the associated spin graph is connected, 
the two-qubit NMR system is controllable, 
so that any separable state can get entangled. 
This implies that the coupling between the drift and control 
vector fields must occur. 

We solve the Schr\"odinger equation \eqref{NMR model} as 
a power series of the coupling constant $J>0$. 
As in the previous section, we assume that 
the $J$ is small enough, and that 
the solution $C(t)$ of \eqref{NMR model} can be expanded
into a power series in $J$ as $C(t)=\sum_{n=0}^{\infty}J^{n}C_{n}(t)$. 
Substituting this into \eqref{NMR model} brings about 
the series of differential equations for $C_{n}$, 
\begin{equation}
  \left\{
  \begin{aligned}
    \dot{C}_{0} 
    &= 
    \quad \quad \quad \quad \quad \quad \,\,\, 
    \xi_{1}C_{0} + C_{0}\xi_{2}^{T}, \\
    \dot{C}_{n} 
    &=  
    -i\sigma_{3}C_{n-1}\sigma_{3}^{T} 
    +\xi_{1}C_{n} + C_{n}\xi_{2}^{T}, 
    \quad 
    n \geq 1. 
  \end{aligned}
  \right.
  \label{eq: expansion}
\end{equation}
These equations can be integrated inductively, 
like \eqref{two-param eq: expansion}. 
In fact, from the above equations, differential equations for 
$e^{-t\xi_{1}}C_{n}(t)e^{-t\xi_{2}^{T}}$ 
are easily obtained  and integrated to give 
\begin{align}
  & 
  e^{-t\xi_{1}} C(t) e^{-t\xi_{2}^{T}}
  =
  \sum_{n=0}^{\infty} 
  J^{n} e^{-t\xi_{1}} C_{n}(t) e^{-t\xi_{2}^{T}} \notag\\
  =& \,
  \sum_{n=0}^{\infty} 
  (-iJ)^{n}
  \int_{0}^{t} \!\! ds_{n} \, 
  \Ad_{e^{-s_{n}\xi_{1}}}(\sigma_{3}) 
  \int_{0}^{s_{n}} \!\!\!\! ds_{n-1} \, 
  \Ad_{e^{-s_{n-1}\xi_{1}}}(\sigma_{3}) 
  \cdots 
  \int_{0}^{s_{2}} \!\! ds_{1} \, 
  \Ad_{e^{-s_{1}\xi_{1}}}(\sigma_{3}) \notag\\
  & \quad \quad \quad \quad \quad \quad \quad \quad 
  \times 
  C(0) \, 
  \Ad_{e^{s_{1}\xi_{2}^{T}}}(\sigma_{3}^{T}) 
  \cdots 
  \Ad_{e^{s_{n-1}\xi_{2}^{T}}}(\sigma_{3}^{T}) 
  \Ad_{e^{s_{n}\xi_{2}^{T}}}(\sigma_{3}^{T}), 
  \label{power series of J}
\end{align}
where we have used the assumption that 
$C(0) = \sum_{n=0}^{\infty}J^{n}C_{n}(0)$. 
The following proposition is easy to prove. 
\begin{Prop} 
The power series \eqref{power series of J} uniformly converges to 
a function $P(t)$, with which the solution of 
the NMR system \eqref{NMR model} is 
given by $C(t) = e^{t\xi_{1}}P(t)e^{t\xi_{2}^{T}}$. 
\end{Prop}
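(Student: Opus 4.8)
The plan is to establish two things in turn: first that the series \eqref{power series of J} converges uniformly on every compact time interval, and second that its sum $P(t)$, when conjugated back by $e^{t\xi_1}\otimes e^{t\xi_2}$, solves \eqref{NMR model}. For the convergence, I would estimate the $n$-th term in the operator norm. Since $\xi_1\in\frak{su}(2)$ and $\xi_2\in\frak{su}(2)$ are skew-Hermitian, the exponentials $e^{-s\xi_1}$ and $e^{s\xi_2^T}$ are unitary, so each adjoint factor $\Ad_{e^{-s\xi_1}}(\sigma_3)$ and $\Ad_{e^{s\xi_2^T}}(\sigma_3^T)$ has the same norm as $\sigma_3$, namely a fixed constant (in fact $\tfrac12$ in operator norm, but any bound will do). Hence the $n$-th summand is bounded by $J^n \, c^{2n} \, \|C(0)\|$ times the volume of the iterated simplex $\{0\le s_1\le\cdots\le s_n\le t\}$, which is $t^n/n!$. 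This gives the majorant $\|C(0)\|\sum_n (c^2 J t)^n/n!=\|C(0)\|e^{c^2 J t}$, an entire function, so the series converges absolutely and uniformly on $[0,T]$ for any $T>0$, defining a smooth $P(t)$; moreover one may differentiate term by term.

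Next I would verify that $C(t):=e^{t\xi_1}P(t)e^{t\xi_2^T}$ (equivalently $(e^{t\xi_1}\otimes e^{t\xi_2})P(t)$ in the notation of Section~2) solves \eqref{NMR model}. The cleanest route is to recognize that $P(t)$ is by construction the sum $\sum_n J^n Q_n(t)$, where the $Q_n$ satisfy the inductive integral relations obtained from \eqref{eq: expansion} by the substitution $Q_n(t)=e^{-t\xi_1}C_n(t)e^{-t\xi_2^T}$, exactly as in the general $N$-parameter construction of Proposition~\ref{Prop: power series in general} specialized to $N=1$, $\Sigma^{(1)}_1=\Sigma^{(1)}_2=\sigma_3$, $\varepsilon_1=J$. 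One checks directly that $P(t)$ satisfies $\dot P(t)=-iJ\,\Ad_{e^{-t\xi_1}}(\sigma_3)\,P(t)\,\Ad_{e^{t\xi_2^T}}(\sigma_3^T)$ with $P(0)=C(0)$: differentiating the uniformly convergent series term by term, the derivative of the $n$-th iterated integral produces the outermost integrand evaluated at $s_n=t$ times the $(n-1)$-st term, which telescopes to exactly $-iJ\,\Ad_{e^{-t\xi_1}}(\sigma_3)$ applied on the left and $\Ad_{e^{t\xi_2^T}}(\sigma_3^T)$ on the right of $P(t)$. Then a short computation using $\frac{d}{dt}e^{t\xi_1}=\xi_1 e^{t\xi_1}$ and $\frac{d}{dt}e^{t\xi_2^T}=e^{t\xi_2^T}\xi_2^T$ gives
\begin{equation*}
  \dot C(t)=\xi_1 C(t)+C(t)\xi_2^T+e^{t\xi_1}\dot P(t)e^{t\xi_2^T}
  =\xi_1 C(t)+C(t)\xi_2^T-iJ\sigma_3 C(t)\sigma_3^T,
\end{equation*}
which is \eqref{NMR model}, and $C(0)=P(0)=\sum_n J^n C_n(0)$ matches the prescribed initial data.

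I expect the only real point requiring care to be the term-by-term differentiation of \eqref{power series of J}: one must justify interchanging $\frac{d}{dt}$ with the infinite sum, for which the majorant above together with a uniform bound on the derivatives of the partial sums on compact intervals suffices (the derivative series is again dominated by an entire majorant of the same shape). Everything else — the unitarity argument for the norm bounds, the simplex volume $t^n/n!$, and the telescoping identity for $\dot P$ — is routine. I would also remark in passing that $P(t)$ is precisely $(e^{-t\xi_1}\otimes e^{-t\xi_2})C(t)$ and that, since $C(t)$ stays on $M$ by unitary invariance of $\tr(C^\ast C)$ under the flow of the skew-Hermitian generators, no separate check of the normalization constraint is needed.
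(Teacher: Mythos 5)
Your argument is correct and is exactly the standard proof the paper omits (it only remarks that the proposition is ``easy to prove''): the unitary bound on the adjoint factors, the simplex volume $t^{n}/n!$ giving an entire majorant, the telescoping identity $\dot P(t)=-iJ\,\Ad_{e^{-t\xi_{1}}}(\sigma_{3})\,P(t)\,\Ad_{e^{t\xi_{2}^{T}}}(\sigma_{3}^{T})$ obtained by termwise differentiation, and the product rule recovering \eqref{NMR model}. No gaps.
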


So far we have found the solution as the power series 
\eqref{power series of J} in the coupling constant $J$. 
We now analyze the solution in order to see in detail that 
the coupling between the drift and control vectors actually occurs 
to make the NMR system more entangled. 
We denote the $n$-th order term in $J$ of 
\eqref{power series of J} by $P_{n}(t)$. 
Then, the solution is expressed as 
$C(t) = \sum_{n=0}^{\infty} J^{n} e^{t\xi_{1}}P_{n}(t) e^{t\xi_{2}^{T}}$. 
This explains how the controls $\xi_{1}$ and $\xi_{2}$ contribute to 
entanglement promotion of the system. 
The zeroth-order term does not 
make the system more entangled because of 
$F(e^{t\xi_{1}}P_{0}(0)e^{t\xi_{2}^{T}}) = F(C(0))$. 
Contrarily, the higher-order terms in $J$ are expected to 
make the NMR system entangled in general. 
We look into \eqref{power series of J} in detail. 
The first-order term $P_{1}(t)$ of \eqref{power series of J} is 
put in the form $(\mathcal{T}_{1}C(0))(t)$. Then, 
as is seen from \eqref{expanded Tk}, $P_1(t)$ is expressed as 
\begin{align}
  P_{1}(t) 
  = & 
  -tX_{i\sigma_{3}\otimes\sigma_{3}}(C(0)) 
  + 
  \frac{t^{2}}{2!}\Big(
    [X_{\xi_{1}\otimes I}, -X_{i\sigma_{3}\otimes\sigma_{3}}]
    + 
    [X_{I\otimes\xi_{2}}, -X_{i\sigma_{3}\otimes\sigma_{3}}]
  \Big)\Big|_{C(0)} \notag\\
  & 
  + \frac{t^{3}}{3!} \Big(
      [X_{\xi_{1}\otimes I},
      [X_{\xi_{1}\otimes I}, -X_{i\sigma_{3}\otimes\sigma_{3}}]]
      +
      [X_{\xi_{1}\otimes I},
      [X_{I\otimes\xi_{2}}, -X_{i\sigma_{3}\otimes\sigma_{3}}]] \notag\\
  & \quad \quad 
      + 
      [X_{I\otimes\xi_{2}}, 
      [X_{\xi_{1}\otimes I}, -X_{i\sigma_{3}\otimes\sigma_{3}}]]
      + 
      [X_{I\otimes\xi_{2}}, 
      [X_{I\otimes\xi_{2}}, -X_{i\sigma_{3}\otimes\sigma_{3}}]]
    \,\Big)\Big|_{C(0)}
    + \cdots .
  \label{first-order term}
\end{align}
This shows that the drift and control vector fields are coupled indeed. 
We note that 
$[X_{I\otimes\xi_{2}}, $
$[X_{\xi_{1}\otimes I}, -X_{i\sigma_{3}\otimes\sigma_{3}}]]=
[X_{\xi_{1}\otimes I}, 
[X_{I\otimes \xi_{2}}, -X_{i\sigma_{3}\otimes\sigma_{3}}]]$ 
by the Jacobi identity and that 
$[X_{\xi_{1}\otimes I}$, $X_{I\otimes\xi_{2}}]=0$. 

By a similar computation, the second-order term $P_{2}(t)$ of 
\eqref{power series of J} is written out as 
\begin{align*} 
  & P_{2}(t) = (\mathcal{T}_1\circ \mathcal{T}_1C(0))(t) \\
  =& 
  \frac{t^{2}}{2!} \,
  (-i\sigma_{3}(-i\sigma_{3}C(0)\sigma_{3}^{T})\sigma_{3}^{T}) 
  -i \frac{t^{3}}{3!} \, 
  \sigma_{3}\Big(
    [\xi_{1},\sigma_{3}]C(0)\sigma_{3}^{T}
    +
    \sigma_{3}C(0)[\xi_{2},\sigma_{3}]^{T}
  \Big)\sigma_{3}^{T} 
  + \cdots. 
\end{align*}
We formally denote the first and second terms of 
the right-hand side of the above equation by 
$\frac{t^{2}}{2!}(-X_{i\sigma_{3}\otimes\sigma_{3}})^{2}|_{C(0)}$ and 
$\frac{t^{3}}{3!}(-X_{i\sigma_{3}\otimes\sigma_{3}})
([X_{\xi_{1}\otimes I}, -X_{i\sigma_{3}\otimes\sigma_{3}}]+
[X_{I\otimes\xi_{2}}, -X_{i\sigma_{3}\otimes\sigma_{3}}])|_{C(0)}$, 
respectively. 
The first term 
$\frac{t^{2}}{2!}(-X_{i\sigma_{3}\otimes\sigma_{3}})^{2}$ 
comes from the Taylor expansion of the flow 
$\varphi_{-i\sigma_{3}\otimes\sigma_{3}}^{t}$ 
with respect to $t$. 

It is inductively shown that 
the $n$-th order term $P_{n}(t)$ is expanded into a power series of $t$ 
with the lowest-order term 
$\frac{t^{n}}{n!}(-X_{i\sigma_{3}\otimes\sigma_{3}})^{n}$, 
which comes from the Taylor expansion 
$\varphi_{-i\sigma_{3}\otimes\sigma_{3}}^{t}(C(0)) = 
\sum_{n=0}^{\infty}
\frac{t^{n}}{n!}(-X_{i\sigma_{3}\otimes\sigma_{3}})^{n}|_{C(0)}$. 
Hence, we have 

\begin{Prop}
If $t$ is sufficiently small, 
the solution $C(t)$ of \eqref{NMR model} allows of the approximation of 
the form 
\begin{align}
  &
  e^{-t\xi_{1}}C(t)e^{-t\xi_{2}^{T}} 
  - 
  \varphi_{-i\sigma_{3}\otimes\sigma_{3}}^{Jt}(C(0)) \notag\\
  \sim & \, 
  \frac{Jt^{2}}{2!}
  \Big(
    [X_{\xi_{1}\otimes I}, -X_{i\sigma_{3}\otimes\sigma_{3}}]
    + 
    [X_{I\otimes\xi_{2}}, -X_{i\sigma_{3}\otimes\sigma_{3}}]
  \Big)\Big|_{C(0)} \notag\\
  & 
  + \frac{J^{2}t^{3}}{3!} (-X_{i\sigma_{3}\otimes\sigma_{3}})
  \Big(  
    [X_{\xi_{1}\otimes I}, -X_{i\sigma_{3}\otimes\sigma_{3}}]
    + 
    [X_{I\otimes\xi_{2}}, -X_{i\sigma_{3}\otimes\sigma_{3}}]    
  \Big) \biggr|_{C(0)} \notag\\
  & 
  + \frac{Jt^{3}}{3!} \Big(
      [X_{\xi_{1}\otimes I},
      [X_{\xi_{1}\otimes I}, -X_{i\sigma_{3}\otimes\sigma_{3}}]]
      +
      [X_{\xi_{1}\otimes I},
      [X_{I\otimes\xi_{2}}, -X_{i\sigma_{3}\otimes\sigma_{3}}]] \notag\\
  & \quad \quad \quad 
      + 
      [X_{I\otimes\xi_{2}}, 
      [X_{\xi_{1}\otimes I}, -X_{i\sigma_{3}\otimes\sigma_{3}}]]
      + 
      [X_{I\otimes\xi_{2}}, 
      [X_{I\otimes\xi_{2}}, -X_{i\sigma_{3}\otimes\sigma_{3}}]]
  \,\Big)\Big|_{C(0)} 
  + \cdots,  
  \label{approximation of C}
\end{align}
where $\varphi_{-i\sigma_{3}\otimes\sigma_{3}}^{Jt}(C(0))$ is 
the solution of the uncontrolled system 
$dC/dt = -JX_{i\sigma_{3}\otimes\sigma_{3}}$ 
with the initial state $C(0)$. 
\end{Prop}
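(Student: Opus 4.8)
The plan is to bootstrap from the preceding Proposition, which already identifies $e^{-t\xi_1}C(t)e^{-t\xi_2^T}$ with the uniformly convergent series $P(t)$ of \eqref{power series of J}. I would write $P(t)=\sum_{n\ge 0}J^nP_n(t)$ with $P_n$ the coefficient of $J^n$; from \eqref{power series of J} and \eqref{eq: T_k} one has $P_n=\mathcal{T}_1^{\circ n}C(0)$, where $\mathcal{T}_1$ is the two-qubit operator with $\Sigma_1=\Sigma_2=\sigma_3$ (the same $\mathcal{T}_1$ used above to write $P_1$ and $P_2$). The whole task is then to sort these terms by combined order in the two independent small parameters $J$ and $t$, exploiting the fact that the uncontrolled flow $\varphi^{Jt}_{-i\sigma_3\otimes\sigma_3}(C(0))$ equals $\sum_{n\ge 0}\frac{(Jt)^n}{n!}(-X_{i\sigma_3\otimes\sigma_3})^n|_{C(0)}$.

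The key preliminary step I would carry out is to prove, by induction on $n$, that $P_n(t)=\frac{t^n}{n!}(-X_{i\sigma_3\otimes\sigma_3})^n|_{C(0)}+O(t^{n+1})$ as $t\to 0$. The cases $n=0,1,2$ are already visible in $P_0(t)=C(0)$, in \eqref{first-order term}, and in the expansion of $P_2(t)$ displayed above. For the inductive step, I would apply $\mathcal{T}_1$ to a function whose leading term is $\frac{s^{n-1}}{(n-1)!}(-X_{i\sigma_3\otimes\sigma_3})^{n-1}|_{C(0)}$; since the kernel $\Ad_{e^{-s\xi_1}\otimes e^{-s\xi_2}}(\sigma_3\otimes\sigma_3)$ acts at $s=0$ as $X\mapsto\sigma_3X\sigma_3^T$, the integral \eqref{eq: T_k} produces $-i\sigma_3\bigl[(-X_{i\sigma_3\otimes\sigma_3})^{n-1}C(0)\bigr]\sigma_3^T\cdot\frac{t^n}{n!}+O(t^{n+1})$, and $-i\sigma_3(\,\cdot\,)\sigma_3^T=-X_{i\sigma_3\otimes\sigma_3}(\,\cdot\,)$ closes the induction.

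With this in hand, $P(t)-\varphi^{Jt}_{-i\sigma_3\otimes\sigma_3}(C(0))=\sum_{n\ge 1}J^n\bigl(P_n(t)-\frac{t^n}{n!}(-X_{i\sigma_3\otimes\sigma_3})^n|_{C(0)}\bigr)$, and the $n$-th summand now starts at order $t^{n+1}$; so every contribution with $n\ge 3$ is $O(Jt^4)$, and the parts of the $n=1,2$ summands of order $t^4$ or higher are likewise $O(Jt^4)$ and $O(J^2t^4)$. All of these I would absorb into the remainder $\cdots$. It then only remains to read off the surviving terms of order $t^2$ and $t^3$ from $n=1$ and $n=2$: for $n=1$, \eqref{first-order term} (itself obtained from \eqref{expanded Tk}) gives, once the $-tX_{i\sigma_3\otimes\sigma_3}(C(0))$ term is cancelled against the $n=1$ Taylor term of the flow, the block $\frac{Jt^2}{2!}\bigl([X_{\xi_1\otimes I},-X_{i\sigma_3\otimes\sigma_3}]+[X_{I\otimes\xi_2},-X_{i\sigma_3\otimes\sigma_3}]\bigr)|_{C(0)}$ and the $\frac{Jt^3}{3!}$ double-bracket block; for $n=2$, the displayed expansion of $P_2(t)$, once its $\frac{t^2}{2!}(-X_{i\sigma_3\otimes\sigma_3})^2|_{C(0)}$ term is cancelled against the $n=2$ Taylor term of the flow, supplies $\frac{J^2t^3}{3!}(-X_{i\sigma_3\otimes\sigma_3})\bigl([X_{\xi_1\otimes I},-X_{i\sigma_3\otimes\sigma_3}]+[X_{I\otimes\xi_2},-X_{i\sigma_3\otimes\sigma_3}]\bigr)|_{C(0)}$. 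Collecting these three blocks reproduces \eqref{approximation of C}.

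All of the real content sits in this order-bookkeeping, so the main thing I expect to have to watch is that no high power of $J$ can feed a low power of $t$; the induction of the second paragraph is precisely what rules that out, and the only other ingredient is the translation of the $\Ad$-expansions into iterated Lie brackets of the vector fields via Proposition \ref{Prop: comp of brackets}, already performed in \eqref{expanded Tk}. Beyond keeping the nested commutator expressions straight while collecting coefficients, I do not anticipate a genuine obstacle.
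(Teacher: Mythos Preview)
Your proposal is correct and follows essentially the same route as the paper: the paper's argument is the discussion immediately preceding the proposition, which expands $P_1(t)$ and $P_2(t)$ in $t$ and then asserts (without details) that ``it is inductively shown that the $n$-th order term $P_n(t)$ is expanded into a power series of $t$ with the lowest-order term $\frac{t^n}{n!}(-X_{i\sigma_3\otimes\sigma_3})^n$,'' after which subtracting the Taylor series of the uncontrolled flow leaves exactly the displayed brackets. You have simply supplied the induction the paper omits and made the order-bookkeeping explicit; the only cosmetic point is that for $n\ge 3$ the remainder is $O(J^{n}t^{n+1})$ rather than $O(Jt^4)$, but this is harmless since all such terms are absorbed into the ``$\cdots$'' anyway.
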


\subsection{Bases of vertical and horizontal subspaces}
So far we have shown that the drift and control vector fields 
are coupled. We now wish to show that the coupling generates vector fields 
with non-vanishing horizontal components. 
To this end, we have to look into vertical and horizontal vector fields 
in detail. 

Since the subspaces $V_C$ and $H_C$ depend on the stratum to which 
$C$ belongs, 
we make a brief review of a stratification of the state space 
$M$ for the two-qubit system after \cite{Iw07-2}. 
According to singular values $\lambda_{1}(C) \geq \lambda_{2}(C)$ 
of $C \in M$, 
the state space $M$ of the two-qubit system is stratified into 
three $G$-invariant strata as $M = M_{0} \sqcup M_{1} \sqcup M_{2}$, 
where 
\begin{equation}
  \begin{aligned}
    M_{0}
    &= 
    \{ C \in M \,|\, \lambda_{1}(C) = 1, \lambda_{2}(C) = 0 \}, \\
    M_{1}
    &= 
    \{ C \in M \,|\, \lambda_{1}(C) > \lambda_{2}(C) > 0 \}, \\
    M_{2} 
    &= 
    \{ C \in M \,|\, \lambda_{1}(C) = \lambda_{2}(C) = 1/\sqrt{2} \}. 
  \end{aligned}
  \label{strata}
\end{equation}
The strata $M_{0}$ and $M_{2}$ are 
the sets of separable and maximally entangled states, respectively. 

We now deals with bases of the subspaces $V_{C}$ and $H_{C}$. 
We take up a typical matrix $\Lambda=\diag{\lambda_{1},\lambda_{2}}$ 
with $\lambda_{1} \geq \lambda_{2}$, 
to which all the matrices with the same singular values are translated 
by the $\U{2}\otimes \U{2}$ action. 
Let us start with the vertical subspace $V_{\Lambda}$. 
Proposition \ref{Prop: isotropy subgrp and orbit} with \eqref{strata}
implies that 
\begin{equation}
  \dim V_{\Lambda} = 
  \begin{cases}
    \,\, 6, & \text{ if } \Lambda \in M_{1}, \\
    \,\, 5, & \text{ if } \Lambda \in M_{0}, \\
    \,\, 4, & \text{ if } \Lambda \in M_{2}. 
  \end{cases}
  \label{dim of V}
\end{equation}\
Tangent vectors $X_{iI\otimes I}(\Lambda), 
X_{i\sigma_{j}\otimes I}(\Lambda)$ and 
$X_{iI\otimes\sigma_{j}}(\Lambda)$ for 
$j=1,2,3$, defined in \eqref{fundamental vector}, 
span the tangent space $V_{\Lambda}$. 
However, some of these vectors coincide with one another,  

\vspace{1mm}
\noindent 
(1) 
$X_{i\sigma_{3}\otimes I}(\Lambda) 
= 
X_{iI\otimes\sigma_{3}}(\Lambda)$ 
at any diagonal matrix $\Lambda$, 

\noindent 
(2) 
$X_{i\sigma_{3}\otimes I}(\Lambda) 
= 
X_{iI\otimes\sigma_{3}}(\Lambda) 
= 
X_{iI\otimes I}(\Lambda)$ 
at $\Lambda = \diag{1,0} \in M_{0}$, 

\noindent 
(3) 
$X_{i\sigma_{1}\otimes I}(\Lambda) 
= 
X_{iI\otimes\sigma_{1}}(\Lambda)$ 
and 
$X_{i\sigma_{2}\otimes I}(\Lambda) 
= 
-X_{iI\otimes\sigma_{2}}(\Lambda)$, 
at $\Lambda = I/\sqrt{2} \in M_{2}$. 

\vspace{1mm}
\noindent 
Thus, respective bases of $V_{\Lambda}$ are obtained as follows: 

\begin{Prop}
\label{Prop: basis of V}
Let $\Lambda=\diag{\lambda_1,\lambda_2}$. \\
\noindent 
(i) If $\Lambda$ lies in the principal stratum $M_{1}$, then  
$V_{\Lambda}$ has the basis 
\begin{equation*}
  X_{iI\otimes I}(\Lambda),\, 
  X_{i\sigma_{3}\otimes I}(\Lambda)
  =
  X_{iI\otimes \sigma_{3}}(\Lambda),\, 
  X_{i\sigma_{1}\otimes I}(\Lambda),\, 
  X_{iI\otimes \sigma_{1}}(\Lambda),\, 
  X_{i\sigma_{2}\otimes I}(\Lambda),\, 
  X_{iI\otimes \sigma_{2}}(\Lambda). 
\end{equation*}

\noindent 
(ii) If $\Lambda$ is a separable state, {\it i.e.}, 
$\Lambda=\diag{1,0} \in M_{0}$, 
then $V_{\Lambda}$ has the basis 
\begin{equation*}
  X_{iI\otimes I}(\Lambda)
  =
  X_{i\sigma_{3}\otimes I}(\Lambda)
  =
  X_{iI\otimes \sigma_{3}}(\Lambda),\,
  X_{i\sigma_{1}\otimes I}(\Lambda),\, 
  X_{iI\otimes \sigma_{1}}(\Lambda),\, 
  X_{i\sigma_{2}\otimes I}(\Lambda),\, 
  X_{iI\otimes \sigma_{2}}(\Lambda). 
\end{equation*}

\noindent 
(iii) If $\Lambda$ is maximally entangled, {\it i.e.}, 
$\Lambda=I/\sqrt{2} \in M_{2}$, 
then $V_{\Lambda}$ has the basis 
\begin{equation*}
  X_{iI\otimes I}(\Lambda),\, 
  X_{i\sigma_{3}\otimes I}(\Lambda)
  =
  X_{iI\otimes \sigma_{3}}(\Lambda),\, 
  X_{i\sigma_{1}\otimes I}(\Lambda)
  =
  X_{iI\otimes \sigma_{1}}(\Lambda),\, 
  X_{i\sigma_{2}\otimes I}(\Lambda)
  =
  -X_{iI\otimes\sigma_{2}}(\Lambda). 
\end{equation*}
Since the $G$-action \eqref{G-action} is isometric, 
the singular value decomposition \eqref{singular value decomposition}, 
viewed as a map, $\Lambda \mapsto C = (g\otimes h)\cdot\Lambda$,  
gives rise to a basis of $V_{C}$ from that of $V_{\Lambda}$ 
by the differential map $(g\otimes h)_*$. 
\end{Prop}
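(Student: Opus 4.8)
The plan is to compute the fundamental vectors $X_{iI\otimes I}(\Lambda)$, $X_{i\sigma_j\otimes I}(\Lambda)$ and $X_{iI\otimes\sigma_j}(\Lambda)$ for $j=1,2,3$ explicitly from the defining formula \eqref{fundamental vector}, namely $X_{\xi\otimes I_{2^m}+I_{2^\ell}\otimes\eta}(C)=\xi C+C\eta^T$, specialized to $C=\Lambda=\diag{\lambda_1,\lambda_2}$, $\ell=m=1$. Concretely, $X_{i\sigma_j\otimes I}(\Lambda)=i\sigma_j\Lambda$ and $X_{iI\otimes\sigma_j}(\Lambda)=\Lambda\,(i\sigma_j)^T$, while $X_{iI\otimes I}(\Lambda)=i\Lambda$. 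Plugging in the $2\times2$ Pauli matrices $\sigma_1,\sigma_2,\sigma_3$ gives six explicit $2\times2$ complex matrices whose entries are linear in $\lambda_1,\lambda_2$. First I would record these six matrices; from them, relations (1), (2), (3) in the text are immediate: e.g.\ $i\sigma_3\Lambda=\frac{i}{2}\diag{\lambda_1,-\lambda_2}=\Lambda(i\sigma_3)^T$ since $\sigma_3$ is real diagonal, which is relation~(1); at $\Lambda=\diag{1,0}$ this further equals $\frac{i}{2}\diag{1,0}=i\Lambda\cdot(\text{up to the scalar }1/2)$—more carefully one checks $i\sigma_3\diag{1,0}=\frac i2\diag{1,0}$ and $iI\diag{1,0}=i\diag{1,0}$ so they are proportional, hence span the same line, giving relation~(2); at $\Lambda=I/\sqrt2$ one has $i\sigma_1\Lambda=\frac{i}{2\sqrt2}\begin{pmatrix}0&1\\1&0\end{pmatrix}=\Lambda(i\sigma_1)^T$ because $\sigma_1$ is real symmetric, and $i\sigma_2\Lambda=\frac{1}{2\sqrt2}\begin{pmatrix}0&-1\\1&0\end{pmatrix}$ while $\Lambda(i\sigma_2)^T=\frac{1}{2\sqrt2}\begin{pmatrix}0&1\\-1&0\end{pmatrix}$, which are negatives of each other, giving relation~(3).

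Next I would establish that in each stratum the listed vectors are linearly independent and therefore form a basis. The count is forced: by \eqref{dim of V}, $\dim V_\Lambda$ is $6$, $5$, $4$ in $M_1$, $M_0$, $M_2$ respectively, and the lists proposed in (i), (ii), (iii) have exactly that many elements after the coincidences are imposed. So it suffices to verify linear independence of each list, which is a direct rank computation on the explicit $2\times2$ matrices: for $M_1$ (say $\lambda_1\neq\lambda_2$, both positive) the six matrices $i\Lambda$, $i\sigma_3\Lambda$, $i\sigma_1\Lambda$, $\Lambda(i\sigma_1)^T$, $i\sigma_2\Lambda$, $\Lambda(i\sigma_2)^T$ are checked to be independent over $\mathbb R$ by separating real and imaginary parts of the four off-diagonal slots and the two diagonal slots; the independence of the diagonal pair uses $\lambda_1\neq-\lambda_2$ (true since both are positive) and of the off-diagonal pairs uses $\lambda_1\neq\lambda_2$. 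For $M_0$ with $\Lambda=\diag{1,0}$ the vectors $i\sigma_2\Lambda$ and $\Lambda(i\sigma_2)^T$ do not coincide (one has a nonzero $(2,1)$ entry, the other a nonzero $(1,2)$ entry), and likewise for $\sigma_1$; independence of the five follows. For $M_2$ the four vectors $i\Lambda$, $i\sigma_3\Lambda$, $i\sigma_1\Lambda$, $i\sigma_2\Lambda$ occupy, respectively, the two diagonal slots (as $i$ and $i$ with a sign), the symmetric off-diagonal combination, and the antisymmetric off-diagonal combination, and are manifestly independent.

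Finally, for the last sentence of the proposition—that $(g\otimes h)_*$ carries a basis of $V_\Lambda$ to a basis of $V_C$ when $C=(g\otimes h)\cdot\Lambda$—I would argue as follows. Since the $G$-action \eqref{G-action} is a diffeomorphism of $M$ mapping the orbit $\mathcal O_\Lambda$ onto $\mathcal O_C$, its differential $(g\otimes h)_*$ is a linear isomorphism $T_\Lambda M\to T_C M$ restricting to an isomorphism $V_\Lambda=T_\Lambda\mathcal O_\Lambda\to T_C\mathcal O_C=V_C$; isomorphisms send bases to bases, so the claim follows. (One can also see this at the level of fundamental vector fields: $(g\otimes h)_*X_{\zeta}(\Lambda)=X_{\Ad_{g\otimes h}\zeta}(C)$ for $\zeta\in\mathfrak g$, and $\Ad_{g\otimes h}$ is an automorphism of $\mathfrak g$, so it permutes a spanning set of fundamental vectors into a spanning set.) I do not anticipate a serious obstacle; the one point requiring a little care is keeping track of the transpose in $X_{iI\otimes\sigma_j}(\Lambda)=\Lambda(i\sigma_j)^T$—since $\sigma_1,\sigma_3$ are symmetric and $\sigma_2$ antisymmetric, the transpose is harmless for $j=1,3$ but introduces the crucial sign in relation~(3) for $j=2$—and making sure the linear-independence arguments in $M_1$ invoke exactly the strict inequalities $\lambda_1>\lambda_2>0$ that define that stratum.
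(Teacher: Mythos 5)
Your proposal is correct and follows essentially the same route as the paper: the paper likewise observes that the fundamental vectors attached to the basis $iI,i\sigma_1,i\sigma_2,i\sigma_3$ of $\mathfrak u(2)$ span $V_\Lambda$, records the coincidence relations (1)--(3), and matches the resulting count against the dimension formula \eqref{dim of V}. One small correction: the independence of the diagonal pair $i\,\diag{\lambda_1,\lambda_2}$ and $\tfrac{i}{2}\diag{\lambda_1,-\lambda_2}$ is governed by the determinant $-2\lambda_1\lambda_2$, so the relevant condition is $\lambda_1\lambda_2\neq 0$ (i.e.\ $\lambda_2>0$) rather than $\lambda_1\neq-\lambda_2$; this does not affect your conclusion since both singular values are strictly positive on $M_1$.
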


The next task is to find a basis of 
the horizontal subspace $H_{C}$. 
From \eqref{dim of V}, the dimension of $H_{C}$ proves to be 
\begin{equation}
\label{dim of H}
  \dim H_{C} = 
  \begin{cases}
    1, & \text{ if } C \in M_{1}, \\
    2, & \text{ if } C \in M_{0}, \\
    3, & \text{ if } C \in M_{2}. 
  \end{cases}
\end{equation}
The following proposition is easily verified 
by a straightforward computation. 

\begin{Prop}
\label{Prop: basis of H}
Let $\Lambda=\diag{\lambda_1,\lambda_2}$.\\
\noindent 
(i) If $\Lambda$ lies in the principal stratum $M_{1}$, 
then the horizontal subspace $H_{\Lambda}$ is 
a one-dimensional vector space spanned by 
\begin{equation*} 
  X_{i\sigma_{1}\otimes\sigma_{2}}(\Lambda)
  = X_{i\sigma_{2}\otimes\sigma_{1}}(\Lambda). 
\end{equation*}
 
\noindent 
(ii) If $\Lambda$ is separable, {\it i.e.}, $\Lambda\in M_{0}$, 
or $\Lambda = \diag{1,0}$, then $H_{\Lambda}$ has the basis
\begin{equation*}
    X_{i\sigma_{1}\otimes\sigma_{1}}(\Lambda),\, 
    X_{i\sigma_{1}\otimes\sigma_{2}}(\Lambda)
    = X_{i\sigma_{2}\otimes\sigma_{1}}(\Lambda). 
\end{equation*}

\noindent 
(iii) If $\Lambda$ is maximally entangled, {\it i.e.}, $\Lambda\in M_{2}$, 
or $\Lambda = I/\sqrt{2}$, then $H_{\Lambda}$ has the basis 
\begin{equation*}
    iX_{i\sigma_{1}\otimes I}(\Lambda), \, 
    iX_{i\sigma_{2}\otimes I}(\Lambda), \, 
    iX_{i\sigma_{3}\otimes I}(\Lambda).
\end{equation*}
A basis of $V_{C}$ are formed from that of $V_{\Lambda}$ 
by the differential map $(g\otimes h)_*$.
\end{Prop}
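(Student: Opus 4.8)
The plan is to verify each case by a short direct computation, working from the explicit description of the horizontal subspace established above, namely $H_C=\{X\in T_CM\mid XC^{\ast}-CX^{\ast}=0,\ C^{\ast}X-X^{\ast}C=0\}$, specialized to $C=\Lambda=\diag{\lambda_1,\lambda_2}$. Since $\Lambda$ is real, $\Lambda^{\ast}=\Lambda$, so for $X=(x_{jk})\in\mathbb{C}^{2\times2}$ the two defining equations read off entrywise as $\lambda_k x_{jk}=\lambda_j\overline{x_{kj}}$ and $\lambda_j x_{jk}=\lambda_k\overline{x_{kj}}$ for all $j,k$, while the tangent condition $\tr(X^{\ast}\Lambda+\Lambda X)=0$ becomes $\lambda_1\Real x_{11}+\lambda_2\Real x_{22}=0$; this is a small linear system whose solution space I resolve in each stratum. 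In parallel I evaluate the relevant vector fields via \eqref{fundamental vector}, e.g.\ $X_{i\sigma_1\otimes\sigma_2}(\Lambda)=i\sigma_1\Lambda\sigma_2^{T}$, $X_{i\sigma_2\otimes\sigma_1}(\Lambda)=i\sigma_2\Lambda\sigma_1^{T}$, $X_{i\sigma_1\otimes\sigma_1}(\Lambda)=i\sigma_1\Lambda\sigma_1^{T}$, and $X_{i\sigma_j\otimes I}(\Lambda)=i\sigma_j\Lambda$, and match them against the solution space.

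For $\Lambda\in M_1$, i.e.\ $\lambda_1>\lambda_2>0$, the diagonal equations force $x_{11},x_{22}\in\mathbb{R}$, the off-diagonal ones combine to $(\lambda_1^2-\lambda_2^2)x_{12}=0$ hence $x_{12}=x_{21}=0$, and the trace condition leaves a single real parameter; thus $\dim H_\Lambda=1$, consistently with \eqref{dim of H}. A short computation gives $X_{i\sigma_1\otimes\sigma_2}(\Lambda)=X_{i\sigma_2\otimes\sigma_1}(\Lambda)=\tfrac14\diag{-\lambda_2,\lambda_1}$, a nonzero element of that line, proving (i). For $\Lambda=\diag{1,0}\in M_0$ the entry equations force $x_{12}=x_{21}=0$ and $x_{11}\in\mathbb{R}$ while leaving $x_{22}\in\mathbb{C}$ free, and the trace condition then kills $x_{11}$, so $H_\Lambda=\{x_{22}E_{22}\mid x_{22}\in\mathbb{C}\}$ with $E_{22}=\diag{0,1}$, of real dimension $2$; since $X_{i\sigma_1\otimes\sigma_1}(\Lambda)=\tfrac{i}{4}E_{22}$ and $X_{i\sigma_1\otimes\sigma_2}(\Lambda)=X_{i\sigma_2\otimes\sigma_1}(\Lambda)=\tfrac14 E_{22}$ are $\mathbb{R}$-linearly independent, they form a basis, proving (ii). For $\Lambda=I/\sqrt2\in M_2$ both defining equations reduce to $X=X^{\ast}$, so before the trace constraint $H_\Lambda$ is the space of Hermitian $2\times2$ matrices; the trace constraint cuts it down to the traceless Hermitian matrices, of dimension $3$, and since $iX_{i\sigma_j\otimes I}(\Lambda)=-\tfrac{1}{\sqrt2}\sigma_j$ for $j=1,2,3$ are three $\mathbb{R}$-linearly independent traceless Hermitian matrices, they form a basis, proving (iii). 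The final transport sentence (the horizontal counterpart of the statement in Proposition \ref{Prop: basis of V}) follows because the $G$-action \eqref{G-action} is isometric and permutes the orbits, hence carries $V_C$ and its orthogonal complement $H_C$ equivariantly; applying $(g\otimes h)_{\ast}$ to the bases above yields bases of $H_{(g\otimes h)\Lambda}$.

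The only genuine obstacle is bookkeeping in the degenerate strata: one cannot simply specialize the generic ($M_1$) answer, since the horizontal line there is spanned by a \emph{diagonal} matrix whereas on $M_0$ the surviving horizontal directions sit in the lower-right corner and on $M_2$ they fill out the full traceless Hermitian part, so each stratum requires the linear system to be re-solved and the explicitly evaluated $X_{A\otimes B}(\Lambda)$ to be matched against it. None of the individual computations is deep, and the coincidences among fundamental vector fields recorded before Proposition \ref{Prop: basis of V}, while consistent with these horizontal computations, are not needed here.
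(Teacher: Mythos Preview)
Your argument is correct and is exactly the ``straightforward computation'' the paper alludes to without spelling out: you solve the defining linear system for $H_\Lambda$ entrywise in each stratum and then identify the explicit vectors $X_{A\otimes B}(\Lambda)$ with a basis of the solution space. The numerical evaluations ($X_{i\sigma_1\otimes\sigma_2}(\Lambda)=\tfrac14\diag{-\lambda_2,\lambda_1}$, etc.) and the dimension counts all check, and your remark on the $(g\otimes h)_\ast$-transport is the intended reading of the final sentence (the ``$V_C$'' there is a slip for ``$H_C$'').
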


From these propositions, we observe that 
when $\Lambda=I/\sqrt{2} \in M_{2} \cong \U{2}$, 
the vertical subspace $V_{\Lambda}$ is identified with  
the space $\frak{u}(2)$ of $2\times 2$ skew Hermitian matrices, 
and that the horizontal subspace $H_{\Lambda}$ with $\Lambda = I/\sqrt{2}$ 
is identified with the space $i\frak{su}(2)$ of traceless Hermitian matrices. 
Further, the basis vectors of $H_{\Lambda}$ given in 
Proposition \ref{Prop: basis of H} (iii) are 
alternatively expressed as 
\begin{align}
  & 
  iX_{i\sigma_{1}\otimes I}(\frac{1}{\sqrt{2}}I) 
  = 
  iX_{iI\otimes\sigma_{1}}(\frac{1}{\sqrt{2}}I) 
  = 
  2X_{i\sigma_{3}\otimes\sigma_{2}}(\frac{1}{\sqrt{2}}I)
  = 
  2X_{i\sigma_{2}\otimes\sigma_{3}}(\frac{1}{\sqrt{2}}I),  \nonumber \\
  & 
  iX_{i\sigma_{2}\otimes I}(\frac{1}{\sqrt{2}}I) 
  = 
  -iX_{iI\otimes\sigma_{2}}(\frac{1}{\sqrt{2}}I)
  = 
  2X_{i\sigma_{1}\otimes\sigma_{3}}(\frac{1}{\sqrt{2}}I)
  = 
  -2X_{i\sigma_{3}\otimes\sigma_{1}}(\frac{1}{\sqrt{2}}I), 
  \label{HorVecM2}   \\
  &
  iX_{i\sigma_{3}\otimes I}(\frac{1}{\sqrt{2}}I) 
  = 
  iX_{iI\otimes\sigma_{3}}(\frac{1}{\sqrt{2}}I) 
  = 
  2X_{i\sigma_{2}\otimes\sigma_{1}}(\frac{1}{\sqrt{2}}I)
  = 
  2X_{i\sigma_{1}\otimes\sigma_{2}}(\frac{1}{\sqrt{2}}I).  \nonumber
\end{align}

\begin{Rem}
The vector fields 
$X_{i\sigma_{2}\otimes\sigma_{1}}$ and 
$X_{i\sigma_{1}\otimes\sigma_{1}}$ 
are generated, by taking the Lie brackets among 
$X_{i\sigma_{3}\otimes\sigma_{3}}, X_{i\sigma_{1}\otimes I}$ and 
$X_{iI\otimes\sigma_{2}}$, 
\begin{equation}
  X_{i\sigma_{2}\otimes\sigma_{1}} 
  = 
  [X_{iI\otimes\sigma_{2}}, 
  [X_{i\sigma_{1}\otimes I}, X_{i\sigma_{3}\otimes\sigma_{3}}]],  
  \quad 
  X_{i\sigma_{1}\otimes\sigma_{1}} 
  = 
  -[X_{iI\otimes\sigma_{2}}, 
  [X_{iI\otimes\sigma_{2}}, X_{i\sigma_{3}\otimes\sigma_{3}}]],  
\end{equation}
respectively. 
The first and second equations of the above imply that, 
if doubly coupled, the drift and control vector fields generates 
a horizontal vector at $C=\Lambda\in M_1$ and 
another horizontal vector at $C=\Lambda\in M_0$, respectively. 
\end{Rem}

\subsection{The concurrence and control} 
We now investigate the right-hand side of 
\eqref{approximation of C} to examine the effect of the control on 
entanglement promotion. 
We suppose that 
the initial state $C(0)$ is a diagonal matrix 
$\Lambda=\diag{\lambda_{1},\lambda_{2}}$.  

The terms in the $\mathcal{O}(Jt^{2})$-term of the right-hand side of 
\eqref{approximation of C} are expanded as 
\begin{equation}
  \begin{array}{ll}
    & 
    [X_{i\xi_{1}\otimes I}, -X_{i\sigma_{3}\otimes\sigma_{3}}] 
    = 
    -x_{1}X_{i\sigma_{2}\otimes\sigma_{3}} 
    +y_{1}X_{i\sigma_{1}\otimes\sigma_{3}}, \\
    & 
    [X_{iI\otimes\xi_{2}}, -X_{i\sigma_{3}\otimes\sigma_{3}}] 
    = 
    -x_{2}X_{i\sigma_{3}\otimes\sigma_{2}} 
    +y_{2}X_{i\sigma_{3}\otimes\sigma_{1}},  
  \end{array}
  \label{second order terms}
\end{equation}
where $\xi_{\alpha} = ix_{\alpha}\sigma_{1} + iy_{\alpha}\sigma_{2}, 
\alpha=1,2,$ 
with $x_{\alpha}, y_{\alpha} \in \mathbb{R}$. 
These vector fields are all vertical at $\Lambda$ 
if $\Lambda \ne I/\sqrt{2}$. 
Indeed, the tangent vectors 
$X_{i\sigma_{2}\otimes\sigma_{3}}(\Lambda), X_{i\sigma_{1}\otimes\sigma_{3}}(\Lambda), 
X_{i\sigma_{3}\otimes\sigma_{2}}(\Lambda)$, 
and $X_{i\sigma_{3}\otimes\sigma_{1}}(\Lambda)$ are 
written as linear combinations of the vertical vectors 
$X_{i\sigma_{1}\otimes I}(\Lambda)$, $X_{i\sigma_{2}\otimes I}(\Lambda)$, 
$X_{iI\otimes\sigma_{1}}(\Lambda)$ and $X_{iI\otimes\sigma_{2}}(\Lambda)$. 
For instance, one has 
\begin{subequations}
\label{sigma23 13}
\begin{align}
  X_{i\sigma_{2}\otimes\sigma_{3}}(\Lambda) 
  &= 
  \frac{\lambda_{1}^{2}+\lambda_{2}^{2}}{2(\lambda_{1}^{2}-\lambda_{2}^{2})}
  X_{i\sigma_{2}\otimes I}(\Lambda)
  + 
  \frac{\lambda_{1}\lambda_{2}}{\lambda_{1}^{2}-\lambda_{2}^{2}}
  X_{iI\otimes\sigma_{2}}(\Lambda), \\
  X_{i\sigma_{1}\otimes\sigma_{3}}(\Lambda) 
  &= 
  \frac{\lambda_{1}^{2}+\lambda_{2}^{2}}{2(\lambda_{1}^{2}-\lambda_{2}^{2})}
  X_{i\sigma_{1}\otimes I}(\Lambda) 
  -
  \frac{\lambda_{1}\lambda_{2}}{\lambda_{1}^{2}-\lambda_{2}^{2}}
  X_{iI\otimes\sigma_{1}}(\Lambda), 
\end{align}
\end{subequations}
if $\Lambda \ne I/\sqrt{2}$. 
If $\Lambda = I/\sqrt{2}$, they are found to be horizontal 
on account of \eqref{HorVecM2}. 

The $\mathcal{O}(J^{2}t^{3})$-term is also vertical at $\Lambda$, 
as is shown by 
\begin{equation*}
  -X_{i\sigma_{3}\otimes\sigma_{3}}
  [X_{\xi_{1}\otimes I}, -X_{i\sigma_{3}\otimes\sigma_{3}}]
  = 
  -\frac{1}{8} X_{\xi_{1}\otimes I}, 
  \quad 
  -X_{i\sigma_{3}\otimes\sigma_{3}}
  [X_{I\otimes\xi_{2}}, -X_{i\sigma_{3}\otimes\sigma_{3}}]
  = 
  -\frac{1}{8} X_{I\otimes\xi_{2}}. 
\end{equation*}

The $\mathcal{O}(Jt^{3})$-term in \eqref{approximation of C} is 
significant for entanglement promotion. 
The vector fields 
$[X_{\xi_{1}\otimes I}, [X_{\xi_{1}\otimes I}, 
  -X_{i\sigma_{3}\otimes\sigma_{3}}]]$,\,\,  
$[X_{I\otimes\xi_{2}}, [X_{I\otimes\xi_{2}}, 
  -X_{i\sigma_{3}\otimes\sigma_{3}}]]$ and 
$[X_{\xi_{1}\otimes I}, [X_{I\otimes\xi_{2}}, 
  -X_{i\sigma_{3}\otimes\sigma_{3}}]]$
are expressed as 
\begin{gather*}
  [X_{\xi_{1}\otimes I}, 
  [X_{\xi_{1}\otimes I}, -X_{i\sigma_{3}\otimes\sigma_{3}}]] 
  =  
  (x_{1}^{2}+y_{1}^{2})X_{i\sigma_{3}\otimes\sigma_{3}}, \\
  [X_{I\otimes\xi_{2}}, 
  [X_{I\otimes\xi_{2}}, -X_{i\sigma_{3}\otimes\sigma_{3}}]]
  =  
  (x_{2}^{2}+y_{2}^{2})X_{i\sigma_{3}\otimes\sigma_{3}}, \\
  [X_{\xi_{1}\otimes I}, 
  [X_{I\otimes\xi_{2}}, -X_{i\sigma_{3}\otimes\sigma_{3}}]] 
  = 
  -x_{1}x_{2}X_{i\sigma_{2}\otimes\sigma_{2}}
  +x_{1}y_{2}X_{i\sigma_{2}\otimes\sigma_{1}} 
  +y_{1}x_{2}X_{i\sigma_{1}\otimes\sigma_{2}}
    -y_{1}y_{2}X_{i\sigma_{1}\otimes\sigma_{1}}, 
\end{gather*}
respectively. 
Since $C(0)=\Lambda$ is diagonal, 
one has 
$X_{i\sigma_{3}\otimes\sigma_{3}}(\Lambda) = X_{iI\otimes I}(\Lambda) \in V_{\Lambda}$, 
so that the tangent vectors 
$[X_{\xi_{1}\otimes I}, [X_{\xi_{1}\otimes I}, -X_{i\sigma_{3}\otimes\sigma_{3}}]]\big|_{\Lambda}$ 
and 
$[X_{I\otimes\xi_{2}}, [X_{I\otimes\xi_{2}}, -X_{i\sigma_{3}\otimes\sigma_{3}}]]\big|_{\Lambda}$ 
are vertical. 
In contrast with this, from Proposition \ref{Prop: basis of H} and Remark 
after it, it turns out that the tangent vector 
$[X_{\xi_{1}\otimes I}, [X_{I\otimes\xi_{2}}, -X_{i\sigma_{3}\otimes\sigma_{3}}]]\big|_{\Lambda}$
contains the horizontal components 
$X_{i\sigma_{2}\otimes\sigma_{1}}(\Lambda), 
X_{i\sigma_{1}\otimes\sigma_{2}}(\Lambda)$. 
In particular, if $\xi_{1}$ and $\xi_{2}$ are taken as 
$\xi_{1}=i\sigma_{1}$ and $\xi_{2}=i\sigma_{2}$, respectively, 
the tangent vector in question is horizontal, 
\begin{equation*}
  [X_{i\sigma_{1}\otimes I}, 
  [X_{iI\otimes\sigma_{2}}, -X_{i\sigma_{3}\otimes\sigma_{3}}]]
  \big|_{\Lambda} 
  = 
  X_{i\sigma_{2}\otimes\sigma_{1}}(\Lambda) 
  \in 
  H_{\Lambda}. 
\end{equation*}
This shows that the drift vector field 
$X_{i\sigma_{3}\otimes\sigma_{3}}$ and the control vector fields 
$X_{\xi_{1}\otimes I}$, $X_{I\otimes\xi_{2}}$ are coupled to generate 
a horizontal vector at $\Lambda$.  
We note further that the horizontal vector emerges at the third-order term 
in $t$, if $\Lambda \ne I/\sqrt{2}$. 

We wish to evaluate the entanglement to confirm that 
the two-qubit system gets more entangled. 
As our entanglement measure and the concurrence are equivalent 
for the two-qubit system on account of $\sqrt{F(C(t))} = |\det C(t)|$, 
we here use the concurrence as a measure. 
From Eq.\! \eqref{approximation of C}, 
the concurrence is approximately evaluated as 
\begin{equation}
  \sqrt{F(C(t))} 
  \sim 
  \sqrt{F(\varphi_{-i\sigma_{3}\otimes\sigma_{3}}^{Jt}(\Lambda))} 
  +
  \frac{Jt^{3}}{3!\sqrt{F(\Lambda)}} 
  (dF)_{\Lambda}\Big(
    [X_{\xi_{1}\otimes I},
    [X_{I\otimes\xi_{2}}, -X_{i\sigma_{3}\otimes\sigma_{3}}]] 
  \Big), 
  \label{approximation of concurrence} 
\end{equation}
where we have used the following facts; (i) 
$F(e^{-t\xi_{1}}C(t)e^{-t\xi_{2}^{T}}) = F(C(t))$, 
(ii) 
$[X_{\xi_{1}\otimes I}, -X_{i\sigma_{3}\otimes\sigma_{3}}]$ and 
$[X_{I\otimes\xi_{2}}, -X_{i\sigma_{3}\otimes\sigma_{3}}]$ 
are vertical at $\Lambda$ 
if $\Lambda$ is not the maximally entangled state $I/\sqrt{2}$, 
(iii) 
$-X_{i\sigma_3\otimes \sigma_3} [X_{\xi_1\otimes I}, 
  -X_{i\sigma_{3}\otimes \sigma_{3}}]]$ 
and 
$-X_{i\sigma_3\otimes \sigma_3} [X_{I\otimes \xi_2}, 
  -X_{i\sigma_{3}\otimes \sigma_{3}}]]$ 
are vertical at $\Lambda$, 
(iv) 
$[X_{\xi_1\otimes I}, [X_{\xi_{1}\otimes I}, 
  -X_{i\sigma_{3}\otimes \sigma_{3}}]]$ 
and 
$[X_{I\otimes \xi_2}, [X_{I \otimes \xi_2}, 
  -X_{i\sigma_{3}\otimes \sigma_{3}}]]$ 
are vertical at $\Lambda$, 
and (v) 
$[X_{I \otimes \xi_2}, [X_{\xi_1\otimes I}, 
 -X_{i\sigma_{3}\otimes \sigma_{3}}]] = 
 [X_{\xi_1\otimes I}, [X_{I \otimes \xi_2}, 
  -X_{i\sigma_{3}\otimes \sigma_{3}}]] $.  
Since the derivative of the measure $F$ is given by
\begin{equation}
  dF 
  =
  8\mathrm{Re}\Bigl( 
    \det C^{\ast}
    \tr\left( \sigma_{2}C\sigma_{2}dC^{T} \right)
  \Bigr), 
  \label{dF}
\end{equation}
one has 
\begin{gather*}
  (dF)_{\Lambda}
  \Bigl(
    [X_{\xi_{1}\otimes I},
    [X_{I\otimes\xi_{2}}, -X_{i\sigma_{3}\otimes\sigma_{3}}]] 
  \Bigr) 
  = 
  \frac{-x_{1}y_{2}+y_{1}x_{2}}{2}
  \lambda_{1}\lambda_{2}(\lambda_{1}^{2}-\lambda_{2}^{2}). 
\end{gather*}
Hence, substituting this into \eqref{approximation of concurrence}, 
we have the following estimate of the concurrence, 
\begin{equation} 
  \sqrt{F(C(t))} 
  \sim 
  \sqrt{F(\varphi_{-i\sigma_{3}\otimes\sigma_{3}}^{Jt}(\Lambda))}
  + 
  \frac{Jt^{3}}{2\cdot 3!}
  (-x_{1}y_{2}+y_{1}x_{2})(\lambda_{1}^{2}-\lambda_{2}^{2}) 
  \label{approximation of F}
\end{equation}
for a sufficiently small $t>0$. 
It is to be noted that the effect of controls on 
the concurrence emerges actually at the third-order term in $t$. 
We have to note 
that for $\Lambda=\diag{1,0}$, 
one has $F(\varphi_{-i\sigma_{3}\otimes\sigma_{3}}^{Jt}(\Lambda))=0$, 
so that 
$\sqrt{F(C(t))} \sim \frac{Jt^{3}}{2\cdot 3!}(-x_{1}y_{2}+x_{2}y_{1})$. 
This means that the state $\Lambda = \diag{1,0}$ gets entangled slowly 
by the controls. 

The growth rate of the concurrence at the third order in $t$ depends on 
the quantity $-x_1y_2+y_1x_2$, 
where $x_{\alpha}, y_{\beta}$ are control parameters 
given by $\xi_{\alpha}=ix_{\alpha}\sigma_1+iy_{\alpha}\sigma_2$. 
It would be reasonable to restrict the magnitude of controls to 
$\sum_{\alpha=1}^2(x_{\alpha}^2+y_{\alpha}^2)=1$. 
If this is the case, the maximal growth ratio is realized when 
$x_1=-y_2$, $x_2=y_1$. 
This means that the most efficient control for entanglement is given by 
$\xi_1=ix_{1}\sigma_1+ix_{2}\sigma_2$, $\xi_2=ix_{2}\sigma_1-ix_{1}\sigma_2$ 
with $x_1^2+x_2^2=1/2$. 
However, if we choose the control given by 
$\xi_1=ix_{1}\sigma_1-ix_{2}\sigma_2$, $\xi_2=ix_{2}\sigma_1+ix_{1}\sigma_2$ 
with $x_1^2+x_2^2=1/2$, 
the two-qubit gets into a less entangled state. 
If $\xi_1=ix_{1}\sigma_1+ix_{1}\sigma_2$, 
$\xi_2=ix_{2}\sigma_1+ix_{2}\sigma_2$ 
with $x_1^2+x_2^2=1/2$, no change will occur in the entanglement of 
the two-qubit. 

We turn to the case that the initial state $C(0)$ is not diagonal.  
Let us be reminded of the fact that when the $C(0)$ is diagonal, 
the $\mathcal{O}(Jt^{2})$-term in \eqref{approximation of C} 
is vertical. 
However, if $C(0)$ is not diagonal, the $\mathcal{O}(Jt^{2})$-term is 
not vertical, so that Eqs.~\eqref{second order terms} and \eqref{dF} 
are put together to yield the contribution to entanglement promotion by 
\begin{align}
  & 
  \frac{Jt^{2}}{2\sqrt{F(C(0))}}
  (dF)_{C(0)}\biggl(
    [X_{\xi_{1}\otimes I}, -X_{i\sigma_{3}\otimes\sigma_{3}}]
    + 
   [X_{I\otimes\xi_{2}}, -X_{i\sigma_{3}\otimes\sigma_{3}}]
  \biggr) \notag\\
  = &  
  -\frac{Jt^{2}}{4}\mathrm{Re}\biggl( 
    e^{-i\theta} 
    (c_{01}+c_{10})
    (\overline{z_{1}}c_{00}+z_{2}c_{11}) 
  \biggr), 
  \label{second order terms 2}
\end{align}
where $z_{\alpha} = x_{\alpha} + iy_{\alpha}$ for $\alpha=1,2$, 
and where $e^{i\theta}$ is defined by 
$\det C = e^{i\theta}|\det C| = e^{i\theta}\sqrt{F(C)}$. 
From \eqref{second order terms 2}, 
the $\mathcal{O}(Jt^{2})$-term identically vanishes for arbitrary 
$\xi_{1}$ and $\xi_{2}$ 
if $c_{00}=c_{11}=0$ or $c_{01}+c_{10}=0$, 
but it takes a non-zero value in general, 
if $c_{01}+c_{10} \ne 0$ and $(c_{00},c_{11}) \ne (0,0)$.

\section{Concluding remarks}
We have discussed how the control and drift vector fields are coupled to 
make the NMR system more entangled on the two-qubit model, 
by examining the solution in detail in terms of horizontal and vertical 
vector fields and by evaluating the concurrence. 
As was pointed out in Sec. \!4.1,  the initial state 
$C(0) = \diag{e^{i\theta}, 0}$ or $e^{i\theta}|0\rangle\oplus |0\rangle$ 
never gets entangled without controls, but we have shown in Sec.~4.4 that 
it can be entangled if controls are taken. 
However, the effect of the control on entanglement is so slow that 
it emerges at the third-order term in $t$. 
In fact, as is shown in Sec.~4.4, only vertical vector fields are 
generated in the second-order term in $t$ by the coupling between 
the drift and control vector fields, and further vector fields with 
non-vanishing horizontal components emerge at the third-order term 
in $t$. 

It would be of help to state the relation of our 
theory to the Cartan decomposition of the Lie algebra. 
As for the two-qubit systems, the associated Lie algebra is 
$\mathfrak{su}(4)$, of which the Cartan decomposition 
is given by 
\begin{equation*}
  \mathfrak{su}(4)=\mathfrak{k}\oplus \mathfrak{p}, 
\end{equation*}
where 
\begin{align}
  \mathfrak{k} 
  = 
  \Span \{iI\otimes \sigma_j, i\sigma_k\otimes I \}_{j,k=1,2,3}, 
  \quad 
  \mathfrak{p}
  = 
  \Span \{2i\sigma_j \otimes \sigma_k \}_{j,k=1,2,3}. 
  \nonumber
\end{align}
The subalgebra $\mathfrak{k}$ generates vertical tangent vectors, 
and some of elements in $\mathfrak{p}$ span the horizontal subspace at 
$\Lambda=\diag{\lambda_1,\lambda_2}$. 
For example, if $\lambda_1\neq \lambda_2$, one has 
a horizontal vector $X_{i\sigma_1\otimes \sigma_2}(\Lambda)$ 
(see Props.~4.3 and 4.4). 
However, the vectors associated with $\mathfrak{p}$ are not always 
horizontal. 
As is seen from \eqref{sigma23 13}, 
$X_{i\sigma_2\otimes \sigma_3}(\Lambda)$ and 
$X_{i\sigma_1\otimes \sigma_3}(\Lambda)$ are 
vertical if $\Lambda \neq I/\sqrt{2}$. 
In \cite{ZVSW03}, as to non-local operation, 
they state that $\mathfrak{k}$ can be viewed as the 
local part in $\mathfrak{su}(4)$ and $\mathfrak{p}$ as the non-local part. 
In this sense, the horizontal vectors are of non-local nature, and 
the horizontal subspace makes the non-locality quite sharp. 
This is because the entanglement gets promoted most efficiently 
in the direction of horizontal vectors. 

We comment on the motivation behind this study, 
which is related to the decomposition method 
for the analysis of quantum control problems on compact Lie groups. 
The decomposition method is of much use to design 
a control in order to construct a desired unitary operator
\cite{KBG01, KG01, KGB02}, and 
then has been developed \cite{DAR06, DDAS08}. 
In these situations, the control (or local operation) works instantaneously, 
and the drift (or non-local) Hamiltonian is assumed to be negligible while 
the control works. 
Under this assumption, the time evolution operator 
$U_{t} = e^{-it\hat{H}}$ can be decomposed into 
the product of the local and non-local operators,  
\begin{equation}
  U_{t} = 
  L_{0} \,  e^{-i\tau_{1}\hat{H}_{d}} \, L_{1} 
  \cdots 
  e^{-i\tau_{N}\hat{H}_{d}} \, L_{N}
  \label{decomp of U_t}
\end{equation}
where 
$L_{0}, \cdots, L_{N} \in \SU{2^{\ell}}\otimes\SU{2^{m}}$ are generated by 
the control Hamiltonian, 
and where $\tau_{1}, \cdots, \tau_{N} > 0$ are time intervals for which 
the control Hamiltonian vanishes. 
In this view, the bipartite entanglement is not concerned by  
the coupling between the drift and control Hamiltonians in 
the time evolution. 
Our motivation is to consider the coupling in the evolution 
without the above assumption, and to investigate 
the relation of the coupling to the bipartite entanglement. 
The work \cite{Rom07} by Romano shares the motivation with us. 

In \cite{Rom07}, the decomposition of the unitary group $\SU{4}$ 
is used to study entanglement magnification. 
The time evolution $U_t$ is decomposed into $U_t=L_tA_tK_t$, where 
$L_t,K_t\in \SU{2}\otimes\SU{2}$ and $A_t\in H$ with 
$H$ the Cartan subgroup of $\SU{4}$. 
The author says that $L_t$ is irrelevant, and claims that 
the non-local operator $A_t$ makes a contribution 
to entanglement magnification together with a help of $K_{t}$. 
The decomposition method is sharp to study 
the entanglement in the evolution, 
because the evolution operator is completely separated into 
the ``entangling part'' $A_{t}$ and the two local operators $L_{t}$ and $K_{t}$. 
But, the process of entanglement promotion does not explicitly appear in 
the decomposition, since not only $A_{t}$ but also $L_{t}, K_{t}$ take in 
the coupling between the drift and control Hamiltonians. 
In fact, the $A_{t}$, $L_{t}$ and $K_{t}$ are all unknown until 
the time evolution operator $U_{t}$ is integrated in an explicit form. 

On the other hand, the method of this article is represented in 
a familiar form with bipartite entanglement, and 
explains how the coupling between the drift and control Hamiltonians occurs 
in the time evolution. 
See \eqref{N-param power series}, \eqref{expanded Tk}, 
\eqref{power series of J} and \eqref{approximation of C}. 
In particular, the right hand-side of \eqref{approximation of C} is not 
discussed in the context of \eqref{decomp of U_t}. 
In a comparison to \cite{Rom07}, 
our method allows one not to solve the time evolution of the system, 
although it is not sharp in the sense that the terms in \eqref{approximation of C} contain 
vertical (or local) components. 

The formulas, \eqref{N-param power series} together with 
\eqref{expanded Tk}, and \eqref{approximation of C} also show 
how the effect of the controls spreads among qubits in 
the dynamical system. 
In contrast with this, the proof of the controllability of the NMR system 
would show the spread of the controls among qubits in a ``static'' situation. 
For the sake of self-containedness, we give the proof of 
the controllability of our NMR system (see Theorem \ref{Thm: controllability}) 
in a method different from that in \cite{AD'A02}. 
As is pointed out in App.~B, 
the NMR system is controllable if and only if the associated spin graph is 
connected.

\section*{Acknowledgments}
The authors would like to thank to the referees for valuable comments and 
suggestions which led to much improvement of this article. 
The second author is financially supported by 
a JSPS Research Fellowship for Young Scientists, 19-3956.

\appendix
\section{Lie brackets in the two-qubit NMR system}
A straightforward calculation along with Corollary 
\ref{Prop: comp of brackets 2} provides 
\begin{align*} 
  &
  [X_{i\sigma_{3}\otimes\sigma_{3}}, X_{i\sigma_{1}\otimes I}] 
  = 
  -X_{i\sigma_{2}\otimes\sigma_{3}}, 
  & 
  [X_{i\sigma_{3}\otimes\sigma_{3}}, X_{i\sigma_{2}\otimes I}] 
  = 
  X_{i\sigma_{1}\otimes\sigma_{3}}, \\
  &
  [X_{i\sigma_{3}\otimes\sigma_{3}}, X_{iI\otimes\sigma_{1}}] 
  = 
  -X_{i\sigma_{3}\otimes\sigma_{2}}, 
  & 
  [X_{i\sigma_{3}\otimes\sigma_{3}}, X_{iI\otimes\sigma_{2}}] 
  = 
  X_{i\sigma_{3}\otimes\sigma_{1}}, \\
  & 
  [X_{i\sigma_{1}\otimes I}, X_{i\sigma_{2}\otimes I}] 
  = 
  -X_{i\sigma_{3}\otimes I}, 
  & 
  [X_{iI\otimes\sigma_{1}}, X_{iI\otimes\sigma_{2}}] 
  = 
  -X_{iI\otimes\sigma_{3}}, 
\end{align*}
and the other Lie brackets prove to vanish. 
Further, double Lie brackets are given as follows: 
\begin{align*}
  & 
  [X_{i\sigma_{3}\otimes\sigma_{3}}, 
  [X_{i\sigma_{3}\otimes\sigma_{3}},  X_{i\sigma_{1}\otimes I}]]
  = 
  -[X_{i\sigma_{3}\otimes\sigma_{3}}, X_{i\sigma_{2}\otimes\sigma_{3}}] 
  = 
  -X_{i\sigma_{1}\otimes I}, \\
  & 
  [X_{i\sigma_{3}\otimes\sigma_{3}}, 
  [X_{i\sigma_{3}\otimes\sigma_{3}},  X_{i\sigma_{2}\otimes I}]]
  = \quad \!
  [X_{i\sigma_{3}\otimes\sigma_{3}}, X_{i\sigma_{1}\otimes\sigma_{3}}]
  = 
  -X_{i\sigma_{2}\otimes I}, \\
  & 
  [X_{i\sigma_{3}\otimes\sigma_{3}}, 
  [X_{i\sigma_{3}\otimes\sigma_{3}},  X_{iI\otimes\sigma_{1}}]] 
  = 
  -[X_{i\sigma_{3}\otimes\sigma_{3}}, X_{i\sigma_{3}\otimes\sigma_{2}}]
  = 
  -X_{iI\otimes\sigma_{1}}, \\
  & 
  [X_{i\sigma_{3}\otimes\sigma_{3}}, 
  [X_{i\sigma_{3}\otimes\sigma_{3}},  X_{iI\otimes\sigma_{2}}]] 
  = \quad \!
  [X_{i\sigma_{3}\otimes\sigma_{3}}, X_{i\sigma_{3}\otimes\sigma_{1}}]
  = 
  -X_{iI\otimes\sigma_{2}}, \\
  & 
  [X_{i\sigma_{1}\otimes I}, \,\,
  [X_{i\sigma_{3}\otimes\sigma_{3}}, \, X_{i\sigma_{1}\otimes I}]] 
  = 
  - [X_{i\sigma_{1}\otimes I}, \,\, X_{i\sigma_{2}\otimes\sigma_{3}}] 
  = \quad \!
  X_{i\sigma_{3}\otimes\sigma_{3}}, \\
  & 
  [X_{i\sigma_{1}\otimes I}, \,\,
  [X_{i\sigma_{3}\otimes\sigma_{3}}, \, X_{i\sigma_{2}\otimes I}]] 
  = \quad \!
  [X_{i\sigma_{1}\otimes I}, \,\, X_{i\sigma_{1}\otimes\sigma_{3}}] 
  = \quad \! 
  0, \\
  & 
  [X_{i\sigma_{1}\otimes I}, \,\, 
  [X_{i\sigma_{3}\otimes\sigma_{3}}, \, X_{iI\otimes\sigma_{1}}]] 
  = 
  -[X_{i\sigma_{1}\otimes I}, \,\, X_{i\sigma_{3}\otimes\sigma_{2}}]
  = 
  -X_{i\sigma_{2}\otimes\sigma_{2}}, \\
  & 
  [X_{i\sigma_{1}\otimes I}, \,\,
  [X_{i\sigma_{3}\otimes\sigma_{3}}, \, X_{iI\otimes\sigma_{2}}]] 
  = \quad \!
  [X_{i\sigma_{1}\otimes I}, \,\, X_{i\sigma_{3}\otimes\sigma_{1}}]
  = \quad \! 
  X_{i\sigma_{2}\otimes\sigma_{1}}, \\
  & 
  [X_{i\sigma_{2}\otimes I}, \,\,
  [X_{i\sigma_{3}\otimes\sigma_{3}}, \, X_{i\sigma_{1}\otimes I}]] 
  = 
  -[X_{i\sigma_{2}\otimes I}, \,\, X_{i\sigma_{2}\otimes\sigma_{3}}] 
  = \quad \!
  0, \\
  & 
  [X_{i\sigma_{2}\otimes I}, \,\,
  [X_{i\sigma_{3}\otimes\sigma_{3}}, \, X_{i\sigma_{2}\otimes I}]] 
  = \quad \!
  [X_{i\sigma_{2}\otimes I}, \,\, X_{i\sigma_{1}\otimes\sigma_{3}}] 
  = \quad \! 
  X_{i\sigma_{3}\otimes\sigma_{3}}, 
 \\
  & 
  [X_{i\sigma_{2}\otimes I}, \,\, 
  [X_{i\sigma_{3}\otimes\sigma_{3}}, \, X_{iI\otimes\sigma_{1}}]] 
  = 
  -[X_{i\sigma_{2}\otimes I}, \,\, X_{i\sigma_{3}\otimes\sigma_{2}}]
  = \quad \!
  X_{i\sigma_{1}\otimes\sigma_{2}}, \\
  & 
  [X_{i\sigma_{2}\otimes I}, \,\,
  [X_{i\sigma_{3}\otimes\sigma_{3}}, \, X_{iI\otimes\sigma_{2}}]] 
  = \quad \!
  [X_{i\sigma_{2}\otimes I}, \,\, X_{i\sigma_{3}\otimes\sigma_{1}}]
  = 
  -X_{i\sigma_{1}\otimes\sigma_{1}}, 
  \\
  & 
  [X_{iI\otimes\sigma_{1}}, \,\,
  [X_{i\sigma_{3}\otimes\sigma_{3}}, \, X_{i\sigma_{1}\otimes I}]] 
  = 
  -[X_{iI\otimes\sigma_{1}}, \,\, X_{i\sigma_{2}\otimes\sigma_{3}}] 
  = 
  -X_{i\sigma_{2}\otimes\sigma_{2}}, \\
  & 
  [X_{iI\otimes\sigma_{1}}, \,\,
  [X_{i\sigma_{3}\otimes\sigma_{3}}, \, X_{i\sigma_{2}\otimes I}]] 
  = \quad \!
  [X_{iI\otimes\sigma_{1}}, \,\, X_{i\sigma_{1}\otimes\sigma_{3}}] 
  = \quad \!  
  X_{i\sigma_{1}\otimes\sigma_{2}}, \\
  & 
  [X_{iI\otimes\sigma_{1}}, \,\, 
  [X_{i\sigma_{3}\otimes\sigma_{3}}, \, X_{iI\otimes\sigma_{1}}]] 
  = 
  -[X_{iI\otimes\sigma_{1}}, \,\, X_{i\sigma_{3}\otimes\sigma_{2}}]
  = \quad \! 
  X_{i\sigma_{3}\otimes\sigma_{3}}, 
  \\
  & 
  [X_{iI\otimes\sigma_{1}}, \,\,
  [X_{i\sigma_{3}\otimes\sigma_{3}}, \, X_{iI\otimes\sigma_{2}}]] 
  = \quad \!
  [X_{iI\otimes\sigma_{1}}, \,\, X_{i\sigma_{3}\otimes\sigma_{1}}]
  = \quad \! 
  0 \\
  & 
  [X_{iI\otimes\sigma_{2}}, \,\,
  [X_{i\sigma_{3}\otimes\sigma_{3}}, \, X_{i\sigma_{1}\otimes I}]] 
  = 
  -[X_{iI\otimes\sigma_{2}}, \,\, X_{i\sigma_{2}\otimes\sigma_{3}}] 
  = \quad \! 
  X_{i\sigma_{2}\otimes\sigma_{1}}, \\
  & 
  [X_{iI\otimes\sigma_{2}}, \,\,
  [X_{i\sigma_{3}\otimes\sigma_{3}}, \, X_{i\sigma_{2}\otimes I}]] 
  = \quad \!
  [X_{iI\otimes\sigma_{2}}, \,\, X_{i\sigma_{1}\otimes\sigma_{3}}] 
  = 
  -X_{i\sigma_{1}\otimes\sigma_{1}}, \\
  & 
  [X_{iI\otimes\sigma_{2}}, \,\, 
  [X_{i\sigma_{3}\otimes\sigma_{3}}, \, X_{iI\otimes\sigma_{1}}]] 
  = 
  -[X_{iI\otimes\sigma_{2}}, \,\, X_{i\sigma_{3}\otimes\sigma_{2}}]
  =\quad \! 
  0, \\
  & 
  [X_{iI\otimes\sigma_{2}}, \,\,
  [X_{i\sigma_{3}\otimes\sigma_{3}}, \, X_{iI\otimes\sigma_{2}}]] 
  = \quad \!
  [X_{iI\otimes\sigma_{2}}, \,\, X_{i\sigma_{3}\otimes\sigma_{1}}]
  = \quad \! 
  X_{i\sigma_{3}\otimes\sigma_{3}}. 
\end{align*}

\section{Controllability of the NMR system}
As stated in Sec. \!\!\!3, we study the controllability of our 
NMR system. 
The drift Hamiltonian we treat is given by 
\begin{equation}
  \label{free-H}
  \hat{H}_d 
  = 
  \hat{H}_0
  =\sum_{1\leq\alpha<\beta\leq n} \, 
  J_{\alpha\beta}\, 
  \sigma^{(\alpha)}_3\sigma^{(\beta)}_3, 
\end{equation}
where, for simplicity, 
we have denoted the drift Hamiltonian by $\hat{H}_0$ and 
abbreviated $\sigma_{j}^{(\alpha,n)}$ to $\sigma_{j}^{(\alpha)}$, 
where $\sigma_{j}^{(\alpha,n)}$ were defined in \eqref{sigma}.  
For our convenience, we express the control Hamiltonian as follows: 
\begin{equation*}
  \hat{H}_c
  =
  \sum_{\mu=1}^{n} v_1^{(\mu)}\hat{H}_{\mu} 
  + 
  \sum_{\mu=n+1}^{2n}v_2^{(\mu-n)}\hat{H}_{\mu}, 
\end{equation*}
where 
\begin{equation}
  \label{control-H}
   \hat{H}_{\mu}
  := 
  \begin{cases} 
    \sigma_1^{(\mu)}, & \mu = 1,\cdots, n, \\
    \sigma_2^{(\mu-n)}, & \mu = n+1,\cdots, 2n, 
  \end{cases}
\end{equation}
Our control system with the Hamiltonian 
$\hat{H}=\hat{H}_d+\hat{H}_c$ is (pure state) controllable 
if and only if the Lie algebra generated by 
$i\hat{H}_0,i\hat{H}_{\mu},\,\mu=1,\cdots,2n$, 
is equal to $\mathfrak{su}(2^n)$. 

For comparison's sake, we here quote the drift and control Hamiltonians 
studied in [1], which are given by 
\begin{equation}
  \label{d-free-H}
  \hat{H}'_0 
  = 
  \sum_{\alpha<\beta}(
    M_{\alpha\beta}\sigma_1^{(\alpha)}\sigma_1^{(\beta)} 
    + 
    N_{\alpha\beta}\sigma_2^{(\alpha)}\sigma_2^{(\beta)} 
    + 
    P_{\alpha\beta}\sigma_3^{(\alpha)}\sigma_3^{(\beta)}
  ),
\end{equation}
and 
\begin{equation}
  \label{d-control-H}
  \hat{H}'_{j} 
  = 
  \sum_{\alpha=1}^{n} r_{\alpha} \sigma_{j}^{(\alpha)}, \quad 
  j =1,2,3, 
\end{equation} 
respectively, where 
$M_{\alpha\beta}, N_{\alpha\beta}, P_{\alpha\beta} \in \mathbb{R}$ 
are coupling constants 
and $r_{\alpha} \in \mathbb{R}$ are the gyromagnetic ratio 
of the $\alpha$-th spin-$\frac{1}{2}$ particle (or qubit). 
In our case of \eqref{control-H}, it is assumed that 
each of spin-$\frac12$ particles can be stimulated 
by the only two components of the magnetic field. 
In contrast with this, in the case of (\ref{d-control-H}), 
all the spin-$\frac12$ particles are assumed to be 
stimulated simultaneously by all the three components of the 
magnetic field with possibly different gyromagnetic ratios. 
While our drift Hamiltonian \eqref{free-H} contains only 
$\sigma_3$ factors, but theirs \eqref{d-free-H} has all the $\sigma_j$ factors. 
In \cite{AD'A02}, they prove that the NMR system with the Hamiltonian 
$\hat{H}'_0+\sum_{j=1}^{3} u_{j}\hat{H}'_{j}$ is controllable if and only if 
the associated spin graph is connected. 

Our objective in what follows is to show that our NMR system is controllable 
if and only if the associated spin graph is connected. 
The main point is to show that the Lie algebra generated by 
the operators $-i\hat{H}_\mu,\, \mu = 0,1,\cdots,2n$, 
is equal to $\mathfrak{su}(2^n)$. 
In this respect, what we have to do for proof is the same as that 
in \cite{AD'A02}. 
However, as our Hamiltonian is different from that in \cite{AD'A02}, 
the method for proof should be different from that in \cite{AD'A02}.

\subsection{A decomposition of $\frak{su}(2^{n})$} 
Before calculating commutators among 
$\{-i\hat{H}_\mu\}_{\mu=0,1,\cdots,2n}$, 
we decompose $\frak{su}(2^{n})$ into the sum of subspaces. 
We denote by $\mathcal{B}$ the Lie algebra $\frak{su}(2^{n})$, 
\begin{equation}
  \mathcal{B}
  =
  \Span_{\mathbb{R}}\{
    i\sigma_{j_{1}}\otimes \sigma_{j_{2}} \otimes \cdots \otimes 
    \sigma_{j_{n}}  \,|\,
  j_{1}, j_{2}, \cdots, j_{n}\in \{0,1,2,3\} 
  \} 
  \setminus \{iI^{\otimes n}\},  
  \label{thebasis}
\end{equation}
and break it up into subspaces $\mathcal{B}^{(k)},\,k=0,1,\cdots,n-1$, 
all the basis elements of which have as many as $k$ $I$-factors. 
Since $\dim\mathcal{B}^{(k)}={n \choose k}3^{n-k}$, and 
since $\dim\mathfrak{su}(4)=4^n-1 =\sum_{k=0}^{n-1}{n \choose k}3^{n-k}$, 
the vector space $\mathcal{B}$ is broken up into 
\begin{equation}
  \label{decomp-B}
  \mathcal{B} = \bigoplus_{k=0}^{n-1}\mathcal{B}^{(k)}.
\end{equation}
For $k=0$, $\mathcal{B}^{(0)}$ is expressed as 
\begin{equation*}
  \mathcal{B}^{(0)}
  =
  \Span_{\mathbb{R}}\{
    i\sigma_{j_{1}}\otimes \sigma_{j_{2}} \otimes \cdots 
    \otimes \sigma_{j_{n}}
  \,|\, 
    j_{1}, j_{2}, \cdots, j_{n} \in \{1, 2, 3\}
  \}. 
\end{equation*}
Among $\mathcal{B}^{(k)}$ with $k=1,\cdots, n,$ 
we write down $\mathcal{B}^{(k)}$ only for $k=n-1$; 
\begin{equation*}
  \mathcal{B}^{(n-1)}
  =
  \Span_{\mathbb{R}}\{
    i\sigma_{j}^{(1)}, 
    i\sigma_{j}^{(2)},
    \cdots,
    i\sigma_{j}^{(n)}
  \,|\;
    j = 1, 2, 3 
  \}. 
\end{equation*}
Further, we set 
\begin{equation}
  \label{Bz}
  \mathcal{B}_3
  :=
  \Span_{\mathbb{R}}\{
  i\sigma_{j_{1}}\otimes \sigma_{j_{2}} \otimes \cdots \otimes \sigma_{j_{n}} 
  \,|\,
  j_{1}, j_{2}, \cdots, j_{n} \in \{0, 3 \} 
  \}
  \setminus
  \{iI^{\otimes n}\}. 
\end{equation}
The following lemma shows that $\mathcal{B}_{3}$ holds a key position in 
calculating commutators. 

\begin{Lem}
\label{Lemma i}
Commutators among 
$\{-i\hat{H}_\mu\}_{\mu=1,\cdots,2n}\cup \mathcal{B}_3$ 
generate all the basis operators of $\mathcal{B}$. 
\end{Lem}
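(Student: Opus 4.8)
The plan is to prove the lemma by an explicit slot-by-slot construction: I take an arbitrary basis operator $i\sigma_{j_1}\otimes\cdots\otimes\sigma_{j_n}$ of $\mathcal{B}$, start from the element of $\mathcal{B}_3$ that carries $\sigma_3$ in exactly the slots where $j_k\neq0$, and then repeatedly ``flip'' one $\sigma_3$ into the required $\sigma_1$ or $\sigma_2$ by commuting with a single-qubit control operator $-i\sigma_1^{(k)}$ or $-i\sigma_2^{(k)}$.

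First I would record the elementary brackets for the present normalization of the Pauli matrices, namely $[\sigma_3,\sigma_1]=-i\sigma_2$, $[\sigma_3,\sigma_2]=i\sigma_1$ and $[\sigma_1,\sigma_2]=-i\sigma_3$, together with the following observation: if $X$ is a tensor-product operator whose $k$-th factor is $\sigma_3$ and $W$ is an operator acting on the $k$-th slot only, then $[X,W]$ is obtained from $X$ by replacing its $k$-th factor with $[\sigma_3,(\text{the factor of }W)]$ and leaving every other factor of $X$ untouched. Consequently, commuting $X$ with $-i\sigma_2^{(k)}$ turns a $\sigma_3$ in slot $k$ into a non-zero multiple of $\sigma_1$, commuting with $-i\sigma_1^{(k)}$ turns it into a non-zero multiple of $\sigma_2$, and every other slot is left exactly as it was; note that a slot carrying $I$ is annihilated by such a commutator, which is exactly why the whole space $\mathcal{B}_3$ (all possible patterns of $\sigma_3$'s and $I$'s), rather than merely $i\sigma_3^{\otimes n}$, has to be supplied as input. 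In particular the single-qubit level $\mathcal{B}^{(n-1)}$ is already covered, since $i\sigma_1^{(\alpha)}$ and $i\sigma_2^{(\alpha)}$ are among the $-i\hat{H}_\mu$ and $i\sigma_3^{(\alpha)}\in\mathcal{B}_3$.

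For a general target $Y=i\sigma_{j_1}\otimes\cdots\otimes\sigma_{j_n}\neq iI^{\otimes n}$, I would set $S=\{k:j_k\neq0\}$ (non-empty) and list those $k\in S$ with $j_k\in\{1,2\}$ as $k_1<\cdots<k_r$. Starting from $Z_0:=i\bigotimes_{k\in S}\sigma_3^{(k)}\otimes\bigotimes_{k\notin S}I^{(k)}\in\mathcal{B}_3$, define $Z_i:=[Z_{i-1},-i\sigma_2^{(k_i)}]$ when $j_{k_i}=1$ and $Z_i:=[Z_{i-1},-i\sigma_1^{(k_i)}]$ when $j_{k_i}=2$. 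By the flipping observation, and because at step $i$ the single-qubit operator sits in slot $k_i$, which differs from all of $k_1,\cdots,k_{i-1}$ (so it commutes with the already-modified factors), $Z_i$ is a non-zero scalar multiple of the operator obtained from $Z_{i-1}$ by changing only its $k_i$-th slot to $\sigma_{j_{k_i}}$; the slots $k\in S$ with $j_k=3$ keep their $\sigma_3$ from $Z_0$ and the slots outside $S$ keep their $I$. Hence $Z_r$ is a non-zero multiple of $Y$, so $Y$ lies in the span of iterated commutators of $\{-i\hat{H}_\mu\}_{\mu=1,\cdots,2n}\cup\mathcal{B}_3$; since $Y$ was an arbitrary basis operator of $\mathcal{B}$, the lemma follows.

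The computations are routine; the only point that requires care is the bookkeeping that a commutator with a single-qubit operator in slot $k_i$ genuinely leaves every previously set slot intact, and that one never needs to reverse a flip or to manufacture a non-trivial Pauli factor out of an identity factor --- both of which are arranged by choosing $Z_0\in\mathcal{B}_3$ to have $\sigma_3$'s in precisely the slots of $S$. One could instead organize the argument along the decomposition $\mathcal{B}=\bigoplus_{k=0}^{n-1}\mathcal{B}^{(k)}$ and induct on $k$, but the slot-by-slot flipping makes that stratification unnecessary.
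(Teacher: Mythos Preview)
Your proof is correct and is essentially the same as the paper's: both start from an element of $\mathcal{B}_3$ carrying $\sigma_3$ exactly in the slots where the target has a nontrivial Pauli factor, and then replace those $\sigma_3$'s by $\sigma_1$ or $\sigma_2$ via successive commutators with the single-qubit control operators $-i\hat{H}_\mu$. The only cosmetic difference is that the paper organizes the argument stratum-by-stratum along $\mathcal{B}=\bigoplus_{k}\mathcal{B}^{(k)}$ (fixing the number of $I$-factors and generating all of $\mathcal{B}^{(k)}$ from $\mathcal{B}_3\cap\mathcal{B}^{(k)}$), whereas you work target-by-target; you even note this alternative in your closing remark.
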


\begin{proof}
Note that we have 
$\mathcal{B}_3=\sum_{k=0}^{n-1}(\mathcal{B}_3\cap \mathcal{B}^{(k)})$ 
from \eqref{decomp-B} and \eqref{Bz}. 
For $k=0$, we take an operator 
$i\sigma_3^{\otimes n}\in \mathcal{B}_3 \cap \mathcal{B}^{(0)}$. 
The commutators between $i\sigma_3^{\otimes n}$ and $-i\hat{H}_\mu$ are 
given by 
\begin{equation*}
  [i\sigma_3^{\otimes n},-i\hat{H}_\mu]
  =
  \begin{cases}
    i\sigma_3^{\otimes(\mu-1)} \otimes \sigma_2 \otimes 
    \sigma_3^{\otimes(n-\mu)}, 
    & 
    \mu = 1,\cdots,n, \\ 
    -i\sigma_3^{\otimes(\mu-n-1)} \otimes \sigma_1 \otimes 
    \sigma_3^{\otimes(2n-\mu)}, 
    & 
    \mu = n+1,\cdots,2n. 
  \end{cases}
\end{equation*}
Taking successive commutators 
$[[i\sigma_3^{\otimes n},-i\hat{H}_\mu],-i\hat{H}_\nu]$, 
$\mu, \nu = 1,\cdots,2n$, and so on, 
we can obtain the operators of the form, up to $\pm$ sign,  
\begin{equation*}
  i\sigma_{j_{1}}\otimes \sigma_{j_{2}} \otimes \cdots \otimes \sigma_{j_{n}}, 
  \quad 
  j_{1}, j_{2}, \cdots, j_{n} \in \{1, 2, 3 \}. 
\end{equation*}
If we start with 
$iI\otimes \sigma_3^{\otimes(n-1)}\in \mathcal{B}_3\cap \mathcal{B}^{(1)}$ and 
follow the same procedure as above, we can obtain 
\begin{equation*}
  iI\otimes \sigma_{j_{2}} \otimes \cdots \otimes \sigma_{j_{n}}, 
  \quad 
  j_{2}, \cdots, j_{n} \in \{1, 2, 3 \}. 
\end{equation*}
Through the process of taking the commutator of an operator in 
$\mathcal{B}_3\cap \mathcal{B}^{(k)}$ with $-i\hat{H}_\mu$, 
the resultant tensor product operator may have 
the $I$-factors fixed and $\sigma_3$-factors in the 
site $\mu$ or $\mu-n$ changed to $\sigma_1$ or $\sigma_2$, according to 
whether $\mu\in\{1,\cdots,n\}$, or $\mu\in \{n+1,\cdots,2n\}$.  
Taking successive commutators, 
we can obtain the tensor product operators 
with the $(n-k)$ $\sigma_3$-factors replaced by 
$\sigma_1$ or $\sigma_2$ and the $k$ $I$-factors fixed, so that 
we have all the basis elements of $\mathcal{B}^{(k)}$ 
in the form of commutators among 
$\{-i\hat{H}_\mu\}_{\mu=1,\cdots,2n}
\cup(\mathcal{B}_3\cap \mathcal{B}^{(k)})$, 
$k=0,1,\cdots,n-1$. 
This ends the proof. 
\end{proof}

\subsection{Controllability with the complete spin graph}
We now assume that $J_{\alpha\beta}\neq 0$ 
for all $1\leq \alpha < \beta \leq n$, 
or that the associated spin graph is complete, 
to show that the commutators among $\{-i\hat{H}_\mu\}_{\mu=0,1\cdots,2n}$ 
span $\mathfrak{su}(2^n)$. 
On account of Lemma \ref{Lemma i}, 
we have only to show that 
commutators among $\{-i\hat{H}_\mu\}_{\mu=0,1,\cdots,2n}$ 
can span $\mathcal{B}_3$. 
We first show that by taking commutators among 
$\{-i\hat{H}_\mu\}_{\mu=0,1,\cdots,2n}$, 
the term $-i\sigma^{(\alpha)}_3\sigma^{(\beta)}_3$ of the drift Hamiltonian 
$-i\hat{H}_0$ can be singled out, if $J_{\alpha\beta}\neq 0$. 
For 
$-i\hat{H}_0=
-i\sum_{\alpha<\beta}J_{\alpha\beta}\sigma^{(\alpha)}_3\sigma^{(\beta)}_3$ 
and 
$-i\hat{H}_\mu=-i\sigma_1^{(\mu)}
,\,\mu=1,\cdots,n-1$, 
the commutators between them 
are given by 
\begin{equation*}
  [-i\hat{H}_0,-i\hat{H}_\mu] 
  = 
  -i\sum_{\alpha=1}^{\mu-1}
  J_{\alpha\mu}\sigma^{(\alpha)}_3\sigma^{(\mu)}_2 
  -i\sum_{\beta=\mu+1}^{n}
  J_{\mu\beta}\sigma^{(\mu)}_2\sigma^{(\beta)}_3. 
\end{equation*}
We calculate further commutators to obtain, 
for $1\leq \mu<\nu\leq n$,  
\begin{align*}
  [[-i\hat{H}_0, -i\hat{H}_{\mu}], -i\hat{H}_{\nu}] 
  & =
  -iJ_{\mu\nu}\sigma_2^{(\mu)}\sigma_2^{(\nu)}. 
\end{align*}
Moreover, a calculation with the right-hand side of the above provides, 
for $1\leq \mu < \nu\leq n$, 
\begin{equation*}
  [-i\hat{H}_\mu, [-i\hat{H}_{\nu},
  -iJ_{\mu\nu}\sigma_2^{(\mu)}\sigma_2^{(\nu)}]] 
  = 
  [-i\sigma_1^{(\mu)},-iJ_{\mu\nu}\sigma_2^{(\mu)}\sigma_3^{(\mu)}] 
  = 
  -iJ_{\mu\nu}\sigma_3^{(\mu)}\sigma_3^{(\nu)}. 
\end{equation*}
If $J_{\mu\nu}\neq 0$, we obtain $-i\sigma_3^{(\mu)}\sigma_3^{(\nu)}$ 
as a commutator among 
$\{i\hat{H}_{0}, \cdots, i\hat{H}_{2n}\}$. 
Thus, we have proved the following

\begin{Lem}
\label{Lemma ii}
If $J_{\alpha\beta}\neq 0$ for $\alpha$ and $\beta$ with 
$1\leq \alpha< \beta \leq n$, 
the operator $-i\sigma_3^{(\alpha)}\sigma_3^{(\beta)}$ 
can be realized as a commutator among 
$\{-i\hat{H}_\mu\}_{\mu=0,1,\cdots,2n}$. 
\end{Lem}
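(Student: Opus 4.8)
The plan is to produce $-i\sigma_3^{(\alpha)}\sigma_3^{(\beta)}$ explicitly, by hand, as an iterated commutator of the generators $\{-i\hat H_\mu\}_{\mu=0,1,\dots,2n}$, in three short stages. Throughout I use only two facts: operators supported on disjoint sets of qubit sites commute, and the single–site relations $[\sigma_1,\sigma_2]$, $[\sigma_2,\sigma_3]$, $[\sigma_3,\sigma_1]$ are nonzero multiples of $\sigma_3$, $\sigma_1$, $\sigma_2$ respectively (the constants being recorded in Appendix~A; for the argument only their non‑vanishing matters).

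Stage~1: I bracket the drift Hamiltonian $-i\hat H_0=-i\sum_{\rho<\tau}J_{\rho\tau}\sigma_3^{(\rho)}\sigma_3^{(\tau)}$ against $-i\hat H_\alpha=-i\sigma_1^{(\alpha)}$. Every summand of $\hat H_0$ not touching site $\alpha$ drops out, and in each surviving summand $\sigma_3^{(\alpha)}$ is converted to $\sigma_2^{(\alpha)}$ via $[\sigma_3,\sigma_1]\propto\sigma_2$. Hence $[-i\hat H_0,-i\hat H_\alpha]$ equals, up to a nonzero constant, $\sum_{\rho\neq\alpha}J_{\{\alpha,\rho\}}\,\sigma_2^{(\alpha)}\sigma_3^{(\rho)}$, a sum of two–site operators all carrying $\sigma_2$ at site $\alpha$. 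Stage~2: I bracket this against $-i\hat H_\beta=-i\sigma_1^{(\beta)}$ (with $\beta\neq\alpha$). Now $\sigma_1^{(\beta)}$ commutes with every summand whose second site $\rho$ differs from $\beta$, annihilating all of them, while in the single surviving summand ($\rho=\beta$) it turns $\sigma_3^{(\beta)}$ into $\sigma_2^{(\beta)}$. The output is therefore a nonzero multiple of $J_{\alpha\beta}\,\sigma_2^{(\alpha)}\sigma_2^{(\beta)}$; this is the only place the hypothesis $J_{\alpha\beta}\neq0$ is used, guaranteeing the coefficient does not vanish.

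Stage~3: Applying $[-i\hat H_\beta,\,\cdot\,]$ and then $[-i\hat H_\alpha,\,\cdot\,]$ to $\sigma_2^{(\alpha)}\sigma_2^{(\beta)}$ converts first the $\sigma_2$ at site $\beta$ and then the $\sigma_2$ at site $\alpha$ back into $\sigma_3$'s, by applying $[\sigma_1,\sigma_2]\propto\sigma_3$ site by site; up to an overall nonzero scalar this returns $\sigma_3^{(\alpha)}\sigma_3^{(\beta)}$. Dividing out the accumulated nonzero constant together with $J_{\alpha\beta}$ exhibits $-i\sigma_3^{(\alpha)}\sigma_3^{(\beta)}$ as an element of the Lie algebra generated by $\{-i\hat H_\mu\}_{\mu=0,1,\dots,2n}$, which is the claim.

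The argument is entirely computational; the one point that warrants care is the bookkeeping in Stage~2, namely verifying that bracketing with $-i\sigma_1^{(\beta)}$ really kills every summand with $\rho\neq\beta$ (immediate from disjoint support) so that no unwanted cross terms survive to spoil the isolation of the $(\alpha,\beta)$ pair. A secondary nuisance, if one insists on tracking signs rather than merely non‑vanishing, is keeping straight the ordering $\alpha<\rho$ versus $\rho<\alpha$ inside the drift sum and the sign convention built into $\sigma_2$; but since only the non‑vanishing of the various structure constants and of $J_{\alpha\beta}$ is needed, this can be sidestepped.
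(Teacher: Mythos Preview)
Your proof is correct and follows essentially the same route as the paper: bracket the drift $-i\hat H_0$ with $-i\hat H_\alpha$ and then with $-i\hat H_\beta$ to isolate a nonzero multiple of $-iJ_{\alpha\beta}\,\sigma_2^{(\alpha)}\sigma_2^{(\beta)}$, and then bracket twice more with $-i\hat H_\beta$ and $-i\hat H_\alpha$ to convert the $\sigma_2$'s back to $\sigma_3$'s. The only difference is cosmetic---the paper tracks the exact scalar factors at each step rather than arguing ``up to a nonzero constant''---but the sequence of commutators is identical.
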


We now show that the $\mathcal{B}_3$ can be generated 
from $\{-i\hat{H}_\mu\}_{\mu=0,1,\cdots,2n}$, 
by using Lemma \ref{Lemma ii} under 
the assumption that $J_{\alpha\beta}\neq 0$ for 
all $1\leq \alpha<\beta\leq n$. 
Let $\alpha_i,\,i=1,\cdots,m$, be positive integers such that 
\begin{equation}
  \label{alpha-m}
  1 \leq \alpha_1 < \alpha_2 
  < \cdots < 
  \alpha_{m-1} < \alpha_{m} \leq n 
  \quad 
  \mbox{with}\; m \geq 2.
\end{equation}
For a given sequence of $\alpha_i$, 
Lemma \ref{Lemma ii} provides us with the operators 
\begin{equation*}
  -i\sigma_3^{(\alpha_1)} \sigma_3^{(\alpha_2)},\;
  -i\sigma_3^{(\alpha_2)} \sigma_3^{(\alpha_3)},\;
  \cdots,  
  -i\sigma_3^{(\alpha_{m-1})} \sigma_3^{(\alpha_m)}
\end{equation*}
in the form of commutators among 
$\{-i\hat{H}_\mu\}_{\mu=0,1,\cdots,2n}$. 
We can form commutators among these operators and 
$\{-i\hat{H}_\mu\}_{\mu=0,1,\cdots,2n}$ to obtain 
\begin{align*}
  F_1 
  &:= 
  -[
    -i\sigma_3^{(\alpha_1)}\sigma_3^{(\alpha_2)}, 
    -i\hat{H}_{n+\alpha_2}
  ] 
  =
  -i\sigma_3^{(\alpha_1)}\sigma_1^{(\alpha_2)}, \\
  F_{k} 
  &:= 
  -[
    [
      -i\sigma_{3}^{(\alpha_{k})} \sigma_3^{(\alpha_{k+1})}, 
      -i\hat{H}_{\alpha_{k}}
    ], 
    -i\hat{H}_{n+\alpha_{k+1}}
  ] 
  = 
  -i\sigma_2^{(\alpha_{k})}\sigma_1^{(\alpha_{k+1})}, \\
  F_{m-1} 
  & := 
  [
    -i\sigma_3^{(\alpha_{m-1})}\sigma_3^{(\alpha_{m})},
    -i\hat{H}_{\alpha_{m-1}}
  ] 
  = 
  -i\sigma_2^{(\alpha_{m-1})}\sigma_3^{(\alpha_{m})}, 
\end{align*}
where $k=2,\cdots, m-2$, if $m\geq 3$. 
From $F_1,F_2,\cdots,F_{m-1}$, 
we can form commutators, for $m=3,\cdots,n$, 
\begin{align*}
  C_m 
  := 
  [\cdots [[F_1,F_2], \cdots, F_{m-2}],F_{m-1} ]
  =
  -i\sigma^{(\alpha_1)}_3 \sigma^{(\alpha_2)}_3 
  \cdots 
  \sigma^{(\alpha_{m-1})}_3 \sigma^{(\alpha_m)}_3 ,
\end{align*}
which is in $\mathcal{B}_3 \cap \mathcal{B}^{(n-m)}$. 
For $m=1,2$, the operators $C_1,C_2$ can be expressed, respectively, as 
\begin{align}
  C_1 
  &:= 
  -i\sigma_3^{(\alpha)}
  \;=\; 
  [-i\hat{H}_{\alpha},-i\hat{H}_{n+\alpha}], \quad 
  1\leq \alpha \leq n, \\
  C_2 
  & := 
  -i\sigma_3^{(\alpha_1)}\sigma_3^{(\alpha_2)},
\end{align}
where $C_1\in \mathcal{B}_3 \cap \mathcal{B}^{(n-1)}$ and  
$C_2\in \mathcal{B}_3 \cap \mathcal{B}^{(n-2)}$, and these can be expressed as 
commutators among $\{-i\hat{H}_\mu\}_{\mu=0,1,\cdots,2n}$ as well.

Taking all possible sequences of $\alpha_i$ subject to (\ref{alpha-m}), 
we can construct all possible tensor product operators 
with $m$ $\sigma_3$-factors and $(n-m)$ $I$-factors 
as commutators among $\{-i\hat{H}_\mu\}_{\mu=0,1,\cdots,2n}$. 
In other words, we can form all the basis elements of 
$\mathcal{B}_3 \cap \mathcal{B}^{(n-m)}$ 
in the form of commutators among 
$\{-i\hat{H}_\mu\}_{\mu=0,1,\cdots,2n}$. 
Since 
$\mathcal{B}_3
=
\sum_{m=1}^n (\mathcal{B}_3 \cap \mathcal{B}^{(n-m)})$,   
we obtain the following

\begin{Lem}
\label{Lemma iii} 
If $J_{\alpha\beta}\neq 0$ for all $\alpha$ and $\beta$ 
with $1\leq \alpha <\beta \leq n$, 
the $\mathcal{B}_3$ are generated by commutators among 
$\{-i\hat{H}_\mu\}_{\mu=0,1,\cdots,2n}$. 
\end{Lem}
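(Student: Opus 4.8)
The plan is to realize every basis element of $\mathcal{B}_3$ — that is, every operator $-i\sigma_3^{(\alpha_1)}\sigma_3^{(\alpha_2)}\cdots\sigma_3^{(\alpha_m)}$ with $1\le\alpha_1<\alpha_2<\cdots<\alpha_m\le n$ and $1\le m\le n$ — as an iterated commutator among $\{-i\hat H_\mu\}_{\mu=0,1,\dots,2n}$, using the hypothesis that $J_{\alpha\beta}\neq0$ for \emph{every} pair. Since $\mathcal{B}_3=\bigoplus_{m=1}^{n}(\mathcal{B}_3\cap\mathcal{B}^{(n-m)})$ and the operator above spans the one-dimensional piece of $\mathcal{B}_3\cap\mathcal{B}^{(n-m)}$ attached to the sequence $\alpha_1,\dots,\alpha_m$, producing all such operators as commutators is exactly the assertion to be proved.

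First I would dispatch the short chains: for $m=1$ the single-site $\mathfrak{su}(2)$ relation gives $-i\sigma_3^{(\alpha)}$ (up to sign) as $[-i\hat H_\alpha,-i\hat H_{n+\alpha}]=[-i\sigma_1^{(\alpha)},-i\sigma_2^{(\alpha)}]$, and for $m=2$ Lemma~\ref{Lemma ii} already supplies $-i\sigma_3^{(\alpha_1)}\sigma_3^{(\alpha_2)}$ as a commutator among the $-i\hat H_\mu$, precisely because $J_{\alpha_1\alpha_2}\neq0$ under our assumption. For $m\ge3$ the idea is to chain the two-site operators of Lemma~\ref{Lemma ii}: from $-i\sigma_3^{(\alpha_k)}\sigma_3^{(\alpha_{k+1})}$, $k=1,\dots,m-1$, I would build, by one or two further commutators with the single-site operators $-i\hat H_{\alpha_k}=-i\sigma_1^{(\alpha_k)}$ and $-i\hat H_{n+\alpha_{k+1}}=-i\sigma_2^{(\alpha_{k+1})}$, the ``link'' operators $F_1=-i\sigma_3^{(\alpha_1)}\sigma_1^{(\alpha_2)}$, $F_k=-i\sigma_2^{(\alpha_k)}\sigma_1^{(\alpha_{k+1})}$ for $2\le k\le m-2$, and $F_{m-1}=-i\sigma_2^{(\alpha_{m-1})}\sigma_3^{(\alpha_m)}$; then the left-nested commutator $C_m=[\cdots[[F_1,F_2],F_3],\dots,F_{m-1}]$ should equal $-i\sigma_3^{(\alpha_1)}\sigma_3^{(\alpha_2)}\cdots\sigma_3^{(\alpha_m)}$ up to a nonzero scalar.

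The step needing real care — and the only genuine obstacle — is checking that this nested commutator collapses cleanly to a single Pauli string and does not spread into a linear combination. The verification is a support-bookkeeping induction: the partial bracket $G_j:=[\cdots[F_1,F_2],\dots,F_j]$ is supported exactly on the sites $\alpha_1,\dots,\alpha_{j+1}$, carrying $\sigma_3$ at $\alpha_1,\dots,\alpha_j$ and $\sigma_1$ at $\alpha_{j+1}$; because the $\alpha_i$ are strictly increasing, $G_j$ overlaps the next link $F_{j+1}$ in the single site $\alpha_{j+1}$ only, where the single-site relation $[\sigma_1,\sigma_2]\propto\sigma_3$ turns the $\sigma_1$ into $\sigma_3$, while the new factor (a $\sigma_1$, or a $\sigma_3$ at the last step) at $\alpha_{j+2}$ is contributed by $F_{j+1}$ and at every other site one of the two factors is $I$. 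Since two Pauli strings overlapping in exactly one site commute to one Pauli string, each $G_j$ — and in particular $C_m=G_{m-1}$ — is a single basis operator, and nothing degenerates to zero because $[\sigma_1,\sigma_2]=-i\sigma_3\neq0$ in the chosen normalization. Running over all admissible sequences $\alpha_1<\cdots<\alpha_m$ and all $m$ with $1\le m\le n$ then realizes every basis element of $\bigoplus_{m=1}^n(\mathcal{B}_3\cap\mathcal{B}^{(n-m)})=\mathcal{B}_3$ as a commutator among $\{-i\hat H_\mu\}_{\mu=0,1,\dots,2n}$, which completes the argument.
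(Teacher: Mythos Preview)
Your proposal is correct and follows the paper's proof essentially line for line: the same link operators $F_1=-i\sigma_3^{(\alpha_1)}\sigma_1^{(\alpha_2)}$, $F_k=-i\sigma_2^{(\alpha_k)}\sigma_1^{(\alpha_{k+1})}$, $F_{m-1}=-i\sigma_2^{(\alpha_{m-1})}\sigma_3^{(\alpha_m)}$ and the same left-nested bracket $C_m=[\cdots[[F_1,F_2],\dots],F_{m-1}]$ appear in the paper, with the cases $m=1,2$ handled identically. Your inductive bookkeeping on the partial brackets $G_j$ is a welcome addition, since the paper simply asserts the value of $C_m$; note that in the paper's normalization $[\sigma_1,\sigma_2]=i\sigma_3$ (not $-i\sigma_3$), so in fact $C_m=-i\sigma_3^{(\alpha_1)}\cdots\sigma_3^{(\alpha_m)}$ on the nose, with no extra scalar.
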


From Lemmas \ref{Lemma i} and \ref{Lemma iii}, 
it turns out that if $J_{\alpha\beta}\neq 0$ 
for all $\alpha$ and $\beta$ with 
$1\leq \alpha < \beta \leq n$, 
$\mathcal{B}$ is generated by taking commutators among 
$\{-i\hat{H}_\mu\}_{\mu=0,1,\cdots,2n}$. 
Hence, we obtain 

\begin{Prop}
\label{Proposition iv}
If $J_{\alpha\beta}\neq 0$ for all $\alpha$ and $\beta$ 
with $1\leq \alpha < \beta\leq n$, 
or if the spin graph associated with the drift Hamiltonian is complete, 
the Lie algebra $\mathfrak{su}(2^n)$ is generated from 
$\{-i\hat{H}_\mu\}_{\mu=0,1,\cdots,2n}$. 
\end{Prop}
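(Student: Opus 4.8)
The plan is to obtain Proposition \ref{Proposition iv} as an immediate consequence of Lemmas \ref{Lemma i}, \ref{Lemma ii} and \ref{Lemma iii}, which between them already carry out all of the combinatorial work. Write $\mathcal{L}$ for the Lie algebra generated by $\{-i\hat{H}_\mu\}_{\mu=0,1,\cdots,2n}$; by definition $\mathcal{L}$ is a linear subspace of $\mathfrak{gl}(2^n,\mathbb{C})$ closed under the commutator bracket. The target is the equality $\mathcal{L}=\mathcal{B}=\mathfrak{su}(2^n)$.

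First I would dispose of the inclusion $\mathcal{L}\subseteq\mathfrak{su}(2^n)$: each $\hat{H}_\mu$, $\mu=1,\dots,2n$, is a real combination of the traceless Hermitian operators $\sigma_1^{(\alpha)},\sigma_2^{(\alpha)}$, and $\hat{H}_0$ is a real combination of the traceless Hermitian operators $\sigma_3^{(\alpha)}\sigma_3^{(\beta)}$, so every $-i\hat{H}_\mu$ lies in $\mathfrak{su}(2^n)$ and hence so does the Lie algebra it generates. For the reverse inclusion I would invoke completeness of the spin graph, i.e. $J_{\alpha\beta}\neq 0$ for all $1\le\alpha<\beta\le n$. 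Lemma \ref{Lemma iii} then guarantees that every basis element of $\mathcal{B}_3$ is realised as an iterated commutator of the $-i\hat{H}_\mu$, so $\mathcal{B}_3\subseteq\mathcal{L}$; consequently $\{-i\hat{H}_\mu\}_{\mu=1,\cdots,2n}\cup\mathcal{B}_3\subseteq\mathcal{L}$, and since $\mathcal{L}$ is closed under brackets, Lemma \ref{Lemma i} yields that all basis operators of $\mathcal{B}$ lie in $\mathcal{L}$. Hence $\mathcal{B}\subseteq\mathcal{L}$, and combining the two inclusions gives $\mathcal{L}=\mathcal{B}=\mathfrak{su}(2^n)$.

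At this stage there is no real obstacle remaining: the genuine difficulties --- singling out each coupling term $-i\sigma_3^{(\alpha)}\sigma_3^{(\beta)}$ from $-i\hat{H}_0$ with the aid of the single-site controls (Lemma \ref{Lemma ii}), assembling these into the whole family $\mathcal{B}_3$ by chaining along sequences $\alpha_1<\cdots<\alpha_m$ (Lemma \ref{Lemma iii}), and then spreading from $\mathcal{B}_3$ to every tensor-product Pauli string by commuting with the controls (Lemma \ref{Lemma i}) --- have all been settled in the preceding lemmas. The one point I would be careful to state explicitly is that ``generated by commutators'' is to be read as ``generated as a Lie algebra'', so that nested brackets such as those appearing in the constructions of the $C_m$ and $F_k$ are permitted; with that understood, the proposition follows in a couple of lines.
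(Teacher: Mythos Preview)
Your argument is correct and follows essentially the same route as the paper: invoke Lemma~\ref{Lemma iii} to get $\mathcal{B}_3$ inside the generated Lie algebra, then Lemma~\ref{Lemma i} to span all of $\mathcal{B}=\mathfrak{su}(2^n)$. You are simply more explicit than the paper in spelling out the easy inclusion $\mathcal{L}\subseteq\mathfrak{su}(2^n)$ and in clarifying that ``generated by commutators'' means generated as a Lie algebra.
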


\subsection{Controllability with a connected spin graph}
In the following, we consider the case where some of 
$J_{\alpha\beta}$'s may vanish, but the spin graph is connected. 
Suppose that 
$J_{\alpha\beta},J_{\beta\gamma}\neq 0$, and $J_{\alpha\gamma}=0$.  
We may assume that $\alpha<\beta<\gamma$. 
By Lemma \ref{Lemma ii}, we can put the operators 
$-i\sigma_3^{(\alpha)}\sigma_3^{(\beta)},\, 
-i\sigma_3^{(\beta)}\sigma_3^{(\gamma)}$ 
in the form of commutators among 
$\{-i\hat{H}_\mu\}_{\mu=0,1,\cdots,2n}$. 
We then take commutators to obtain 
\begin{align}
  D 
  & := 
  [[ -i\sigma_3^{(\alpha)}\sigma_3^{(\beta)},-i\hat{H}_{\beta}],
  -i\sigma_3^{(\beta)}\sigma_3^{(\gamma)}] 
  = 
  -i\sigma_3^{(\alpha)}\sigma_1^{(\beta)}\sigma_3^{(\gamma)}. 
\end{align}
From $D, -i\sigma_3^{(\beta)}\sigma_3^{(\gamma)}, -i\hat{H}_{n+\beta}$, 
and $-i\hat{H}_\gamma$, we obtain 
\begin{align}
  D' 
  & := 
  4[[D,-i\hat{H}_\gamma],
  [-i\hat{H}_{n+\beta},-i\sigma_3^{(\beta)}\sigma_3^{(\gamma)}]] 
  = 
  -i\sigma_3^{(\alpha)}\sigma_1^{(\gamma)}.
\end{align}
Further calculation provides 
\begin{equation*}
  D'':= [D',-i\hat{H}_{n+\gamma}] 
  = 
  -i\sigma_3^{(\alpha)}\sigma_3^{(\gamma)}. 
\end{equation*}
Thus we have found that the operators 
$-i\sigma_3^{(\alpha)}\sigma_3^{(\gamma)}$ can be described as  
commutators taken among $-i\sigma_3^{(\alpha)}\sigma_3^{(\beta)}$, 
$-i\sigma_3^{(\beta)}\sigma_3^{(\gamma)}$, 
and $-i\hat{H}_\mu,\,\mu=1,\cdots,2n$. 
This implies that if there are interactions between particles 
$\alpha$ and $\beta$ and between particles $\beta$ and $\gamma$, 
an interaction between particles 
$\alpha$ and $\gamma$ is induced, 
though $J_{\alpha\gamma}=0$ at the beginning, 
by taking commutators among $\{-i\hat{H}_\mu\}_{\mu=0,1,\cdots,2n}$. 
We may interpret this fact as follows: 
In the process of making commutators among 
$\{-i\hat{H}_\mu\}_{\mu=0,1,\cdots,2n}$, 
two edges $(\alpha,\beta), (\beta,\gamma)$ of 
the spin graph give rise to 
a new edge $(\alpha,\gamma)$ which represents an interaction between 
two particles $\alpha$ and $\gamma$.

So far we have studied the case where there are two edges between  
the nodes $\alpha$ and $\gamma$. 
We may apply the same procedure to the case where there are 
$r$ edges between the nodes $\alpha$ and $\gamma$, 
where $r=2,\cdots,n-1$. 
Suppose we have, say, three edges 
$(\alpha,\beta_1),(\beta_1,\beta_2), (\beta_2,\gamma)$ 
with 
$J_{\alpha\beta_1},J_{\beta_1\beta_2},J_{\beta_2 \gamma}\neq 0$. 
The same procedure as above yields the edge $(\alpha,\beta_2)$ 
from $(\alpha,\beta_1)$ and $(\beta_1,\beta_2)$, and consequently 
$(\alpha,\gamma)$ from $(\alpha,\beta_2)$ and $(\beta_2,\gamma)$. 
Then, it turns out that 
if two particles $\alpha$ and $\gamma$ are linked with a sequence of 
two-particle interactions, the operator 
$-i\sigma_3^{(\alpha)}\sigma_3^{(\gamma)}$ 
describing an interaction between the particles $\alpha$ and $\gamma$ 
is induced by taking commutators among 
$\{-i\hat{H}_\mu\}_{\mu=0,1,\cdots,2n}$. 
Since the spin graph is connected, there is a sequence of 
edges between any pair of nodes $\alpha$ and $\gamma$, so that 
one can obtain the operator 
$-i\sigma_3^{(\alpha)}\sigma_3^{(\gamma)}$ expressed as 
a commutator among $\{-i\hat{H}_\mu\}_{\mu=0,1,\cdots,2n}$. 
Thus we have reached the same conclusion 
as in Lemma \ref{Lemma ii} with an 
assumption weaker than that in Lemma \ref{Lemma ii}. 

\begin{Lem}
\label{Lemma v} 
If the associated spin graph is connected, 
all the interaction operators 
$-i\sigma_3^{(\alpha)}\sigma_3^{(\beta)}$, 
$1\leq \alpha<\beta\leq n$, are generated from 
the drift and control Hamiltonians. 
\end{Lem}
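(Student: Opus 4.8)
The plan is to turn the ad hoc ``edge-fusion'' computation carried out just before the statement (the one producing $D$, $D'$, $D''$ from two adjacent edges) into a clean induction on the graph distance $d(\alpha,\beta)$ between two nodes $\alpha,\beta$ of the spin graph. Since the spin graph is connected, $d(\alpha,\beta)<\infty$ for every pair $\alpha\ne\beta$; hence it suffices to prove, by induction on $r\ge 1$, the claim $P(r)$: whenever $d(\alpha,\beta)=r$, the operator $-i\sigma_3^{(\alpha)}\sigma_3^{(\beta)}$ can be written as an iterated commutator among $\{-i\hat H_\mu\}_{\mu=0,1,\dots,2n}$. Ranging $r$ over all attained distances then yields the lemma for every $1\le\alpha<\beta\le n$.

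First I would settle the base case $r=1$: here $\alpha$ and $\beta$ are joined by an edge, i.e.\ $J_{\alpha\beta}\ne 0$, which is exactly Lemma~\ref{Lemma ii}. For the inductive step, assuming $P(1),\dots,P(r-1)$ with $r\ge 2$, I fix $\alpha,\beta$ with $d(\alpha,\beta)=r$, pick a shortest path between them, and let $\gamma$ be the node on that path adjacent to $\beta$. Then $d(\alpha,\gamma)=r-1$ and $J_{\gamma\beta}\ne 0$, so by the inductive hypothesis $-i\sigma_3^{(\alpha)}\sigma_3^{(\gamma)}$ is available as a commutator among $\{-i\hat H_\mu\}$, and by Lemma~\ref{Lemma ii} so is $-i\sigma_3^{(\gamma)}\sigma_3^{(\beta)}$. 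Now I apply the edge-fusion identity displayed in the text, but with the \emph{shared middle site} taken to be $\gamma$ rather than $\beta$: one brackets $-i\sigma_3^{(\alpha)}\sigma_3^{(\gamma)}$ with a control operator acting at site $\gamma$ to rotate the $\gamma$-factor, then with $-i\sigma_3^{(\gamma)}\sigma_3^{(\beta)}$, and finally clears the remaining single-site factor at $\gamma$ by one more bracket with a control operator, exactly as in the $D\to D'\to D''$ chain. The output is $-i\sigma_3^{(\alpha)}\sigma_3^{(\beta)}$, written as an iterated commutator among $\{-i\hat H_\mu\}_{\mu=0,1,\dots,2n}$, which establishes $P(r)$ and closes the induction.

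The only step that needs genuine care is the bookkeeping inside the inductive step: the $D,D',D''$ computation in the text was written for the particular ordering $\alpha<\beta<\gamma$ and particular control indices, whereas here the shared node $\gamma$ may sit in any position relative to $\alpha$ and $\beta$. What makes this routine is that each $-i\hat H_\mu$ with $\mu\ge 1$ is supported on a single site and, under bracketing, simply effects the rotation $\sigma_3\mapsto\sigma_2$ (for $\mu\le n$) or $\sigma_3\mapsto\sigma_1$ (for $\mu>n$) at that site while leaving all other tensor factors untouched; and brackets of two operators that agree, up to the $\sigma_1^2=\tfrac14 I$ factor, at one site cancel that site. So the three-step fusion goes through verbatim regardless of the labels, and the lemma is nothing but the weakening of Lemma~\ref{Lemma ii} obtained by replacing ``edge'' with ``path in a connected graph'', precisely as anticipated in the discussion preceding the statement.
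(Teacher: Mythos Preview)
Your proposal is correct and follows essentially the same approach as the paper: both arguments iterate the two-edge fusion computation $D\to D'\to D''$ along a path in the connected spin graph, with your version casting this explicitly as induction on the graph distance $d(\alpha,\beta)$ while the paper sketches the same iteration more informally. Your remark that the ordering assumption $\alpha<\beta<\gamma$ in the displayed computation is merely notational (since each control Hamiltonian $-i\hat H_\mu$ acts at a single site independent of any ordering) is accurate and fills in a point the paper leaves implicit.
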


Thus Prop. \ref{Proposition iv} is refined as follows: 

\begin{Prop}
\label{Proposition vi} 
If the spin graph associated with the drift 
Hamiltonian is connected, 
the Lie algebra $\mathfrak{su}(2^n)$ is generated from 
$\{-i\hat{H}_\mu\}_{\mu=0,1,\cdots,2n}$. Thus, the NMR system is 
controllable if the associated spin graph is connected. 
\end{Prop}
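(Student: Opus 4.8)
The plan is to obtain Proposition~\ref{Proposition vi} by reassembling the lemmas already proved, after observing that completeness of the spin graph was used only once, namely in Lemma~\ref{Lemma ii}, and that its conclusion has just been extended to connected graphs by Lemma~\ref{Lemma v}. Write $\mathcal{L}$ for the Lie subalgebra of $\mathfrak{su}(2^{n})$ generated by $\{-i\hat{H}_{\mu}\}_{\mu=0,1,\cdots,2n}$; the goal is the equality $\mathcal{L}=\mathfrak{su}(2^{n})=\mathcal{B}$.

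First I would invoke Lemma~\ref{Lemma v}: connectedness of the spin graph already forces every interaction operator $-i\sigma_{3}^{(\alpha)}\sigma_{3}^{(\beta)}$, $1\le\alpha<\beta\le n$, to lie in $\mathcal{L}$. But this is precisely the family of operators that Lemma~\ref{Lemma ii} produced under the stronger hypothesis $J_{\alpha\beta}\neq 0$ for all pairs, so the construction in the proof of Lemma~\ref{Lemma iii} --- forming the auxiliary brackets $F_{1},\cdots,F_{m-1}$ and then the iterated bracket $C_{m}$ out of the two-factor operators $-i\sigma_{3}^{(\alpha_{i})}\sigma_{3}^{(\alpha_{i+1})}$ together with the controls $-i\hat{H}_{\mu}$, $\mu=1,\cdots,2n$ --- carries over word for word. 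Running it over all admissible sequences $\alpha_{1}<\cdots<\alpha_{m}$ yields every basis element of $\mathcal{B}_{3}\cap\mathcal{B}^{(n-m)}$ for $m=1,\cdots,n$ inside $\mathcal{L}$, and since $\mathcal{B}_{3}=\sum_{m=1}^{n}(\mathcal{B}_{3}\cap\mathcal{B}^{(n-m)})$ we get $\mathcal{B}_{3}\subset\mathcal{L}$. Then Lemma~\ref{Lemma i} says that brackets among $\{-i\hat{H}_{\mu}\}_{\mu=1,\cdots,2n}\cup\mathcal{B}_{3}$ span all of $\mathcal{B}$; as this set sits inside the bracket-closed $\mathcal{L}$, we conclude $\mathcal{B}=\mathfrak{su}(2^{n})\subset\mathcal{L}$, hence equality. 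This is exactly Proposition~\ref{Proposition iv} with ``complete'' weakened to ``connected''. Finally, applying the controllability criterion recalled at the start of Appendix~B (and in Section~3) --- pure state controllability is equivalent to the generated Lie algebra being $\mathfrak{su}(2^{n})$ --- we conclude that the NMR system is controllable whenever its spin graph is connected.

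The genuinely substantive step is Lemma~\ref{Lemma v}, where one must show that a path $\alpha\!-\!\beta\!-\!\gamma$ of nonzero couplings induces the missing interaction $-i\sigma_{3}^{(\alpha)}\sigma_{3}^{(\gamma)}$ through a short chain of commutators, and then iterate this edge-contraction along an arbitrary path of the connected graph; once that is in place, Proposition~\ref{Proposition vi} itself has essentially no obstacle. The one point deserving care is verifying that the proof of Lemma~\ref{Lemma iii} used the hypothesis $J_{\alpha\beta}\neq 0$ \emph{only} through the availability of the operators $-i\sigma_{3}^{(\alpha)}\sigma_{3}^{(\beta)}$, and not through any finer combinatorial feature of the complete graph, so that Lemma~\ref{Lemma v} is a legitimate drop-in replacement for Lemma~\ref{Lemma ii}. (The converse implication --- controllable implies connected --- is the easy direction, since a disconnected graph makes $-i\hat{H}$ block-diagonal with respect to the tensor factorization induced by the connected components, so the generated Lie algebra is a proper subalgebra of $\mathfrak{su}(2^{n})$.)
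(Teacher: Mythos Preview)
Your proposal is correct and follows essentially the same approach as the paper: the paper states Proposition~\ref{Proposition vi} as the refinement of Proposition~\ref{Proposition iv} obtained by replacing Lemma~\ref{Lemma ii} with Lemma~\ref{Lemma v}, and then re-running the chain Lemma~\ref{Lemma iii} $\Rightarrow$ Lemma~\ref{Lemma i} exactly as you describe. Your explicit check that Lemma~\ref{Lemma iii} uses completeness only through the availability of the operators $-i\sigma_{3}^{(\alpha)}\sigma_{3}^{(\beta)}$ is the key observation, and it matches the paper's remark that Lemma~\ref{Lemma v} reaches ``the same conclusion as in Lemma~\ref{Lemma ii} with an assumption weaker.''
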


\subsection{A necessary and sufficient condition for controllability}
We now consider the case where the spin graph $S$ is disconnected. 
We assume that the graph $S$ is broken up into two disjoint subgraphs, 
$S_1$ and $S_2$, each of which is connected. 
Suppose that $S_1$ and $S_2$ have nodes $1$ to $r$, and 
$r+1$ to $n$, respectively. 
According to the decomposition, $S=S_1\cup S_2$, of the spin graph, 
the drift Hamiltonian is also decomposed into the sum of two terms,  
\begin{equation*}
  \hat{H}_0 
  = 
  \hat{H}'_0 + \hat{H}''_0, 
  \quad 
  \hat{H}'_0
  =
  \sum_{1\leq \alpha<\beta \leq r}
  J_{\alpha\beta}\sigma_3^{(\alpha)}\sigma_3^{(\beta)},\; 
  \hat{H}''_0
  =
  \sum_{r+1\leq \alpha<\beta \leq n}
  J_{\alpha\beta}\sigma_3^{(\alpha)}\sigma_3^{(\beta)}. 
\end{equation*}
Since $S_1$ is connected, in the same procedure as taken 
in proving Prop. \ref{Proposition vi}, 
the Lie algebra ${\cal L}_{1}$ generated from $-i\hat{H}'_0$ and 
$\{-i\hat{H}_\alpha,-i\hat{H}_{n+\alpha}\}_{\alpha=1,\cdots,r}$ 
proves to be 
${\cal L}_{1} = \mathfrak{su}(2^r)\otimes I^{\otimes(n-r)}$. 
In the same manner, the operators 
$-i\hat{H}''_0$ and 
$\{-i\hat{H}_{r+\alpha},-i\hat{H}_{n+r+\alpha}\}_{\alpha=1,\cdots,n-r}$ 
generate the Lie algebra 
${\cal L}_{2} = I^{\otimes r}\otimes \mathfrak{su}(2^{n-r})$.  
Since two operators each of which is 
in the respective Lie algebras 
${\cal L}_{1}$ and ${\cal L}_{2}$ 
commute, the Lie algebra generated by the whole operators 
$\{-i\hat{H}_\mu\}_{\mu=0,1,\cdots,2n}$ splits into the 
direct sum ${\cal L}_{1}\oplus {\cal L}_{2}$, 
which is a subalgebra 
of $\frak{su}(2^n)$ but not equal to the whole $\mathfrak{su}(2^n)$. 
We may generalize this fact to the case where the spin graph is 
broken up into more than two disjoint subgraphs, 
but the generalization is easy to perform. 
So far we have shown the following

\begin{Prop}
\label{Proposition vii}
If the spin graph associated with the drift Hamiltonian is 
disconnected, the NMR system \eqref{Schrodinger equation 0} is not controllable. 
\end{Prop}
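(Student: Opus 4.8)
The plan is to reduce to the connected case already established in Proposition~\ref{Proposition vi} and then to exhibit a proper Lie subalgebra of $\mathfrak{su}(2^{n})$ arising from a tensor-factor splitting. Assume the spin graph $S$ is disconnected; it suffices to treat two connected components, say $S_{1}$ on the nodes $1,\dots,r$ and $S_{2}$ on the nodes $r+1,\dots,n$, the case of $k\ge 2$ components being an immediate iteration. Because $J_{\alpha\beta}=0$ whenever $\alpha$ and $\beta$ lie in different components, the drift Hamiltonian splits as $\hat{H}_{0}=\hat{H}'_{0}+\hat{H}''_{0}$, where $\hat{H}'_{0}$ acts only on the first $r$ tensor slots and $\hat{H}''_{0}$ only on the last $n-r$; likewise the controls split, with $\{-i\hat{H}_{\alpha},-i\hat{H}_{n+\alpha}\}_{1\le\alpha\le r}$ acting on the first block and $\{-i\hat{H}_{\alpha},-i\hat{H}_{n+\alpha}\}_{r+1\le\alpha\le n}$ on the second.

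First I would note that running the argument of Proposition~\ref{Proposition vi} with only $-i\hat{H}'_{0}$ and the first block of controls never touches slots $r+1,\dots,n$, so the Lie algebra $\mathcal{L}_{1}$ it generates equals $\mathfrak{su}(2^{r})\otimes I^{\otimes(n-r)}$; symmetrically the second block generates $\mathcal{L}_{2}=I^{\otimes r}\otimes\mathfrak{su}(2^{n-r})$. Operators supported on disjoint tensor slots commute, so $[\mathcal{L}_{1},\mathcal{L}_{2}]=0$, and $\mathcal{L}_{1}\cap\mathcal{L}_{2}=\{0\}$ since an element lying in both would be simultaneously of the forms $\xi\otimes I^{\otimes(n-r)}$ and $I^{\otimes r}\otimes\eta$, forcing $\xi$ and $\eta$ to be traceless multiples of the identity, hence $0$. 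Thus $\mathcal{L}_{1}\oplus\mathcal{L}_{2}$ is a Lie subalgebra of $\mathfrak{su}(2^{n})$; as every generator lies in it (using $-i\hat{H}_{0}=-i\hat{H}'_{0}-i\hat{H}''_{0}$) while it contains both $\mathcal{L}_{1}$ and $\mathcal{L}_{2}$, the Lie algebra generated by $\{-i\hat{H}_{\mu}\}_{\mu=0,1,\dots,2n}$ is exactly $\mathcal{L}_{1}\oplus\mathcal{L}_{2}$.

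I would then finish by a dimension count: since $\dim\mathfrak{su}(2^{k})=2^{2k}-1$, one has $\dim(\mathcal{L}_{1}\oplus\mathcal{L}_{2})=(2^{2r}-1)+(2^{2(n-r)}-1)<2^{2n}-1=\dim\mathfrak{su}(2^{n})$ for every $1\le r\le n-1$, so the generated Lie algebra is a proper subalgebra; by the controllability criterion recalled at the beginning of this appendix (pure state controllability holds iff this Lie algebra is all of $\mathfrak{su}(2^{n})$), the NMR system is not controllable. Iterating over $k\ge 2$ components replaces $\mathcal{L}_{1}\oplus\mathcal{L}_{2}$ by $\bigoplus_{i=1}^{k}\mathcal{L}_{i}$ with $\mathcal{L}_{i}\cong\mathfrak{su}(2^{r_{i}})$ placed in the $i$-th block, of dimension $\sum_{i=1}^{k}(2^{2r_{i}}-1)<2^{2n}-1$. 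The one step deserving care is the claim that Proposition~\ref{Proposition vi}, restricted to the first block, produces \emph{exactly} $\mathfrak{su}(2^{r})\otimes I^{\otimes(n-r)}$ and nothing larger or smaller; this is checked by observing that every commutator in the proof of Proposition~\ref{Proposition vi} only rewrites factors in slots $\le r$ and leaves slots $>r$ as the identity, so that argument is a faithful copy of the $r$-qubit controllable case embedded in slots $1,\dots,r$. I expect this bookkeeping, though routine, to be the main obstacle.
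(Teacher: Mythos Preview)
Your proposal is correct and follows essentially the same route as the paper: split the drift and controls according to the two connected components, invoke Proposition~\ref{Proposition vi} on each block to get $\mathcal{L}_{1}=\mathfrak{su}(2^{r})\otimes I^{\otimes(n-r)}$ and $\mathcal{L}_{2}=I^{\otimes r}\otimes\mathfrak{su}(2^{n-r})$, observe that they commute, and conclude that the generated Lie algebra sits inside the proper subalgebra $\mathcal{L}_{1}\oplus\mathcal{L}_{2}$. Your version is slightly more explicit than the paper's (you spell out $\mathcal{L}_{1}\cap\mathcal{L}_{2}=\{0\}$ and the dimension inequality, and you flag that only \emph{containment} in $\mathcal{L}_{1}\oplus\mathcal{L}_{2}$ is needed), but the argument is the same.
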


From Props \ref{Proposition vi} and \ref{Proposition vii}, 
we obtain the following theorem;

\begin{Thm}
\label{Thm: controllability}
The NMR system \eqref{Schrodinger equation 0} with 
the drift and control Hamiltonians \eqref{Hamiltonians} is controllable, 
if and only if the spin graph associated 
with the drift Hamiltonian is connected.  
\end{Thm}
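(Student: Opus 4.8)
The plan is to reduce the statement to the Lie-algebraic controllability criterion and then establish the two implications separately. First I would invoke the criterion quoted from \cite{AD'A03}: the NMR system \eqref{Schrodinger equation 0} is pure state controllable if and only if the Lie algebra $\mathcal{L}$ generated by $\{-i\hat{H}_\mu\}_{\mu=0,1,\dots,2n}$ is all of $\mathfrak{su}(2^n)$. Thus the problem is reduced to computing $\mathcal{L}$ from the combinatorics of the spin graph $S$.

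For the implication ``connected $\Rightarrow$ controllable'' I would argue in three stages. (a) Decompose $\mathfrak{su}(2^n)=\bigoplus_{k}\mathcal{B}^{(k)}$ according to the number of identity tensor factors, and isolate the diagonal subspace $\mathcal{B}_3$ spanned by the monomials in $\{I,\sigma_3\}$; Lemma \ref{Lemma i} then shows that, once all of $\mathcal{B}_3$ is available, iterated brackets against the control operators $-i\hat{H}_\mu$ (which equal $-i\sigma_1^{(\mu)}$ or $-i\sigma_2^{(\mu)}$) convert $\sigma_3$-factors into $\sigma_1$- and $\sigma_2$-factors one site at a time, sweeping out every basis element of $\mathcal{B}$. (b) Show that for each edge of $S$, i.e.\ each pair with $J_{\alpha\beta}\ne 0$, the single interaction operator $-i\sigma_3^{(\alpha)}\sigma_3^{(\beta)}$ can be extracted from $-i\hat{H}_0$ by bracketing against suitable $-i\hat{H}_\mu$'s (Lemma \ref{Lemma ii}); then, chaining such operators along paths of $S$ and bracketing repeatedly produces every diagonal monomial $-i\sigma_3^{(\alpha_1)}\cdots\sigma_3^{(\alpha_m)}$, hence all of $\mathcal{B}_3$ (Lemmas \ref{Lemma iii} and \ref{Lemma v}). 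Connectedness of $S$ enters exactly here: a path between nodes $\alpha$ and $\gamma$ ``induces'' the operator $-i\sigma_3^{(\alpha)}\sigma_3^{(\gamma)}$ even when $J_{\alpha\gamma}=0$, so the whole of $\mathcal{B}_3$ is generated. Combining (a) and (b) gives $\mathcal{L}=\mathfrak{su}(2^n)$ (Propositions \ref{Proposition iv} and \ref{Proposition vi}).

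For the implication ``disconnected $\Rightarrow$ not controllable'' I would use the contrapositive together with the previous analysis. If $S$ decomposes into connected components supported on disjoint sets of nodes, the drift Hamiltonian splits into a sum of commuting pieces supported on disjoint tensor factors, and the controls split along the same components. Applying the ``if'' argument to each component, the generated Lie algebra is a direct sum of copies of $\mathfrak{su}(2^{n_j})$ acting on disjoint blocks, tensored with identities elsewhere; since this has strictly smaller dimension than $\mathfrak{su}(2^n)$ (and is not simple), it is a proper subalgebra, so the criterion fails (Proposition \ref{Proposition vii}). The two implications together prove the theorem.

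The main obstacle I expect is stage (b) of the ``if'' direction: peeling off an individual interaction term $-i\sigma_3^{(\alpha)}\sigma_3^{(\beta)}$ from the single operator $-i\hat{H}_0=\sum_{\alpha<\beta}J_{\alpha\beta}\,\sigma_3^{(\alpha)}\sigma_3^{(\beta)}$, and then assembling arbitrary diagonal monomials from these terms, demands careful inductive bookkeeping of which iterated commutators with the control operators yield which tensor monomials (how a $\sigma_3$ at a site becomes $\sigma_2$ or $\sigma_1$, how two $\sigma_2$'s at linked sites collapse to restore a $\sigma_3\sigma_3$ factor, and so on). The disconnected case is by comparison routine once the connected case is established, requiring only the direct-sum structure of commuting subalgebras on disjoint factors.
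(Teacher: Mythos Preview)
Your proposal is correct and follows essentially the same route as the paper's proof in Appendix~B: reduction to the Lie-algebraic criterion, the decomposition $\mathcal{B}=\bigoplus_k\mathcal{B}^{(k)}$ together with the diagonal subspace $\mathcal{B}_3$, extraction of individual $-i\sigma_3^{(\alpha)}\sigma_3^{(\beta)}$ terms from edges and their propagation along paths of the spin graph, and the direct-sum argument for the disconnected case. Your identification of stage~(b) as the main technical hurdle also matches where the paper concentrates its explicit commutator computations.
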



\begin{thebibliography}{99}
\bibitem{AD'A02} 
F. Albertini and D. D'Alessandro, 
The Lie algebra structure and nonlinear controllability of 
spin systems, 
Linear Algebra Appl., {\bf 350}, 213-235 (2002). 

\bibitem{AD'A03}
F. Albertini and D. D'Alessandro, 
Notions of controllability for quantum mechanical systems, 
IEEE Trans. Automat. Control, {\bf 48}, no. 8, 1399-1403 (2003).

\bibitem{Alt02}
C. Altafini, 
Controllability of quantum mechanical systems by root space 
decomposition of $\mathfrak{su}(N)$, 
J. Math. Phys., {\bf 43}, 2051-2132 
(2002). 

\bibitem{BLT03}
E. Briand, J-G. Luque, and J-Y. Thin, 
A complete set of covariants of the four qubit system, 
J. Phys. A: Math. Gen., {\bf 36}, 9915-9927 (2003). 

\bibitem{DA00}
D. D'Alessandro, 
Topological properties of reachable sets and the control of 
quantum bits, 
Systems \& Control Lett., {\bf 41}, 213-221 (2000). 

\bibitem{D'A01}
D. D'Alessandro, 
Small time controllability of systems on compact Lie groups 
and spin angular momentum, 
J. Math. Phys., {\bf 42}, 4488-4496 (2001). 

\bibitem{DAR06}
D. D'Alessandro and R. Romano, 
Decompositions of unitary evolutions and entanglement dynamics of 
bipartite quantum systems, 
J. Math. Phys., {\bf 47}, 082109 (2006). 

\bibitem{DDAS08}
M. Da\v{g}li, D.  D'Alessandro and J. D. H. Smith, 
A general framework for recursive decompositions of unitary quantum evolutions, 
J. Phys. A: Math. Theor., {\bf 41}, 155302 (2008)

\bibitem{Ema04}
C. Emary, 
A bipartite class of entanglement monotones for $N$-qubit pure states, 
J. Phys. A: Math. Gen., {\bf 37}, 8293-8302 (2004).

\bibitem{GRB}
M. Grassel, M. R{\o}tteler, and T. Beth, 
Computing local invariants of quantum bit system, 
Phys. Rev. {\bf A}, 58, 1833-1839 (1998). 

\bibitem{Iw07-1}
T. Iwai, 
The geometry of concurrence as a measure of entanglement, 
J. Phys. A: Math. Theor., {\bf 40}, 1361 (2007). 

\bibitem{Iw07-2}
T. Iwai, The geometry of multi-qubit entanglement, 
J. Phys. A: Math. Theor., {\bf 40}, 12161 (2007). 

\bibitem{IHM08}
T. Iwai, N. Hayashi and K. Mizobe, 
The geometry of entanglement and Grover's algorithm, 
J. Phys. A: Math. Theor., {\bf 41}, 105202 (2008). 

\bibitem{JS72}
V. Jurdjevic and H. J. Sussmann, 
Control systems on Lie groups, 
J. Diff. Eq., {\bf 12}, 313-329 (1972). 

\bibitem{KBG01}
N. Khaneja, R. Brockett, and S.J. Glaser, 
Time optimal control in spin systems, 
Phys. Rev. A, {\bf 63}, 032308 (2001). 

\bibitem{KG01}
N. Khaneja and S. J. Glaser, 
Cartan decomposition of $\SU{2^{n}}$ and control of spin systems, 
Chem. Phys., {\bf 267}, 11-23 (2001). 

\bibitem{KGB02}
N. Khaneja, S. Glaser, and R. Brockett, 
Sub-Riemannian geometry and time optimal control of three spin systems: 
Quantum gates and coherence transfer, 
Phys. Rev. A, {\bf 65}, 032301 (2002). 

\bibitem{Le04}
P. L\'evay, 
The geometry of entanglement: metrics, connections and the geometric phase, 
J. Phys. A: Math. Gen., {\bf 37}, 1821-1841 (2004). 

\bibitem{Lev05}
P. L\'evay, 
On the geometry of a class of $N$-qubit entanglement monotones, 
J. Phys. A: Math. Gen., {\bf 38}, 9075-9085 (2005). 

\bibitem{Le05}
P. L\'evay, 
Geometry of three-qubit entanglement, 
Phys. Rev. A, {\bf 71}, 012334 (2005). 

\bibitem{Le06}
P. L\'evay, 
On the geometry of four-qubit invariants, 
J. Phys. A: Math. Gen., {\bf 39}, 9533-9545 (2006). 

\bibitem{MD01}
R. Mosseri and R. Dandoloff, 
Geometry of entangled states, Bloch spheres and Hopf fibrations, 
J. Phys. A: Math. Gen., {\bf 34}, 102431-10252 (2001). 

\bibitem{RSD95}
V. Ramakrishna, M.V. Salapaka, M. Dahleh, H. Rabviz, and A. Peirce, 
Controllability of molecular systems, 
Phys. Rev. A, {\bf 51}, 960-966 (1995).

\bibitem{Rom07}
R. Romano, 
Entanglement magnification induced by local manipulations, 
Phys. Rev. A, {\bf 76}, 042315 (2007). 

\bibitem{RD'A06}
R. Romano and D. D'Alessandro, 
Incoherent control and entanglement for two-dimensional 
coupled systems, 
Phys. Rev. A, {\bf 73}, 022323 (2006). 

\bibitem{Sud}
A. Sudbery, 
On local invariants of pure three-qubit states, 
J. Phys. A: Math. Gen., {\bf 34}, 643-652 (2001). 

\bibitem{ZVSW03}
J. Zhang, J. Vala, S. Sastry, and K.B. Whaley, 
Geometric theory of non-local two-qubit operations, 
Phys. Rev. A, {\bf 67}, 042313 (2003). 

\end{thebibliography}
\end{document}